\documentclass[12pt]{article}

\usepackage[table]{xcolor}

\usepackage[a4paper, left=2.5cm, right=2.5cm, top=2.5cm, bottom=3cm]{geometry}
\usepackage{parskip}
\setlength\parindent{0em}
\usepackage[utf8]{inputenc}

\emergencystretch=0.5em %reduziert überstehende Zeilenenden

\usepackage[]{placeins} % float barrier 

\newcommand\ind{\protect\mathpalette{\protect\independenT}{\perp}} %this defines the independence mark
\def\independenT#1#2{\mathrel{\rlap{$#1#2$}\mkern4mu{#1#2}}}
\newcommand{\nind}{\not\ind}

\usepackage{tikz}
\usetikzlibrary{arrows}

\usepackage{amssymb}
\usepackage{amsthm}
\usepackage{amsmath}
\usepackage{blkarray} % for matrices with column names
\usepackage{bm}

\usepackage{booktabs}
\usepackage{longtable}
\usepackage{caption}

\usepackage[authoryear]{natbib}
\bibliographystyle{apalike}

\usepackage[hidelinks]{hyperref}
\usepackage[numbered]{bookmark}

\newtheoremstyle{newline}
{\parskip}%Space above
{\topsep}%Space below
{\itshape}%Body font
{}%Indent amount
{\bfseries}% Theorem head font
{}%Punctuation after theorem head
{\newline}%Space after theorem head
{}% Theorem head specification
\theoremstyle{newline}

\newtheorem{theorem}{Theorem}
\newtheorem{proposition}[theorem]{Proposition}
\newtheorem{lemma}[theorem]{Lemma}
\newtheorem{example}[theorem]{Illustration}
\newtheorem{definition}[theorem]{Definition}

\begin{document}
	
\begin{center}
~

{\LARGE \bfseries Multiple imputation and test-wise deletion}

{\LARGE \bfseries for causal discovery with incomplete}

{\LARGE \bfseries cohort data}

~
\vspace{0.5cm}

\today	
\end{center}

\vspace{1.5cm}

\textbf{Janine Witte}\textsuperscript{1,2}, \textbf{Ronja Foraita}\textsuperscript{1}, \textbf{Vanessa Didelez}\textsuperscript{1,2}\\

\textsuperscript{1} Leibniz Institute for Prevention Research and Epidemiology---BIPS\\
\textsuperscript{2} University of Bremen\\

\vspace{2cm}

{\large \bfseries ABSTRACT}\smallskip

Causal discovery algorithms estimate causal graphs from observational data. This can provide a valuable complement to analyses focussing on the causal relation between individual treatment-outcome pairs. Constraint-based causal discovery algorithms rely on conditional independence testing when building the graph. Until recently, these algorithms have been unable to handle missing values. In this paper, we investigate two alternative solutions: Test-wise deletion and multiple imputation. We establish necessary and sufficient conditions for the recoverability of causal structures under test-wise deletion, and argue that multiple imputation is more challenging in the context of causal discovery than for estimation. We conduct an extensive comparison by simulating from benchmark causal graphs: As one might expect, we find that test-wise deletion and multiple imputation both clearly outperform list-wise deletion and single imputation. Crucially, our results further suggest that multiple imputation is especially useful in settings with a small number of either Gaussian or discrete variables, but when the dataset contains a mix of both neither method is uniformly best. The methods we compare include random forest imputation and a hybrid procedure combining test-wise deletion and multiple imputation. An application to data from the IDEFICS cohort study on diet- and lifestyle-related diseases in European children serves as an illustrating example.
% 203/250 words
\vfill

\textbf{Keywords:} causal search, causal inference, MICE, missing values, PC-algorithm, structure learning
% up to six keywords are allowed

%\newpage
%\tableofcontents

\newpage
\section{Introduction}

Causal graphs have become very popular in epidemiology and other disciplines as a means to represent the causal structure among random variables \citep{GreenlandPearlRobins1999, Tennantetal2021, MorganWinship2014, Cunningham2021}. A causal graph drawn based on background knowledge helps communicating causal assumptions, and can guide variable selection when estimating a causal effect \citep{Didelez2018}. In contrast, the aim of \textit{causal discovery} is to infer a plausible graph or set of graphs from data when the causal structure is not known a priori. The estimated graphs can be used to support or challenge existing theories, to generate new hypotheses, or to estimate possible causal effects consistent with the data \citep{MaathuisKalischBuhlmann2009}. Since its introduction in the 1980s, causal discovery has been applied in a variety of fields including epidemiology \citep{Moffaetal2017}, medical imaging \citep{Rayetal2015}, genome-wide association studies \citep{Alekseyenkoetal2011}, education research \citep{RauScheines2012}, stock market research \citep{BesslerYang2003}, linguistics \citep{RobertsWinters2013} and climate research \citep{EbertUphoffDeng2012}.
% only Alekseyenko et al. mentioned missing data; they deleted the affected data columns

Popular causal discovery methods are constraint-based algorithms, which search for conditional independencies between the variables and reconstruct the causal structure so as to satisfy the constraints imposed by these independencies. A main advantage of the constraint-based approach is its flexibility. As the algorithms mainly rely on conditional independence testing, they can be applied to any type of data (continuous, categorical, ordinal, mixed etc.), as long as suitable tests are available. Moreover, constraint-based algorithms can in principle be applied even in the presence of latent variables \citep{SpirtesGlymourScheines2000, Zhang2008}.

Most software implementations of constraint-based causal discovery require fully observed data as an input. Simple ways of dealing with incomplete data lead to unsatisfactory results: Under list-wise deletion, also called complete-case analysis, all incomplete records are deleted, which can severely reduce the sample size and induce selection bias. Single imputation usually leads to underestimation of standard errors. Recently, two promising new strategies have been suggested for constraint-based causal discovery with missing values: (i) test-wise deletion \citep{StroblVisweswaranSpirtes2018, Tuetal2019, Tuetal2020}, where each conditional independence test is performed using the subset of records containing complete data for all variables involved in that particular test, and (ii) multiple imputation for Gaussian data \citep{Foraitaetal2020}.

In this paper, we formally investigate, generalise and compare test-wise deletion and multiple imputation in the context of causal discovery. Building on \cite{Tuetal2019}, we establish necessary and sufficient conditions for the recoverability of causal graphs under test-wise deletion. Further, we extend the multiple imputation approach by \cite{Foraitaetal2020} to discrete and mixed variables, characterise situations in which multiple imputation is expected to outperform test-wise deletion, and discuss why selecting the imputation model is challenging in causal discovery. The performance of list-wise deletion, test-wise deletion, single imputation and multiple imputation is compared on simulated and real data. Our findings are not only useful for causal discovery; they also provide insights into the general problem of conditional independence testing with missing values, e.g.\ necessary and sufficient conditions for identification of (in)dependencies.

\subsection{Motivating example: the IDEFICS study}
\label{sec:motivation}

Our work was motivated by IDEFICS (Identification and prevention of dietary and lifestyle-induced health effects in children and infants study), a prospective cohort study including 16\,229 children from eight European countries. The children were first examined in 2007/2008, and a follow-up examination took place two years later. The cohort was later extended by the I.Family study \citep{Ahrensetal2017}.

Designed to identify factors relating to childhood obesity and other non-communicable health conditions, the IDEFICS study included measurements on diet, lifestyle, living environment, socio-economic background and mental and physical health. Even though these factors are known to interact in a complex manner \citep{Leeetal2017, VandenbroeckGoossensClemens2017}, analyses of the IDEFICS data often focus on individual exposures and/or individual outcomes (e.g.\ \citealp{Bornhorstetal2016, Hebestreitetal2016, Pohlabelnetal2017}). A causal discovery analysis would therefore be a valuable addition to the analyses conducted so far.

\begin{figure}[h]
	\begin{center}
		\includegraphics[trim=0 2mm 0 6mm, clip, width=\textwidth]{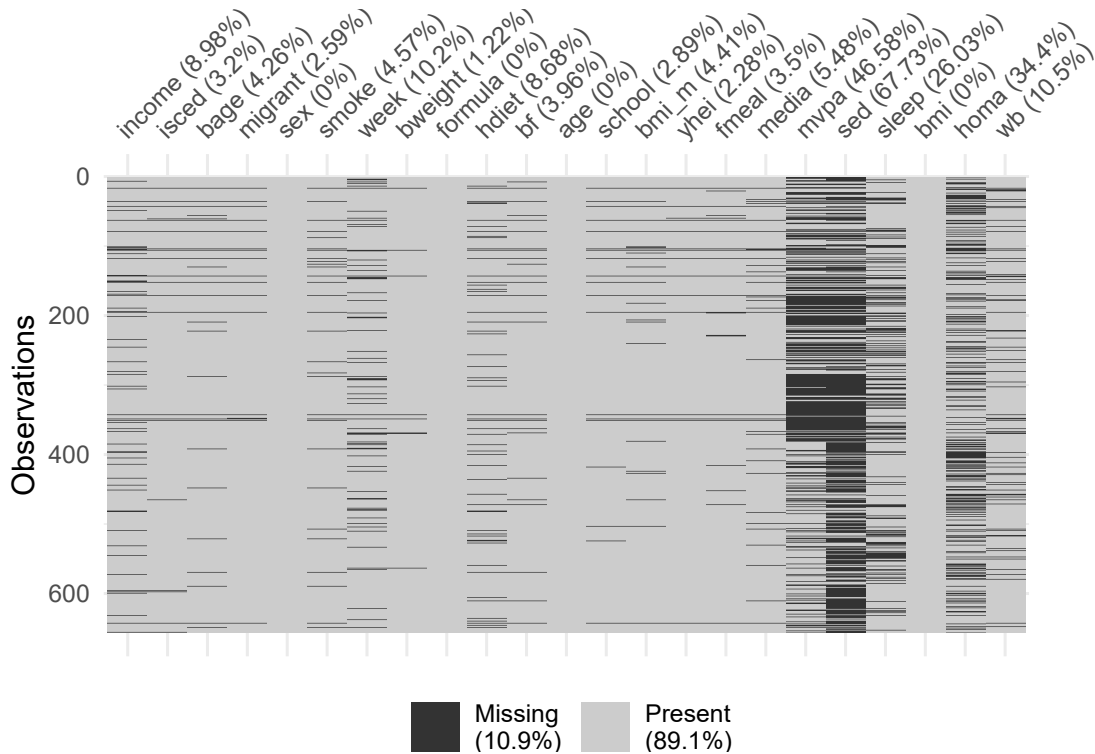}
		\caption{Missingness pattern in selected IDEFICS variables. The numbers in parentheses indicate the missingness percentage per variable.}
		\label{fig:vismiss}
	\end{center}
\end{figure}

Like most observational datasets, the IDEFICS data contain missing values. Figure~\ref{fig:vismiss} visualises the missingness pattern in a subsample of the IDEFICS data containing 657 children from Germany. The choice of variables roughly follows \cite{Foraitaetal2021}, who performed causal discovery on a larger subsample of the IDEFICS and I.Family data including more children and time points. See Table~\ref{tab:variables} for details on the variables used in the present paper. The overall proportion of missing data points is 10.9\,\%. Only 78 data rows (11.9\,\%) are completely observed, hence list-wise deletion would reduce the sample size by almost 90\,\%. It is therefore clear that a more efficient method for dealing with the missing values is needed.

\subsection{Outline}

The paper is organised as follows: We start with a brief review of causal graphs and constraint-based causal discovery in Section~\ref{sec:causaldiscovery}. In Section~\ref{sec:mechanisms}, we contrast Rubin's classification of missingness with the newer concept of the missingness graph. Section~\ref{sec:missing} contains the theoretical results on test-wise deletion and multiple imputation, and a comparison of their performance on data simulated using simple graphical structures. A comprehensive simulation study for benchmark settings is described in Section~\ref{sec:simulation}. Section~\ref{sec:realdata} contains an application to the IDEFICS data. We conclude with a discussion in Section~\ref{sec:conclusions}. All new methods are implemented in the \texttt{R} package \texttt{micd} available on GitHub (\url{www.github.com/bips-hb/micd}).
\vspace{3cm}

\FloatBarrier

\begin{table}[h]
	\caption{Baseline variables of IDEFICS. The data were log-transformed as indicated in order to reduce skewness of the marginal distributions.}
	\label{tab:variables}
	\begin{tabular}{lp{13.5cm}}
		\toprule
		income & Household income (three categories) \\
		isced & Parent's education (three categories) \\
		bage & Mother's age in years when the child was born (continuous) \\
		migrant & Migration status of child (binary) \\
		sex & Sex of the child (binary) \\
		smoke & Mother smoked during pregnancy (binary) \\
		week & Completed weeks of pregnancy (continuous) \\
		bweight & Birthweight in g (continuous) \\
		formula & Child received formula milk (binary) \\
		hdiet & Months until child was integrated into household diet (continuous, log-transformed)\\
		bf & Total duration of breastfeeding in months (continuous, log-transformed) \\
		age & Age in years of the child upon inclusion in the study (continuous) \\
		school & Child visits kindergarten or school (three categories) \\
		bmi\_m & Mother's BMI (continuous) \\
		yhei & Child's youth healthy eating score (continuous) \\
		fmeal & Child eats breakfast at home 7 days a week (binary) \\
		media & Child's audiovisual media consumption in hours/day (continuous)\\
		mvpa & Child's physical activity in hours/week (continuous, log-transformed)\\
		sed & Child's sedentary behaviour in hours/week (continuous)\\
		sleep & Child's sleep duration in hours (continuous)\\
		bmi & Child's BMI z-score (continuous) \\
		homa & Child's HOMA insulin resistance index (continuous)\\
		wb & Child's well-being score (continuous)\\
		\bottomrule
	\end{tabular}
\end{table}

\FloatBarrier

\section{Background on causal discovery}
\label{sec:causaldiscovery}

In this section, we review causal discovery with complete data.

\subsection{(Causal) graphs}

We start by defining the required graphical terminology.

\textbf{Nodes, edges and cycles.} A \textit{graph} consists of a set of nodes \(\mathbf{V}\) and a set of edges \(\mathbf{E}\subseteq\mathbf{V}\times\mathbf{V}\). Here, graphs have at most one edge between a given pair of nodes, and edges are either directed (\(\rightarrow\)) or undirected (\(-\)). An edge from a node to itself is not allowed. Two nodes connected by an edge are \textit{adjacent}. If \(V_i\rightarrow V_j\), then \(V_i\) is a \textit{parent} of \(V_j\) and \(V_j\) is a \textit{child} of \(V_i\). If \(V_i - V_j\), then \(V_i\) and \(V_j\) are \textit{neighbours}. A sequence of nodes \((V_1,\dots,V_P)\) with \(V_1=V_P\) such that for \(1\le i<P\), there is a directed edge \(V_i\rightarrow V_{i+1}\), is called a \textit{directed cycle}. A \textit{directed acyclic graph} (\textit{DAG}) is a graph with only directed edges and without directed cycles. The \textit{skeleton} of a DAG \(\mathcal{D}\) has the same nodes and adjacencies as \(\mathcal{D}\), but only undirected edges.

\textbf{Paths.} A sequence of distinct nodes \((V_1,\dots,V_P)\) such that for \(1\le i<P\), \(V_i\) and \(V_{i+1}\) are adjacent, is called a \textit{path} between \(V_1\) and \(V_P\). If in addition for \(1\le i<P\), \(V_i\rightarrow V_{i+1}\), then the path is \textit{directed} from \(V_1\) to \(V_P\). A node \(V_i\) is a \textit{descendant} of a node \(V_j\) if either \(V_i=V_j\) or there is a directed path from \(V_j\) to \(V_i\).

\textbf{Colliders and d-separation.} Consider a path \(p=(V_1,\dots,V_P)\) in a DAG \(\mathcal{D}\) with node set \(\mathbf{V}\). For \(1< i<P\), the node \(V_i\) is a \textit{collider} on \(p\) if \(V_{i-1}\rightarrow V_i\leftarrow V_{i+1}\); otherwise, \(V_i\) is a \textit{non-collider} on \(p\). The path \(p\) is \textit{open} given a set of nodes \(\mathbf{Z}\subseteq\mathbf{V}\) if (i) no non-collider on \(p\) is in \(\mathbf{Z}\) and (ii) every collider on \(p\) has a descendant in \(\mathbf{Z}\). Otherwise, \(p\) is \textit{blocked} given \(\mathbf{Z}\). For disjoint sets of nodes \(\mathbf{X},\mathbf{Y},\mathbf{Z}\subset\mathbf{V}\), \(\mathbf{X}\) and \(\mathbf{Y}\) are \textit{d-separated} by \(\mathbf{Z}\) in \(\mathcal{D}\) if every path between a node \(X\in\mathbf{X}\) and a node \(Y\in\mathbf{Y}\) is blocked given \(\mathbf{Z}\). This is denoted as \(\mathbf{X}\perp_\mathcal{D}\mathbf{Y}\mid\mathbf{Z}\).

Consider a set of random variables \(\mathbf{V}=\{V_1,\dots,V_K\}\), which can be continuous or discrete or a mix thereof. We assume that the causal structure among the variables in \(\mathbf{V}\) can be represented by a \textit{causal DAG} \(\mathcal{D}\) with node set \(\mathbf{V}\). In particular, we assume that the joint density \(f(\mathbf{v})=f(v_1,\dots,v_K)\) of \(\mathbf{V}\) is \textit{Markov} and \textit{faithful} to \(\mathcal{D}\). The Markov assumption requires that \(f(\mathbf{v})\) factorises as \(f(\mathbf{v})=\prod_{k=1}^K f(v_k\mid \mathrm{pa}(V_k,\mathcal{D}))\), where \(\mathrm{pa}(V_k,\mathcal{D})\) denotes the set of parents of the node \(V_k\) in \(\mathcal{D}\). Under this assumption, every d-separation \(\mathbf{X}\perp_\mathcal{D}\mathbf{Y}\mid\mathbf{Z}\) in the graph corresponds to a conditional independence \(\mathbf{X}\ind\mathbf{Y}\mid\mathbf{Z}\) in the distribution, where the latter is read as `\(\mathbf{X}\) and \(\mathbf{Y}\) are conditionally independent given \(\mathbf{Z}\)'. The faithfulness assumption requires the reverse to be true as well, i.e.\ every conditional independence in the distribution corresponds to a d-separation in the graph.

In order to give the DAG \(\mathcal{D}\) a causal interpretation, we additionally assume that if we intervened in the physical system underlying the random variables and fixed the value of a variable \(V_j\) in \(\mathcal{D}\) to \(v_j\), then the resulting distribution of the remaining variables would still factorise as \(f(v_1,\dots,v_{j-1},v_{j+1},\dots,v_K)=\prod_{k\in\{1,\dots,K\}\setminus j} f(v_k\mid \mathrm{pa}(V_k,\mathcal{D}))\). This is plausible only if there are no latent variables, i.e.\ variables not in the graph, representing common causes of two or more variables in the graph. Hence, we assume the absence of such variables. This assumption in particular is known as \textit{causal sufficiency}.

\subsection{Causal discovery}

The core idea of constraint-based causal discovery is to search for conditional independencies in the data, and use them to reconstruct the graph. However, as several DAGs can imply the same set of d-separations and hence conditional independencies, it is not possible in general to infer a single DAG from observational data alone, even if all above assumptions hold and the sample is infinitely large. The set of all DAGs implying a given set of d-separations is called a \textit{(Markov) equivalence class} and can uniquely be represented by a so-called \textit{completed partially directed acyclic graph} (\textit{CPDAG}) with directed and undirected edges. An undirected edge in a CPDAG means that both orientations occur within the equivalence class. Without further background knowledge or parametric assumptions, constraint-based causal discovery can at best recover the true CPDAG.

In this paper, we consider the most popular constraint-based causal discovery algorithm, which is the PC-algorithm\footnote{PC was named after its inventors, Peter Spirtes and Clark Glymour.} \citep{SpirtesGlymourScheines2000}. PC starts with a fully connected undirected graph and proceeds in three steps. First, a series of conditional independence tests is performed for each pair of variables \((X,Y)\). If \(X\) and \(Y\) are found to be conditionally independent for some conditioning set, the edge between them is deleted. In order to keep the number of performed tests small, the conditioning sets are always chosen from among the nodes adjacent to \(X\) or the nodes adjacent to \(Y\) in the current graph. The resulting undirected graph is the estimated skeleton. Second, PC searches for triples of variables \((X,Y,Z)\) such that (i) the estimated skeleton contains a path \(X-Y-Z\), (ii) \(X\) and \(Z\) are not adjacent in the estimated skeleton, and (iii) \(X\) and \(Z\) are conditionally independent given a set (or several sets, see \citealp{ColomboMaathuis2014}) of variables not containing \(Y\). The path is then oriented as \(X\rightarrow Y\leftarrow Z\). Third, additional edges are oriented according to logical rules \citep{Meek1995}. It can be shown that PC recovers the true CPDAG if the above assumptions of faithfulness and causal sufficiency hold and correct conditional (in)dependence information is provided \citep{SpirtesGlymourScheines2000}. If background knowledge is available, e.g.\ in the form of a partial node ordering, the output of PC can be a graph with more directed edges than the CPDAG \citep{Meek1995}.

\subsection{Conditional independence testing}

Conditional independence tests commonly used for the PC-algorithm are Fisher's \(z\)-test for continuous data and the \(G^2\)-test for categorical data. Briefly, Fisher's \(z\)-test tests for a zero conditional correlation, assuming that the variables in the test follow a multivariate Gaussian distribution. The \(G^2\)-test is a non-parametric conditional independence test for contingency tables. It can also be viewed as a likelihood-ratio test under a saturated multinomial model. If a dataset contains both continuous and categorical data, common strategies are to either discretise the continuous variables, or to treat the categorical variables as continuous. For the case that the variables in the test jointly follow a Conditional Gaussian (CG) distribution \citep{LauritzenWermuth1989}, \cite{AndrewsRamseyCooper2018} described a likelihood-ratio test, which we call the `CG-test'. More details on Fisher's \(z\)-test, the \(G^2\)-test and the CG-test are given in Appendix~\ref{app:condindtests}.

The significance level \(\alpha\) for the conditional independence tests performed within the PC-algorithm has the role of a tuning parameter, where a smaller value leads to a sparser~graph.

\section{Missingness mechanisms and missingness graphs}
\label{sec:mechanisms}

Assume that a subset \(\mathbf{V}^*\subseteq\mathbf{V}\) of the variables may contain missing values. For each \(V\in\mathbf{V}^*\), we define a response indicator \(R_V\) that is 1 if \(V\) is observed, and 0 if \(V\) is missing. The response indicators are themselves binary random variables. We denote as \(\mathbf{R}(\mathbf{V})=\{R_V:V\in\mathbf{V}^*\}\) the set of all variable-wise response indicators. Further, for a subset \(\mathbf{A}\subseteq\mathbf{V}\), we define \(R^\mathbf{A}\) to be 1 if all variables in \(\mathbf{A}\) are observed, and 0 otherwise.

In line with the literature, we assume that the missing values are not known, but exist. We refer to the distribution of the variables had all values been measured as the \textit{full-data distribution}.

Next, we discuss two ways of describing the relation between the substantive variables \(\mathbf{V}\) and the response indicators \(\mathbf{R}(\mathbf{V})\), i.e.\ the \textit{missingness mechanism}. The traditional classification according to \cite{Rubin1976} (see Section~\ref{sec:Rubin}) is relevant for multiple imputation, which requires the data to be missing at random. In more recent work (e.g.\ \citealp{MohanPearl2021}), assumptions about the missingness mechanism are encoded in a causal graph over \(\mathbf{V}\cup\mathbf{R}(\mathbf{V})\) (see Section~\ref{sec:Mohan}). Due to its graphical nature, this alternative framework combines well with the concept of causal discovery. In particular, it can be used to assess the identifiability of conditional (in)dependencies under test-wise deletion, see \cite{Tuetal2019} and Section~\ref{sec:twd} below.

\subsection{Rubin's classification of missingness}
\label{sec:Rubin}

Three classes of missingness mechanisms are often distinguished in the literature \citep{Rubin1976}: Values are said to be \textit{missing completely at random} (MCAR) %\citep{MariniOlsenRubin1980}
if \(f(\mathbf{r}\mid \mathbf{v})=f(\mathbf{r})\), i.e.\ missingness is independent of the substantive variables. Values are said to be \textit{missing at random} (MAR) if, for each individual \(i\) in the dataset, \(f(\mathbf{r}_i\mid \mathbf{v}_i)=f(\mathbf{r}_i\mid \mathbf{v}^O_i)\), where \(\mathbf{V}^O_i\) is the set of variables that is observed for individual \(i\). MAR thus expresses that for each individual, missingness may be associated with the observed variables, but is conditionally independent of the unobserved variables. If values are not MCAR or MAR, they are said to be \textit{missing not at random} (MNAR). Whether values in a given dataset are MAR cannot be determined empirically. Note that the conditioning set \(\mathbf{V}^O_i\) in the MAR equation may contain different variables for each individual, hence the equation corresponds to conditional independence between `events', not between random variables \citep{Seamanetal2013, MealliRubin2015, DorettiGenelettiStanghellini2018}. This can make the MAR assumption difficult to justify in practice. As an example, consider two incompletely observed variables \textit{BMI} and \textit{well-being}, where \textit{BMI} is MAR given \textit{well-being}. This implies that the missingness of \textit{BMI} may depend on the value of \textit{well-being} only in those individuals for whom \textit{well-being} is observed, while for the other individuals, missingness of \textit{BMI} and \textit{well-being} must be independent.

Rubin's categories of missingness mechanisms were derived in the context of likelihood inference. For instance, under MAR, regression parameters and their standard errors can consistently be estimated in the presence of missing data using multiple imputation as discussed below.

\subsection{Missingness graphs}
\label{sec:Mohan}

A recent line of work uses \textit{missingness graphs} to encode assumptions about the missingness mechanism \citep{Danieletal2012, Westreich2012, MohanPearlTian2013, MorenoBetancuretal2018, MohanPearl2021}. These graphs include both the substantive variables \(\mathbf{V}\) as well as the response indicators \(\mathbf{R}(\mathbf{V})\) as nodes, where it is assumed that the response indicators do not cause the substantive variables (i.e.\ there are no directed edges from nodes in \(\mathbf{R}(\mathbf{V})\) to nodes in \(\mathbf{V}\)). A set-wise missingness indicator \(R^{\mathbf{A}}\) as defined above is represented as a child of all nodes in \(\{R_A:A\in\mathbf{A}\}\). The usual rules of d-separation can then be used to determine whether aspects of the full-data distribution are identified from the observed data. In this paper, we only consider missingness graphs that are DAGs and call them \textit{missingness DAGs}.

Consider the missingness DAG in Figure~\ref{fig:mgraph} as an example. It shows three substantive variables, \textit{age} (\textit{A}), \textit{physical activity} (\textit{P}) and \textit{systolic blood pressure} (\textit{S}), together with their response indicators \(R_P\) and \(R_S\). As \textit{age} is assumed to be fully observed, its response indicator is omitted. According to this graph, the missingness of \textit{physical activity} depends on \textit{age} and \textit{systolic blood pressure}. Using the rules of do-calculation as described in \cite{MohanPearlTian2013}, it can be established e.g.\ that the full-data joint density \(f(a,p,s)\) of the substantive variables can be identified from the incompletely observed variables as \(f(a,p,s)=f(p\mid a,s,R_P=1,R_S=1)f(s\mid a,R_S=1)f(a)\). Note that under a causal interpretation, the graph depicts the assumption that the nodes in the graph (including the response indicators) do not share common causes except where shown in the graph; for example, we assume that \textit{age} is the only common cause of \textit{physical activity} and \textit{systolic blood pressure}.

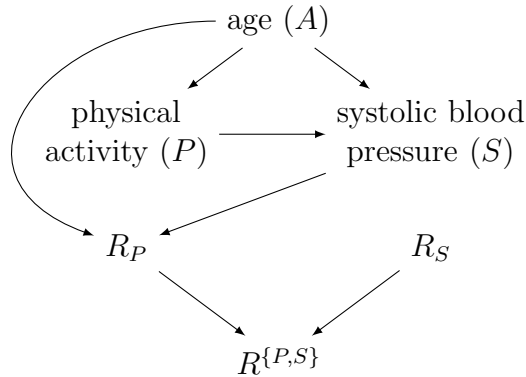
\begin{figure}
		\begin{center}
		\begin{tikzpicture}[node distance=15mm, >=latex]
		\node (A) {age (\(A\))};
		\node[below of=A, xshift=-20mm, align=center] (P) {physical\\ activity (\(P\))};
		\node[below of=A, xshift=20mm, align=center] (S) {systolic blood\\ pressure (\(S\))};
		\node[below of=P] (RP) {\(R_P\)};
		\node[below of=S] (RS) {\(R_S\)};
		\node[below of=RP, xshift=20mm] (RPS) {\(R^{\{P,S\}}\)};
		\draw[->] (A) to (P);	
		\draw[->] (A) to (S);
		\draw[->] (P) to (S);
		\draw[->] (S) to (RP);
		\draw[->, out=180, in=160, looseness=1.8] (A.west) to (RP);
		\draw[->] (RP) to (RPS);
		\draw[->] (RS) to (RPS);		
		
		\end{tikzpicture}
	\end{center}
	\caption{Example missingness DAG. }
	\label{fig:mgraph}
\end{figure}

Missingness graphs can only represent dependence relations between variables, not between events. Therefore, Rubin's MAR assumption cannot be depicted in a missingness graph. For example, it cannot be determined from the graph in Figure~\ref{fig:mgraph} whether \textit{physical activity} is MAR or MNAR, since its missingness could depend, in some individuals, on \textit{systolic blood pressure} values that are themselves missing. \cite{MohanPearlTian2013} proposed an alternative, variable-based definition of MAR, which is, however, not immediately relevant for the present paper.

\section{Test-wise deletion and multiple imputation for \\constraint-based causal discovery}
\label{sec:missing}

In this section, we investigate the assumptions under which conditional (in)dependencies are identified under multiple imputation or test-wise deletion, and discuss how different aspects affect the power of the conditional independence tests. As we will see, the answers are not necessarily the same as for the estimation of regression coefficients, which has been the primary focus of missing data methods.

\subsection{Test-wise deletion}
\label{sec:twd}

Consider \(X\in\mathbf{V}\), \(Y\in\mathbf{V}\setminus\{X\}\) and \(\mathbf{Z}\subset\mathbf{V}\setminus\{X,Y\}\). Test-wise deletion means that the conditional independence \(X\ind Y\mid\mathbf{Z}\) is tested in the subsample of the data where \(X\), \(Y\) and \(\mathbf{Z}\) are fully observed (irrespective of missing values in other variables). Formally, this implies testing \(X\ind Y\mid(\mathbf{Z},R^{XY\mathbf{Z}}=1)\), where we defined \(R^{XY\mathbf{Z}}=R^{\{X,Y\}\cup\mathbf{Z}}\) for better readability. We say that a \textit{conditional independence} in the full-data distribution is \textit{identified under test-wise deletion} if
\[X\ind Y\mid\mathbf{Z} \Rightarrow X\ind Y\mid(\mathbf{Z},R^{XY\mathbf{Z}}=1).\]
Vice versa, we say that a \textit{conditional dependence is identified under test-wise deletion} if
\[X\nind Y\mid\mathbf{Z} \Rightarrow X\nind Y\mid(\mathbf{Z},R^{XY\mathbf{Z}}=1).\]

Assume that the distribution of \(\mathbf{V}\cup\mathbf{R}(\mathbf{V})\) is faithful to a missingness DAG. \cite{Tuetal2019} showed that in this setting, conditional dependencies (but not independencies) are identified under test-wise deletion under an additional assumption they termed \textit{faithful observability}:
\[X\ind Y\mid(\mathbf{Z},R^{XY\mathbf{Z}}=1) \Leftrightarrow X\ind Y\mid(\mathbf{Z},R^{XY\mathbf{Z}}=0).\]
In words, an independence in the distribution underlying the data used in the test must also be present in the distribution underlying the (partially) unobserved data not used in the test. We further show in Appendix~\ref{app:proofs} that under faithful observability, a conditional independence \(X\ind Y\mid\mathbf{Z}\) is identified under test-wise deletion if and only if \(R^{XY\mathbf{Z}}\ind X\mid(Y,\mathbf{Z})\) or \(R^{XY\mathbf{Z}}\ind Y\mid(X,\mathbf{Z})\). %Note that this does not correspond to MAR or MNAR, but cuts across categories, and that it does not matter in which variable(s) the missing values occur.
Based on these results, we next formulate a necessary and sufficient condition for the validity of the PC-algorithm using as input correct information about the (in)dependencies in the distributions under test-wise deletion (`oracle test-wise-deletion PC'). We use \(\mathrm{adj}(V,\mathcal{D})\) to denote the set of nodes adjacent to node \(V\) in DAG \(\mathcal{D}\).

\begin{definition}[Admissible separator condition]
	Let \(\mathcal{D}\) be a missingness DAG with node set \(\mathbf{V}\cup\mathbf{R}(\mathbf{V})\). We say that the \emph{admissible separator condition} holds if for all pairs \((X,Y)\) of non-adjacent nodes in \(\mathbf{V}\), there exists a (possibly empty) set \(\mathbf{Z}\subset\mathbf{V}\) such that (i) \(X\ind Y\mid\mathbf{Z}\), (ii) \(\mathbf{Z}\subseteq\mathrm{adj}(X,\mathcal{D})\) or \(\mathbf{Z}\subseteq\mathrm{adj}(Y,\mathcal{D})\) and (iii) \(R^{XY\mathbf{Z}}\ind X\mid(Y,\mathbf{Z})\) or \(R^{XY\mathbf{Z}}\ind Y\mid(X,\mathbf{Z})\).
\end{definition}

\newpage
\begin{proposition}
		\label{prop:twd}
		Let \(\mathcal{D}\) be a missingness DAG with node set \(\mathbf{V}\cup\mathbf{R}(\mathbf{V})\), such that the distribution of \(\mathbf{V}\cup\mathbf{R}(\mathbf{V})\) is faithful to \(\mathcal{D}\), and assume that faithful observability holds. Then oracle test-wise-deletion PC recovers the true CPDAG over \(\mathbf{V}\) if and only if the admissible separator condition holds. 
\end{proposition}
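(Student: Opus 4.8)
The plan is to reduce the statement to a claim about the estimated skeleton and then mirror the classical correctness proof of PC, with the conditional independence oracle replaced by the test-wise-deletion oracle. Two preliminary facts drive everything. First, since no node in \(\mathbf{R}(\mathbf{V})\) is an ancestor of any node in \(\mathbf{V}\), every path in \(\mathcal{D}\) between two nodes of \(\mathbf{V}\) that passes through \(\mathbf{R}(\mathbf{V})\) contains a collider lying in \(\mathbf{R}(\mathbf{V})\), all of whose descendants lie in \(\mathbf{R}(\mathbf{V})\); conditioning on any \(\mathbf{Z}\subseteq\mathbf{V}\) then blocks such a path. Hence d-separations among nodes of \(\mathbf{V}\) given subsets of \(\mathbf{V}\) coincide in \(\mathcal{D}\) and in its induced subgraph \(\mathcal{D}_{\mathbf{V}}\), so the marginal law of \(\mathbf{V}\) is faithful to \(\mathcal{D}_{\mathbf{V}}\), the target CPDAG over \(\mathbf{V}\) is the CPDAG of \(\mathcal{D}_{\mathbf{V}}\), and \(\mathrm{adj}(V,\mathcal{D})\cap\mathbf{V}=\mathrm{adj}(V,\mathcal{D}_{\mathbf{V}})\); thus condition (ii) restricts \(\mathbf{Z}\) to exactly PC's search space. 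Second, combining the result of \cite{Tuetal2019} that dependencies are identified with the identification criterion quoted above, the oracle reports \(X\ind Y\mid\mathbf{Z}\) precisely when \(X\ind Y\mid\mathbf{Z}\) holds in the full-data law \emph{and} condition (iii) holds for that \(\mathbf{Z}\). In particular the oracle is \emph{conservative}: it never declares independence for a genuinely dependent pair, so every separating set it returns is a true d-separator. I would also record that (iii) is itself graphical, being equivalent by faithfulness to \(R^{XY\mathbf{Z}}\perp_{\mathcal{D}} X\mid(Y,\mathbf{Z})\) or \(R^{XY\mathbf{Z}}\perp_{\mathcal{D}} Y\mid(X,\mathbf{Z})\).

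Next I would argue that recovering the CPDAG is equivalent to recovering the skeleton. Because every separator the oracle returns is a genuine d-separator, for any separated pair \((X,Z)\) with common neighbour \(Y\) the collider status of \(Y\) is read off correctly from whether \(Y\) lies in the recorded separator (a standard DAG fact), so a correct skeleton forces correct v-structures and hence, after Meek's orientation rules, the correct CPDAG; conversely a wrong skeleton gives a wrong CPDAG. For sufficiency, assume the admissible separator condition. Adjacent nodes in \(\mathcal{D}_{\mathbf{V}}\) are dependent given every set, so by the identification of dependencies the oracle keeps all true edges and the running adjacency sets always contain the true ones. For a non-adjacent pair the condition supplies a \(\mathbf{Z}\) with (i)--(iii); by (ii) this \(\mathbf{Z}\) sits inside the true, hence the current, adjacency set of \(X\) or of \(Y\), so PC tests it at level \(|\mathbf{Z}|\) and deletes the edge (if it has not already separated the pair). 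Thus every spurious edge is removed and the skeleton is exactly correct.

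For necessity I would argue the contrapositive: if the admissible separator condition fails for some non-adjacent pair \((X,Y)\), then no \(\mathbf{Z}\) inside the true adjacency sets is simultaneously a d-separator and (iii)-admissible, and I want to conclude that the edge \(X-Y\) survives, making the output skeleton wrong. \emph{This is the step I expect to be the main obstacle.} The difficulty is that PC draws its conditioning sets from the \emph{current} adjacency sets, which may still contain not-yet-deleted spurious edges; a priori PC could therefore delete \(X-Y\) using a separator that borrows such a spurious neighbour and so lies outside the true adjacency sets, even though no admissible separator exists there. The route I would take is to show such a deletion is impossible whenever the \emph{final} skeleton is correct. The anchor is that the \emph{last} edge PC removes is separated when the working graph equals the true skeleton plus that single edge, which forces its separator into the true adjacency set; I would then propagate this backwards by induction over the deletion order, the delicate point being several spurious edges coexisting at earlier stages, where a separator might borrow an endpoint of another surviving spurious edge. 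To close that gap I would exploit the graphical reformulation of (iii) together with the standard fact that two non-adjacent nodes admit a d-separator among the parents of one of them, arguing that any certifying separator borrowing a spurious neighbour can be replaced by one inside the true adjacency sets without destroying either d-separation or (iii). Because (iii) depends on \(\mathbf{Z}\) non-monotonically through \(R^{XY\mathbf{Z}}\), this replacement---rather than the orientation step or the sufficiency direction---is where the real work lies.
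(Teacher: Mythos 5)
Your proposal and the paper's proof coincide on two of the three components. The sufficiency direction is identical: Lemma~\ref{lemma:Tu} protects true edges, so an admissible separator---lying in the true and hence in every current adjacency set---is eventually tested and removes the spurious edge. So is the reduction of CPDAG-correctness to skeleton-correctness: the oracle is conservative, so every recorded separator is a genuine d-separator, which makes the v-structure decisions and then Meek's rules come out right. The divergence is the necessity direction. The paper dispatches it in one sentence: if the skeleton is recovered, then each removed edge \(X\)--\(Y\) has a separator with \(\mathbf{Z}\subseteq\mathrm{adj}(X,\mathcal{D})\) or \(\mathbf{Z}\subseteq\mathrm{adj}(Y,\mathcal{D})\), ``as otherwise the edge would not have been removed''. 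You rightly refuse to take this for granted: PC draws its conditioning sets from the \emph{current} adjacency sets, which may still contain spurious neighbours, so the certifying separator need not lie in the true adjacency sets. Your plan to close this hole---backward induction over the deletion order plus a lemma replacing any separator that borrows a spurious neighbour by one inside true adjacencies---is left open, so your proof is incomplete precisely where the paper's proof is unjustified.

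The deeper problem is that the hole cannot be closed: the replacement lemma you hoped for is false, and with it the ``only if'' direction for the stable variant of PC (which the paper states its proof also covers). Take \(\mathbf{V}=\{X,Y,A,W,U\}\) with substantive DAG \(X\rightarrow A\rightarrow Y\), \(W\rightarrow A\), \(W\rightarrow U\); let \(A\) be the only incomplete variable and give \(R_A\) the two parents \(Y\) and \(U\), with any distribution faithful to this missingness DAG and satisfying faithful observability. Then \(\mathrm{adj}(X,\mathcal{D})\cap\mathbf{V}=\mathrm{adj}(Y,\mathcal{D})\cap\mathbf{V}=\{A\}\); the empty set does not separate \(X\) and \(Y\), while \(\mathbf{Z}=\{A\}\) separates them but violates (iii), since \(R_A\) is d-connected to \(X\) given \((Y,A)\) along \(R_A\leftarrow U\leftarrow W\rightarrow A\leftarrow X\) (the collider \(A\) is conditioned) and is d-connected to \(Y\) given \((X,A)\) through the edge \(Y\rightarrow R_A\). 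So the admissible separator condition fails for \((X,Y)\). Yet oracle test-wise-deletion PC-stable recovers the true CPDAG: level~0 removes \(X\)--\(W\) and \(X\)--\(U\) (marginal independencies, no incomplete variable involved, so (iii) is vacuous); level~1 removes \(Y\)--\(U\) with separator \(\{W\}\); at level~2 the spurious neighbour \(W\) is still in the level-start adjacency set of \(Y\), so \(X\)--\(Y\) is removed with separator \(\{A,W\}\) (conditioning on \(W\) blocks the path displayed above, so (iii) holds), after which \(Y\)--\(W\) and \(A\)--\(U\) are removed with separators \(\{A,U\}\subseteq\mathrm{adj}(W,\mathcal{D})\) and \(\{Y,W\}\subseteq\mathrm{adj}(A,\mathcal{D})\). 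Every recorded separator is a genuine d-separator, so the unique collider \(X\rightarrow A\leftarrow W\) is identified and Meek's rules return the true CPDAG---with the admissible separator condition violated. (For the original, order-dependent PC, the same example makes the outcome depend on whether \((X,Y)\) or \((Y,W)\) is processed first at level~2, so there the claim is not even well-posed without an order convention.) In short, you located exactly the weak point---the paper's proof silently identifies ``separator found by PC'' with ``separator inside true adjacencies''---but the work you postponed cannot be done by any argument: the configuration you worried about, an edge removed by borrowing a spurious neighbour while everything downstream still resolves correctly, is realizable, and it refutes the necessity half of Proposition~\ref{prop:twd} rather than merely complicating its proof.
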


A proof of Proposition~\ref{prop:twd} is given in Appendix~\ref{app:proofs}. If faithful observability holds but the admissible separator condition does not hold, then the discovered CPDAG has additional edges compared to the true CPDAG, and may contain erroneous edge orientations. The admissible separator condition is not empirically verifiable and arguably difficult to assess in practice, where the true graph is not known. Consider the four missingness DAGs in Figure~\ref{fig:identification} for illustration. The missingness structure is the same in all graphs (i.e.\ \(Y\) is missing depending on the values of \(X\) and \(Y\) itself), but whether the CPDAG is correctly discovered by oracle test-wise-deletion PC under the assumptions of Proposition~\ref{prop:twd}, depends on the presence or absence of the edge \(X-Y\). Note that the correct CPDAG is recovered for the DAGs 1) and 2) in Figure~\ref{fig:identification} even though \(C\) is MNAR (as the missingness of \(Y\) depends on the values of \(Y\) itself). Consider also the missingness DAG in Figure~\ref{fig:MAR*}. Here the missingness depends on fully observed variables only, which implies that the MAR assumption holds. The conditional independence \(X\ind Y\mid Z\) is not identified under test-wise deletion, however, as neither \(R_Z\ind X\mid (Y,Z)\) nor \(R_Z\ind Y\mid (X,Z)\).

\begin{figure}[hb]
	\vspace{5mm}
	\begin{center}
	\begin{tikzpicture}[node distance=20mm, >=latex]
	\node (A1) {\(X\)};
	\node[above of=A1, xshift=10mm, yshift=-5mm] (B1) {\(Z\)};
	\node[right of=A1] (C1) {\(Y\)};
	\node[below of=B1, yshift=-10mm] (R1) {\(R_Y\)};
	\draw[->] (A1) to (B1);	
	\draw[->] (A1) to (C1);
	\draw[->] (B1) to (C1);
	\draw[->] (A1) to (R1);
	\draw[->] (C1) to (R1);
	
	\node[right of=C1] (A2) {\(X\)};
	\node[above of=A2, xshift=10mm, yshift=-5mm] (B2) {\(Z\)};
	\node[right of=A2] (C2) {\(Y\)};
	\node[below of=B2, yshift=-10mm] (R2) {\(R_Y\)};
	\draw[->] (C2) to (A2);
	\draw[->] (A2) to (B2);
	\draw[->] (A2) to (R2);
	\draw[->] (C2) to (R2);
	
	\node[right of=C2] (A3) {\(X\)};
	\node[above of=A3, xshift=10mm, yshift=-5mm] (B3) {\(Z\)};
	\node[right of=A3] (C3) {\(Y\)};
	\node[below of=B3, yshift=-10mm] (R3) {\(R_Y\)};
	\draw[->] (B3) to (A3);	
	\draw[->] (B3) to (C3);
	\draw[->] (A3) to (R3);
	\draw[->] (C3) to (R3);
	
	\node[right of=C3] (A4) {\(X\)};
	\node[above of=A4, xshift=10mm, yshift=-5mm] (B4) {\(Z\)};
	\node[right of=A4] (C4) {\(Y\)};
	\node[below of=B4, yshift=-10mm] (R4) {\(R_Y\)};
	\draw[->] (C4) to (B4);	
	\draw[->] (A4) to (R4);
	\draw[->] (C4) to (R4);
	
	\node[above of=A1, xshift=-4mm, yshift=-5mm] () {1)};
	\node[above of=A2, xshift=-4mm, yshift=-5mm] () {2)};
	\node[above of=A3, xshift=-4mm, yshift=-5mm] () {3)};
	\node[above of=A4, xshift=-4mm, yshift=-5mm] () {4)};
	
	\end{tikzpicture}
	\end{center}
	\caption{Four missingness DAGs with identical missingness structures. In all DAGs, \(R_Y=R^{XYZ}\), as \(Y\) is the only variable containing missing values. DAGs 1 and 2: The true DAGs are such that oracle test-wise-deletion PC recovers the true CPDAG. DAGs 3 and 4: The conditional independence \(X\ind Y\mid Z\) is not identified under test-wise deletion, hence the CPDAG discovered by oracle test-wise-deletion PC using correct (in)dependence information will contain an edge between \(X\) and \(Y\).}
	\label{fig:identification}
\end{figure}
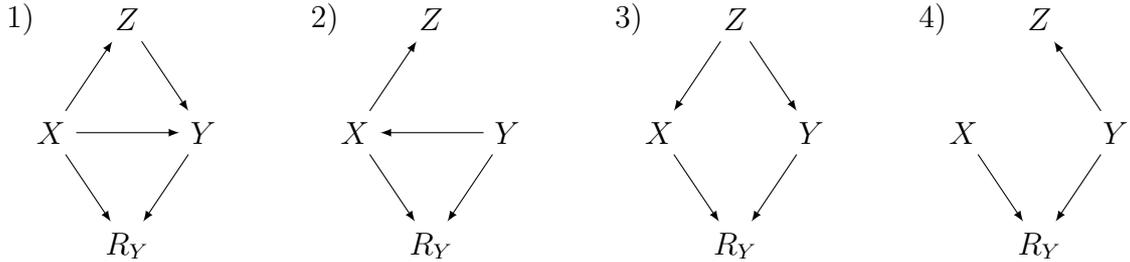

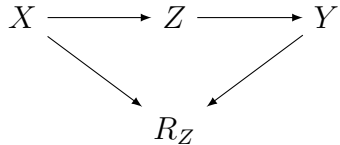
\begin{figure}[hb]
	\vspace{5mm}
	\begin{center}
		\begin{tikzpicture}[node distance=20mm, >=latex]
		\node (A) {\(X\)};
		\node[right of=A] (B) {\(Z\)};
		\node[right of=B] (C) {\(Y\)};
		\node[below of=B, yshift=5mm] (R) {\(R_Z\)};
		\draw[->] (A) to (B);	
		\draw[->] (B) to (C);
		\draw[->] (A) to (R);
		\draw[->] (C) to (R);
		
		\end{tikzpicture}
	\end{center}
	\caption{Example missingness DAG in which missingness depends on fully observed variables only, yet the conditional independence \(X\ind Y\mid Z\) is not identified under test-wise deletion. Oracle test-wise-deletion PC returns a fully connected graph.}
	\label{fig:MAR*}
\end{figure}

\newpage
\cite{Tuetal2019} (see also \citealp{Tuetal2020}) proposed two modifications of test-wise-deletion PC that can recover the correct CPDAG even if the admissible separator condition does not hold. Both aim at reconstructing relevant aspects of the full-data distribution. The first variant simulates values of all variables involved in the test based on models fitted to the observed data, the second variant re-weights the observed data. Both variants assume that no variable is a direct cause of its own missingness indicator, i.e.\ edges of the type \(V_i\rightarrow R_i\) are not allowed, and that there are no edges between the missingness indicators.

\subsubsection{Test-wise deletion vs.\ list-wise deletion}

Proposition~\ref{prop:twd} holds for list-wise deletion if \(R^{XY\mathbf{Z}}\) is replaced by \(R^{\mathbf{V}}=\mathbf{R}(\mathbf{V})\) in the admissible separator condition. The condition then requires for a pair \((X,Y)\) that \(R^{\mathbf{V}}\ind X\mid(Y,\mathbf{Z})\) or \(R^{\mathbf{V}}\ind Y\mid(X,\mathbf{Z})\), which is a stronger assumption than just requiring \(R^{XY\mathbf{Z}}\ind X\mid(Y,\mathbf{Z})\) or \(R^{XY\mathbf{Z}}\ind Y\mid(X,\mathbf{Z})\). It follows that under the assumptions of Proposition~\ref{prop:twd}, if oracle list-wise deletion PC recovers the true CPDAG, then oracle test-wise deletion does as well, but not the other way around. Consider now applying both variants to a given finite dataset. Then list-wise deletion PC uses only the completely observed data rows, while test-wise deletion also uses the incompletely observed rows for some of the conditional independence tests it performs. The graph discovered by test-wise-deletion PC is thus expected to be denser than the graph recovered by list-wise deletion PC, due to the larger power of some tests.

\subsubsection{Parametric assumptions}
So far, we have only considered non-parametric identification of conditional dependencies and independencies. In practice, parametric tests such as Fisher's \(z\)-test, which assumes that the variables follow a multivariate normal distribution, may be used. In that case, a complication arises for both list-wise and test-wise deletion, as the parametric assumptions need to hold conditionally on the response indicator being 1. As an example, suppose we have three variables \textit{income}, \textit{media} (measuring media consumption) and \textit{sedentary} (measuring sedentary behaviour), and assume that the missingness mechanism is such that people with a high media consumption are less likely to answer the media question, implying \textit{media}\(\rightarrow R^{media}\), while the other two variables are completely observed. When applying test-wise-deletion PC using Fisher's \(z\)-test to the incomplete data, we make the following assumptions: For testing \textit{income}\(\ind\)\textit{sedentary}, we assume that the full-data distribution of (\textit{income}, \textit{sedentary}) is normal, while for testing \textit{income}\(\ind\)\textit{media}, we assume that the conditional distribution of (\textit{income}, \textit{media}) given \(R^{media}=1\) is normal. Under the assumed missingness mechanism, these assumptions are incompatible: If the full-data distribution of \textit{media} is normal, then the observed distribution has a flattened right tail, since we assume that higher values are more likely to be missing. Hence at least one of these assumptions must be wrong, which potentially invalidates the type I error rate under the null hypothesis or decreases the power under the alternative.

\subsection{Multiple imputation}
\label{sec:MI}

Multiple imputation is a popular method for handling missing data especially in the context of regression analysis. It involves generating \(m\) predictions for each missing value using one of two strategies: For \textit{joint model imputation}, a joint distribution over all variables of interest is specified. Alternatively, separate models are specified for each incompletely observed variable given all other variables. This is called \textit{fully conditional specification} or \textit{multiple imputation by chained equations} (\textit{MICE}) and is more flexible than joint model imputation when it comes to different measurement scales. In either case, Bayesian regression models are fitted and predictions are drawn from the posterior predictive distribution(s) of the missing data given the observed data. The resulting \(m\) datasets are separately analysed using the same method that would have been used in the absence of missing values. Finally, the \(m\) results are pooled according to Rubin's rules \citep{Rubin1987} or other rules depending on the parameter of interest. % Rubin's rules are in Chapter 3 of the book

Standard implementations of multiple imputation rely on the MAR assumption, although known MNAR mechanisms can be accommodated as well. In addition, it is required that the modelling assumptions made in the imputation phase do not contradict the assumptions made during the analysis. This is further discussed below.

Two approaches are conceivable for combining constraint-based causal discovery with multiple imputation. One would be to estimate and pool \(m\) graphs. However, it is not clear what a good pooling method would be. The other one is to pool at the test level, as proposed by \cite{Foraitaetal2020}: First, \(m\) imputed datasets are generated using standard multiple imputation techniques. Then causal discovery is applied with the following modification: For each test, the test statistic is calculated using each of the \(m\) datasets in turn, and the \(m\) test statistics are pooled using appropriate rules. The test decision is based on the pooled statistic before going to the next test. This way, a single estimated graph is obtained.

Rubin's rules are valid for pooling Wald-type test statistics such as the \(z\)-statistic of Fisher's \(z\)-test \citep{Rubin1987}. For likelihood ratio statistics such as those of the \(G^2\)-test and the CG-test, appropriate rules have been proposed by \cite{MengRubin1992}. See Appendix~\ref{app:condindtests} for details on both sets of rules. Thus, for these and similar tests no new methodology is required for the pooling step. The rules guarantee that under the null hypothesis of conditional independence, the rejection rate is below the nominal \(\alpha\) level. However, this assumes that an appropriate imputation model has been used. As discussed next, choosing the imputation models is more problematic in the context of causal discovery than in the regression context.

\subsubsection{Choosing the imputation model}
\label{sec:modelselection}

Rubin's rules \citep{Rubin1987}, as well as the rules by \cite{MengRubin1992}, were derived within the joint model framework and assuming that the imputation model and the analysis model are \textit{compatible}, meaning the models do not contradict each other \citep{Meng1994, Bartlettetal2015}. When using MICE, where imputation is based on a set of separate imputation models, a common joint distribution underlying all these models can exist only in special cases, e.g.\ when all imputation models are linear regression models or saturated logistic regression models \citep{Hughesetal2014}. In all other cases, the theoretical guarantees of the pooling rules do not apply, even though MICE has been found to be robust in many settings even in the absence of an underlying joint model (see \citealp{Hughesetal2014}, and the references therein).

In the context of causal discovery, two complications arise. One is that each conditional independence test assumes its own analysis model, and the different analysis models may contradict each other. This is not the case for causal discovery using Fisher's \(z\)-test only: here we can impute using either a multivariate normal joint model, or MICE with linear main effects regression in order to ensure compatibility \citep{Hughesetal2014}. We do not recommend using predictive mean matching \citep{MorrisWhiteRoyston2014}, as we found this method to lead to an increased type I error rate in several scenarios (results not shown). Similarly, for causal discovery using the \(G^2\)-test only, we can either use a multinomial joint model, or MICE with saturated (i.e.\ including all possible interactions) logistic regression \citep{Hughesetal2014}. In contrast, consider using the CG-test. The CG-distribution is not collapsible, i.e.\ if we assume a CG-distribution for all variables jointly, this does not imply that a given subset of the variables also follows a CG-distribution \citep{LauritzenWermuth1989, Lauritzen1990}. The different analysis models thus contradict each other in general, and a compatible imputation model does not exist. Similarly, the analysis models will often contradict each other if a mix of different tests is used.

The second complication is that the number of variables in causal discovery analyses is often large, but imputation processes becomes instable when too many variables or model terms are involved \citep{vanBuuren2018, HardtHerkeLeonhart2012}. Consider MICE using saturated logistic models: With \(10\) variables, the number of terms in each imputation model is \(2^{10}=1\,024\); with \(100\) variables, it equals \(2^{100}>10^{30}\). Some amount of model selection is necessary, but it is not clear what a good approach would be. For joint model imputation based on the multivariate normal distribution, it has been suggested to apply a ridge penalty (e.g.\ \citealp{Schafer1997, CarpenterKenward2013}), but this has not been generalised to other variable types. For discrete variables in particular, one could consider restricting the order of the interaction terms. Another idea is to use flexible imputation models, e.g.\ based on random forests \citep{DoovevanBuurenDusseldorp2014, Shahetal2014}.

\subsubsection{Hybrid procedure}
\label{sec:2step}

As an alternative to the above strategies for selecting the imputation models, we propose the following hybrid procedure. First, a preliminary graph skeleton is estimated using test-wise deletion, with a nominal \(\alpha\) larger than the one to be used in the actual analysis. \cite{Tuetal2019} showed that under the assumptions of Proposition~\ref{prop:twd}, the estimated graph will be a supergraph of the true skeleton (see Lemma~\ref{lemma:Tu}). In a second step, MICE is performed such that the imputation model for variable \(V\) contains only \(V\)'s neighbours and the neighbours of the neighbours. The rationale is that ideally, the imputation model for variable \(V\) would include all variables in the \textit{Markov blanket} of the node \(V\), which is defined as the set of \(V\)'s parents, children and `spouses', i.e.\ nodes with which \(V\) shares a common child. As the edges in the estimated preliminary skeleton are undirected, every neighbour of \(V\) is a potential child and every neighbour of a neighbour a potential `spouse'.

\subsection{Auxiliary information and noise -- when multiple imputation is expected to outperform test-wise deletion}

Both test-wise deletion and testing under multiple imputation yield type I error rates respecting the nominal significance level, under their respective assumptions. In addition, testing under multiple imputation has the potential to detect (conditional) associations with a higher power than test-wise deletion, as (i) no observations are deleted, and (ii) multiple imputation exploits information in the observed values about the missing values. However, there are also situations in which multiple imputation is not expected to outperform test-wise deletion, e.g.\ when the incomplete variable is in the conditioning set of the conditional independence test, as illustrated in Scenario B of the following simulation experiment. Moreover, when the number of variables in the imputation model(s) is large, the imputation process could be dominated by noise and become unstable, as illustrated in Scenario D.

\begin{example}
	\label{ex:twdMI}
	Consider the following causal graph and covariance matrix, implying \(X\not\ind Y\mid Z\):
	\begin{center}
	\begin{minipage}{0.3\textwidth}
		\begin{tikzpicture}[node distance=20mm, >=latex]
		\node (X) {\(Z\)};
		\node[above of=X] (Z) {\(X\)};
		\node[right of=Z] (Y) {\(Y\)};
		\node[right of=X] (A) {\(A\)};
		\node[right of=Y, xshift=-7mm] (N1) {\(N_1\)};
		\node[below of=N1, yshift=10mm] () {\(\vdots\)};
		\node[right of=A, xshift=-7mm] (N99) {\(N_{99}\)};
		\draw[->] (Z) to (Y);
		\draw[->] (Z) to (X);
		\draw[->] (X) to (Y);
		\draw[->] (X) to (A);
		\draw[->] (Y) to (A);
		\draw[->] (Z) to (A);
		\end{tikzpicture}
	\end{minipage}
	\begin{minipage}{0.49\textwidth}
	\[\bm{\Sigma}=
	\begin{pmatrix}
	1 & 0.2 & 0.2 & 0.5 & 0 & \dots & 0 \\
	0.2 & 1 & 0.2 & 0.2 & 0 & \dots & 0  \\
	0.2 & 0.2 & 1 & 0.2 & 0 & \dots & 0  \\
	0.5 & 0.2 & 0.2 & 1 & 0 & \dots & 0  \\
	0 & 0 & 0 & 0 & 1 & \dots & 0\\
	\vdots & \vdots & \vdots & \vdots & \vdots & \ddots & \vdots \\
	0 & 0 & 0 & 0 & 0 & 0 & 1\\
	\end{pmatrix}\]
	\end{minipage}
	\end{center}

	We generated \(n=50\) or \(n=500\) observations of \((X,Y,Z,A,N_1,\dots,N_{99})\sim\mathcal{N}(\mathbf{0},\bm{\Sigma})\) and then deleted and imputed values as follows. Scenario A: 10, 30, 50 or 70\,\% of the values of \(Z\) were made MCAR. Imputation was based on a linear regression of \(Z\) on \((X,Y)\). Scenarios B,C,D: 10, 30, 50 or 70\,\% of the values of \(X\) were made MCAR. Scenario B: Imputation was based on a linear model of \(X\) on \((Y,Z)\). Scenario C: Imputation was based on a linear model of \(X\) on \((Y,Z,A)\). Scenario D: Imputation was based on a linear model of \(X\) on \((Y,Z,A,N_1,\dots, N_{99})\). The number of imputations was \(m=100\). In all scenarios, \(X\ind Y\mid Z\) was tested (i) using Fisher's \(z\)-test (\(\alpha=0.05\)) with test-wise deletion and (ii) using Fisher's \(z\)-test (\(\alpha=0.05\)) on the multiply imputed data. Figure~\ref{fig:Ex1} shows the rejection rate (power) over 10\,000 replications.
\end{example}

\begin{figure}[h]
	\begin{center}
		\includegraphics{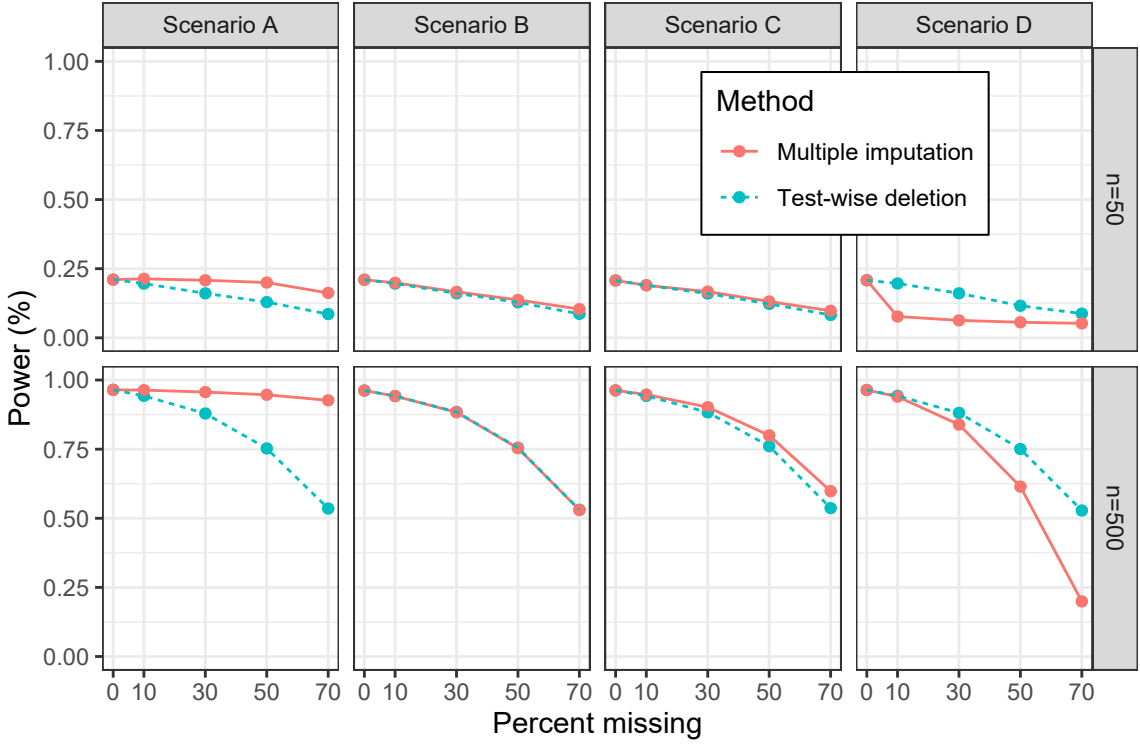}
		\caption{Power of Fisher's \(z\)-test combined with test-wise deletion versus multiple imputation. The null hypothesis is \(X\ind Y\mid Z\). \textbf{Scenario A}: Values in \(Z\) are MCAR. \textbf{Scenario B}: Values in \(X\) are MCAR. \textbf{Scenario C}: Values in \(X\) are MCAR, the imputation model includes an auxiliary variable. \textbf{Scenario D}: Values in \(X\) are MCAR, the imputation model includes an auxiliary variable and 99 noise variables.}
		\label{fig:Ex1}
	\end{center}
\end{figure}

Figure~\ref{fig:Ex1} shows that in Scenario A, multiple imputation successfully exploited information in \(X\) and \(Y\) to partially recover the missing information about \(Z\), resulting in a higher power for detecting \(X\not\ind Y\mid Z\). In contrast, multiple imputation did not result in a higher power in Scenario B, where missing values occurred in \(X\). This is a phenomenon well-known in the context of regression analysis: When the analysis model is a model for \(\mathrm{E}(X\mid Y,Z)\) and missingness occurs in \(X\), then multiple imputation using the imputation model \(\mathrm{E}(X\mid Y,Z)\) only adds noise to the analysis, hence restricting the analysis to the complete cases is preferred (\citealp{LittleRubin2002}, page 237; \citealp{Hughesetal2019}). Testing \(X\ind Y\mid Z\) using Fisher's \(z\)-test is conceptually equivalent to modelling \(\mathrm{E}(X\mid Y,Z)\) and testing for the coefficient of \(Y\) being zero. A different situation occurs when the imputation model includes additional variables not in the analysis model, such as the variable \(A\) in the simulation. In the context of regression analysis, such variables are called \textit{auxiliary} to the variables in the analysis model. In Scenario C, where the imputation model for \(X\) included \(Y\), \(Z\) and \(A\), the multiple imputation procedure successfully exploited information in \(A\), hence the power was (slightly) higher. In Scenario D, however, where the imputation model for \(X\) additionally included 99 noise variables, the noise outweighed the auxiliary information, hence the power was even lower than under test-wise deletion.

The above has consequences for causal discovery, where each variable can have different roles (variable of interest, conditioning variable, auxiliary variable, noise variable) relative to the different tests that are performed. First, multiple imputation is expected to benefit from graphs with strong associations between the variables, as the observed values then contain more information about the missing ones. Second, multiple imputation is expected to benefit from dense graphs. This is because during the PC-algorithm, the conditional independence test \(X\ind Y\mid\mathbf{Z}\) is only performed if \(X\) or \(Y\) still have more than \(|\mathbf{Z}|\) neighbours. Hence, fewer conditional tests will be performed if the graph is sparse, but as argued above, multiple imputation is especially effective when missing values occur in the conditioning variable(s). These two trends become visible in Illustration~\ref{ex:tiles} below. Third, as discussed in the previous section, variable selection on the imputation models is needed when the number of variables or terms in the imputation models is large, as otherwise the models are dominated by noise.

\begin{example}
	\label{ex:tiles}
	Random DAGs with 8 nodes each were generated using the \texttt{randomDAG} function from the \texttt{R} package \texttt{pcalg} \citep{R-pcalg}. The edge density parameter (probability of connecting a newly added node to a node already in the graph) was set to 0.1 (`very sparse'), 0.25 (`sparse'), 0.4 (`medium'), 0.55 (`dense') or 0.7 (`very dense'). The graphs were parameterised as linear structural models, where the edge weights \(w\) were chosen such that the power of Fisher's \(z\)-test (\(\alpha=0.05\)) for detecting the marginal dependence \(X\not\ind Y\) in the model \(X\stackrel{w}{\rightarrow}Y\) was 10\,\% (very weak), 30\,\% (weak), 50\,\% (medium), 70\,\% (strong) or 90\,\% (very strong). 500 observations were generated, and 10\,\% of the values were randomly deleted. Graphs were estimated by the PC-algorithm using Fisher's \(z\)-test (\(\alpha=0.05\)) (i) with test-wise deletion, (ii) on multiply imputed data (based on linear regressions, 100 imputations) and (iii) on the full data (for comparison). Figure~\ref{fig:Ex2} shows the results averaged over 1\,000 repetitions per scenario. The colour intensity is proportional to the difference in the relative Hamming distance, i.e.\ \(H_{\text{MI}}/H_{\text{full}} - H_{\text{twd}}/H_{\text{full}}\), where \(H_{\text{MI}}\), \(H_{\text{twd}}\) and \(H_{\text{full}}\) are the average Hamming distances (to the true graph) obtained using multiple imputation, test-wise deletion and the full data, respectively. The plot shows that under our simple data-generating model, multiple imputation yields better graph estimates in all but the extreme cases, and the advantage over test-wise deletion tends to be greatest in dense graphs with strong dependencies between variables.
\end{example}

\begin{figure}[]
	\begin{center}
		\includegraphics[width=12cm]{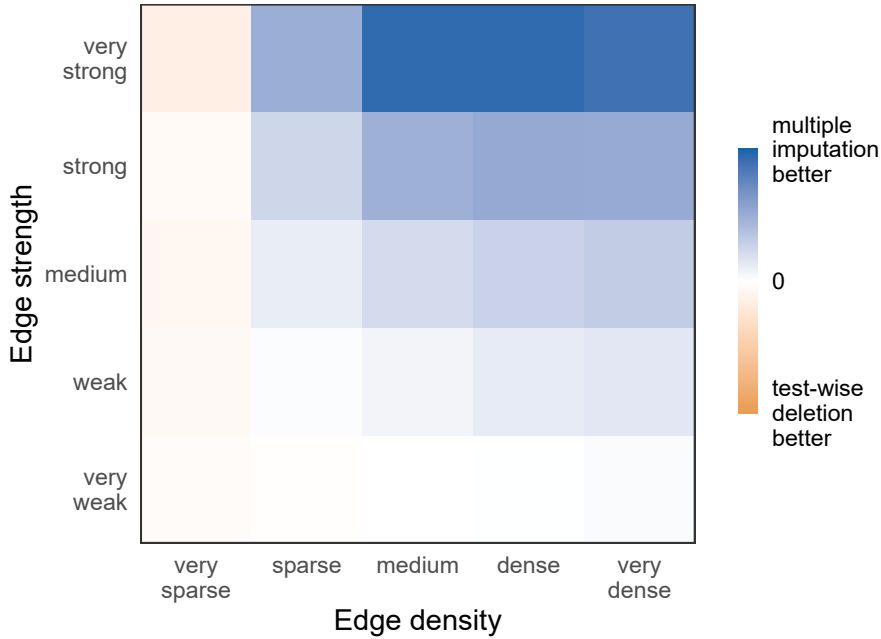}
		\caption{Relative performance of multiple imputation vs.\ test-wise deletion for discovering random graphs with 8 nodes and different edge densities and edge strengths. 10\,\% of data points were MCAR. The colour intensity is proportional to the relative Hamming distance (compared to using the full data); white means that the same relative Hamming distance was obtained for multiple imputation and test-wise deletion.}
		\label{fig:Ex2}
	\end{center}
\end{figure}

\FloatBarrier

\section{Detailed comparison on synthetic data}
\label{sec:simulation}

The aim of the simulation study was to compare the performance of test-wise deletion and multiple imputation with different imputation models, in order to help guide the choice between the different methods in practice. \texttt{R} code for replication can be found on GitHub (link will be provided after publication).

We considered 7 data-generating mechanisms (\textit{ECOLI}, \textit{MAGIC}, \textit{ASIA}, \textit{SACHS}, \textit{HEALTHCARE}, \textit{MEHRA}, \textit{ECOLI\_large}), 3 sample sizes (n=100, 1\,000, 5\,000) and 3 missingness mechanisms (`MCAR', `MAR', `MNAR'), yielding a total of 63 simulation scenarios. 

\subsection{Synthetic incomplete datasets}

Data were generated from benchmark causal graphs and their data-generating mechanisms according to the Bayesian Network Repository (\url{www.bnlearn.com/bnrepository}). Table~\ref{tab:dgm} summarises their key features. The \textit{ECOLI} graph is a subgraph of the \textit{ECOLI\_large} graph.

\begin{table}
	\caption{Data-generating mechanisms used in the simulation study. The footnotes indicate the variables of the selected subgraphs.}
	\label{tab:dgm}
	\begin{tabular}{lrrrrrl}
		\toprule
		& \multicolumn{3}{c}{\#Variables} & \#Edges & \#Categories & \\
		\cline{2-4}
		& total & Gaussian & discrete & & & \\
		\midrule
		\textit{ECOLI}\textsuperscript{1} & 12 & 12 & -- & 17 & & \\
		\textit{MAGIC}\textsuperscript{2} & 7 &7 & -- & 7 & & \\
		\textit{ASIA} & 8 & -- & 8 & 8 & 2 each & \\
		\textit{SACHS} & 11 & -- & 11 & 17 & 3 each & \\
		\textit{HEALTHCARE} & 7 & 4 & 3 & 9 & 2/3/3 & \\
		\textit{MEHRA}\textsuperscript{3} & 8 & 4 & 4 & 14 & 31/6/20/9 & \\
		\textit{ECOLI\_large} & 46 & 46 & -- & 70 & & \\
		\bottomrule
		\multicolumn{7}{l}{\textsuperscript{1} \footnotesize{b1191, cchB, eutG, fixC, ibpB, sucA, tnaA, yceP, yfaD, ygbD, ygcE, yjbO}}\\
		\multicolumn{7}{l}{\textsuperscript{2} \footnotesize{MIL, G1217, G257, G2208, G1338, G524, G1945 of \textit{MAGIC-NIAB}}}\\
		\multicolumn{7}{l}{\textsuperscript{3} \footnotesize{Zone, Type, Year, Region, co, pm10, pm2.5, so2}}\\
	\end{tabular}
\end{table}

In the \textit{ECOLI}, \textit{MAGIC}, \textit{ASIA}, \textit{SACHS}, \textit{HEALTHCARE} and \textit{MEHRA} scenarios, missing values were generated as follows. For `MCAR', 18\,\% of the values in the dataset were randomly chosen and deleted. Both multiple imputation and test-wise deletion are valid under this missingness mechanism. For `MAR', one or two groups of three or four variables each were chosen at random. Using the \texttt{ampute} function from the \texttt{mice} package \citep{R-mice}, missing values were generated such that exactly one variable per group was missing in each data row, and the probability of missingness depended on the values of the other two or three variables in the group. Values in one other randomly chosen variable were randomly deleted until an overall missingness proportion of 18\,\% was reached. Under this `MAR' mechanism, multiple imputation is valid, while test-wise deletion is not, as the admissible separator condition is not necessarily fulfilled for all pairs of variables. For `MNAR', we chose one fixed `key' variable and four to nine `subordinate' variables per graph. In the data rows with the \(q\)\,\% largest values of the `key' variable, the values of the `key' variable and all `subordinate' variables were deleted, where \(q\) was chosen such that the overall missingness proportion was 18\,\%. Under this `MNAR' mechanism, the admissible separator condition is satisfied for all  pairs of variables, hence test-wise deletion is valid, while multiple imputation is not.

In the \textit{ECOLI\_large} scenarios, missing values were generated in the same variables and using the same mechanisms as in the \textit{ECOLI} scenarios, leading to an overall missingness proportion of 4.7\,\% (instead of 18\,\%) in each scenario.

\subsection{Missing data methods}

The PC-stable algorithm as implemented in \texttt{pcalg} \citep{R-pcalg, ColomboMaathuis2014} was applied, using the following methods for dealing with the missing values: 1) List-wise deletion, i.e.\ data rows with missing observations were deleted before applying PC-stable. 2) Test-wise deletion using \texttt{gaussCItwd}, \texttt{disCItwd} or \texttt{mixCItwd} from the \texttt{micd} package. 3-4) Test-wise deletion with the (3) density or (4) permutation correction method by \cite{Tuetal2019} as implemented in the \texttt{MVPC} repository (\url{www.github.com/TURuibo/MVPC}; only available for continuous or binary data). 5-9) Conditional independence testing under multiple imputation using \texttt{gaussMItest}, \texttt{disMItest} or \texttt{mixMItest} from the \texttt{micd} package, where the imputations were generated using the \texttt{mice} package \citep{R-mice} with different imputation models, as follows: (5) each variable was imputed based on the variables in its Markov blanket (i.e.\ its parents, children and nodes with which it shares a common child) using linear or logistic regression imputation including all interaction terms (`oracle' multiple imputation; this is not possible to do in practice as the graph is not known, but is included here as a reference); (6) linear or logistic regression imputation including all interaction terms; (7) main effects linear or logistic regression imputation; (8) random forest imputation using the \texttt{rf} option \citep{DoovevanBuurenDusseldorp2014}; (9) random forest imputation using the \texttt{rfcont} or \texttt{rfcat} option from \texttt{CALIBERrfimpute} \citep{Shahetal2014}. 10) Missing values were singly imputed with the column mean (continuous data) or mode (discrete data) before applying PC-stable. For multiple imputation, we choose \(m=10\) imputations. Although this number is smaller than what is recommended in the literature \citep{vanBuuren2018, CarpenterKenward2013}, we found in preliminary simulations (not shown) that the test rejection rates do not change considerably when more imputations are added. We still recommend using \(m=100\) or higher in real applications. For methods (5) and (6), the highest order of interaction was set to 2 or 3 if required to reduce the runtime. For random forest imputation, we set the number of trees to 100, as we found this to improve the quality of the estimated graphs, compared to the default of 10 trees, in preliminary simulations (not shown).

In the \textit{ECOLI\_large} scenarios, we additionally included three versions (A, B, C) of the hybrid procedure proposed in Section~\ref{sec:2step}, where in step 1, the preliminary graph skeleton was estimated using \texttt{alpha=0.2}. In versions B and C, the skeleton search was stopped after all marginal independence tests had been performed, as the higher-order tests are expected to be less reliable. Additionally, in version C, the neighbours of the neighbours were ignored, in order to obtain even sparser imputation models.

\subsection{Evaluation criteria}
The performance was evaluated using the following metrics: number of edges in the estimated graph; proportion of discovered edges among the edges in the true CPDAG, ignoring edge orientation (recall); proportion of correctly discovered edges among the discovered edges, ignoring edge orientation (precision); number of edge insertion or deletions in order to transform the estimated graph into the true CPDAG, ignoring edge orientation (Hamming distance); and number of edge insertions, deletions or reversals in order to transform the estimated graph into the true CPDAG (structural Hamming distance; \citealp{TsamardinosBrownAliferis2006}).

\subsection{Results}
\label{sec:results}

The runtime was about three weeks on a 240-node high-performing computer cluster. Figure~\ref{fig:sim} provides a first overview of the results. It compares the performance of test-wise deletion without correction vs.\ multiple imputation based on linear models (Gaussian variables), logistic models including interaction terms (discrete variables) or the CALIBER random forest method (mixed variables). The horizontal position of the points in the figure is determined by the difference in the relative Hamming distance as defined in Illustration~\ref{ex:tiles}.

\begin{figure}[h]
	\includegraphics[width=\textwidth]{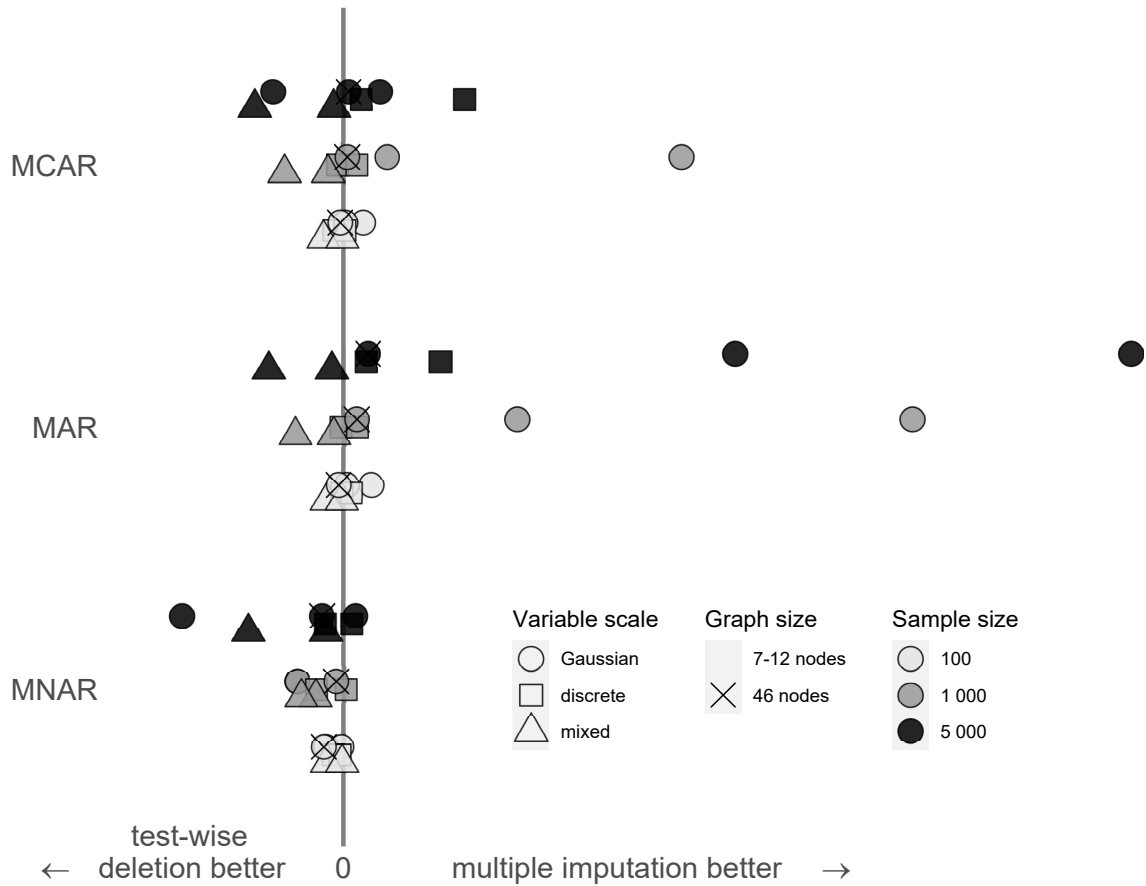}
	\caption{Overview over the simulation results. Shown on the x-axis is the difference in the relative Hamming distance (compared to using the full data).}
	\label{fig:sim}
\end{figure}

The following trends are apparent: For \(n=100\), there were virtually no differences in the performance of test-wise deletion vs.\ multiple imputation, and the differences were most pronounced for \(n=5\,000\). For Gaussian variables, multiple imputation outperformed test-wise deletion in almost all `MCAR' and `MAR' scenarios, but not in the `MNAR' scenarios. For discrete variables, the same trend can be observed, but the advantage of multiple imputation only occurred for \(n=5\,000\). For mixed variables, test-wise deletion outperformed multiple imputation in all simulation scenarios.

\begin{figure}
	\begin{center}
	\includegraphics[height=4.8cm, trim=0.3cm 0 0.45cm 0, clip]{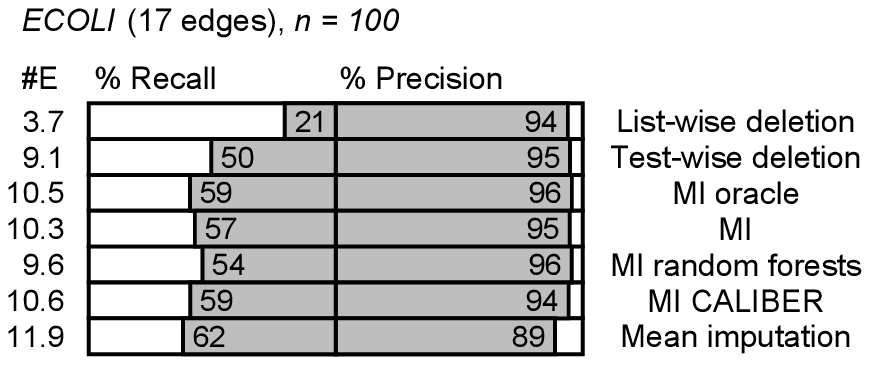}
	\includegraphics[height=4.8cm, trim=0.75cm 0 3.07cm 0, clip]{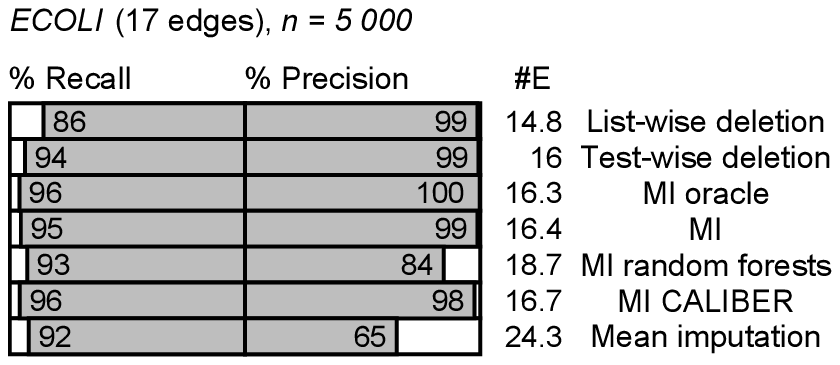}
	\includegraphics[height=4.8cm, trim=0.3cm 0 0.45cm 0, clip]{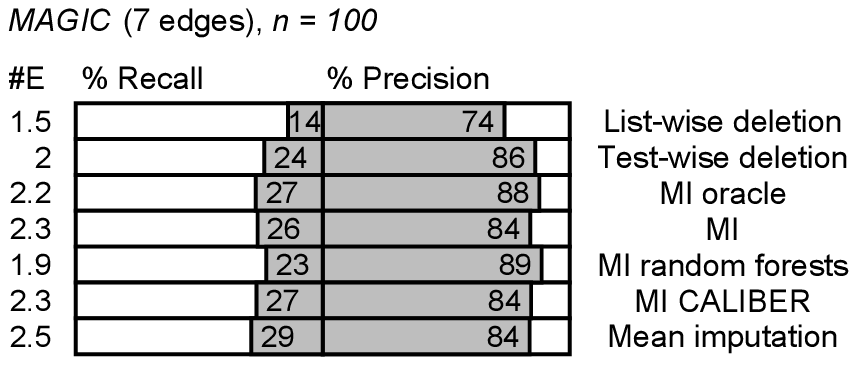}
	\includegraphics[height=4.8cm, trim=0.75cm 0 3.07cm 0, clip]{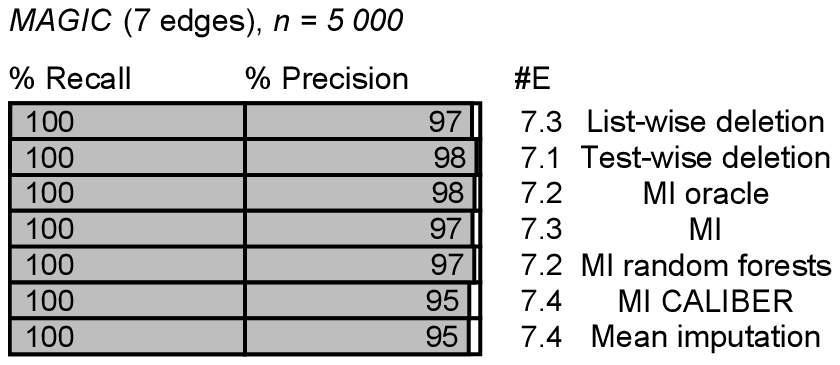}
	\includegraphics[height=5.1cm, trim=0.3cm 0 0.45cm 0, clip]{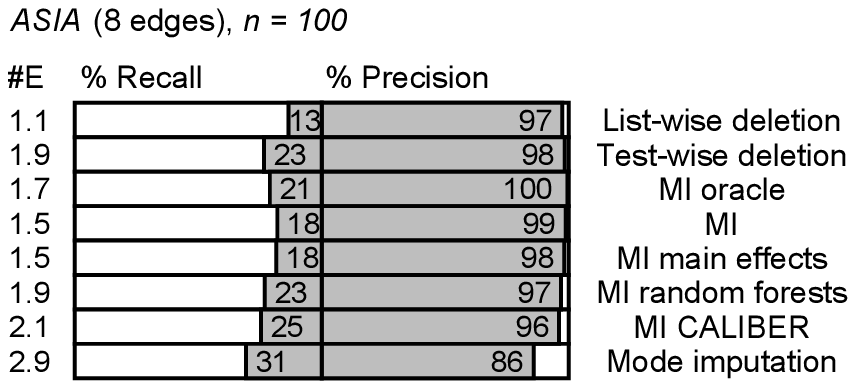}
	\includegraphics[height=5.1cm, trim=0.75cm 0 3.07cm 0, clip]{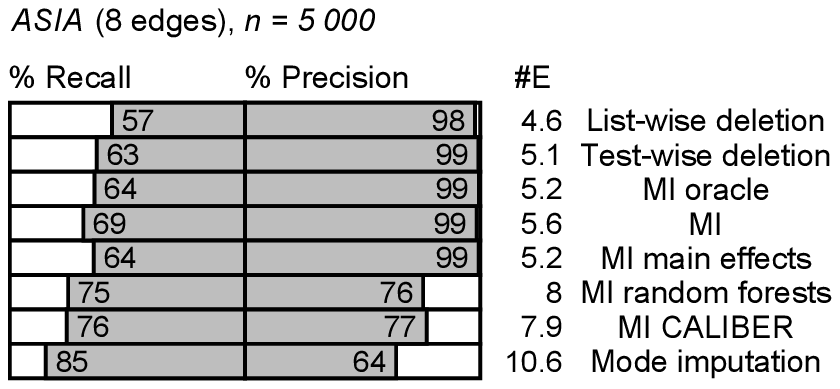}
	\includegraphics[height=5.1cm, trim=0.3cm 0 0.45cm 0, clip]{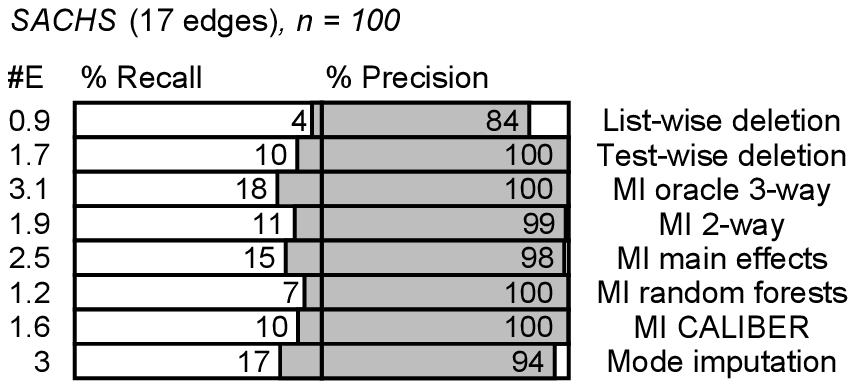}
	\includegraphics[height=5.1cm, trim=0.75cm 0 3.07cm 0, clip]{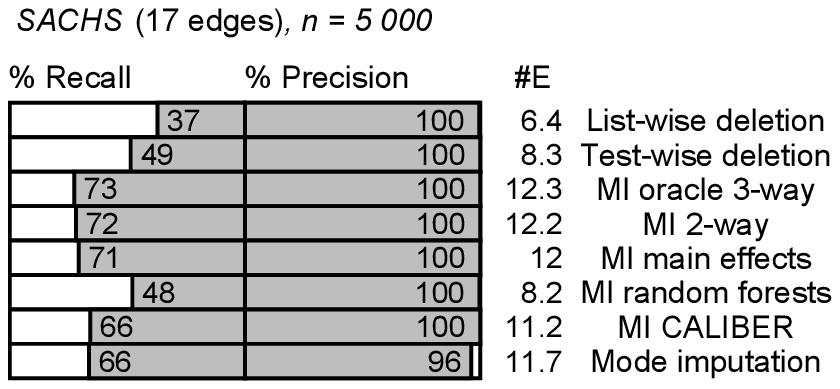}
	\end{center}
\caption{Simulation results, part I (\textit{ECOLI}, \textit{MAGIC}, \textit{ASIA}, \textit{SACHS}). Shown are the average edge recall (\% Recall), the average edge precision (\% Precision) and the average number of edges (\#E) in 1\,000 graphs estimated using the PC-algorithm combined with different methods for handling missing values. The sample size was either \(n=100\) (left) or \(n=5\,000\) (right) and missing values were generated using the `MCAR' mechanism described in the text. MI=multiple imputation.}
\label{fig:resI}
\end{figure}

\begin{figure}
\begin{center}
	\includegraphics[height=4.8cm, trim=0.3cm 0 0.45cm 0, clip]{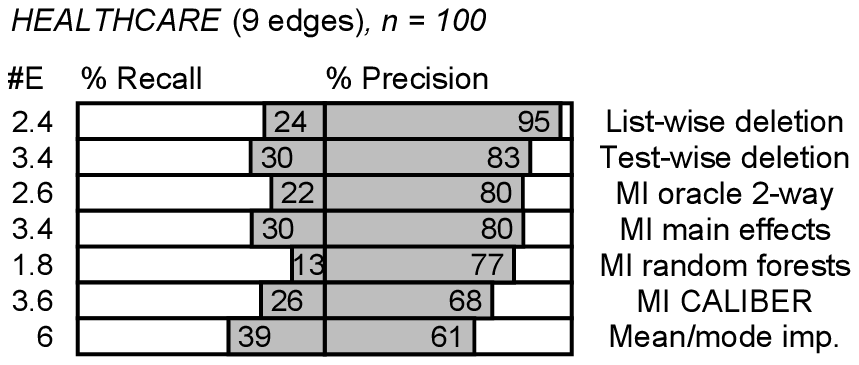}
	\includegraphics[height=4.8cm, trim=0.75cm 0 3.07cm 0, clip]{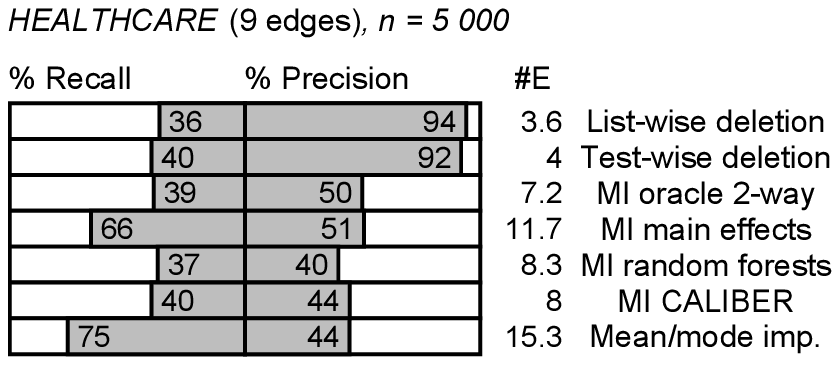}
	\includegraphics[height=4.4cm, trim=0.3cm 0 0.45cm 0, clip]{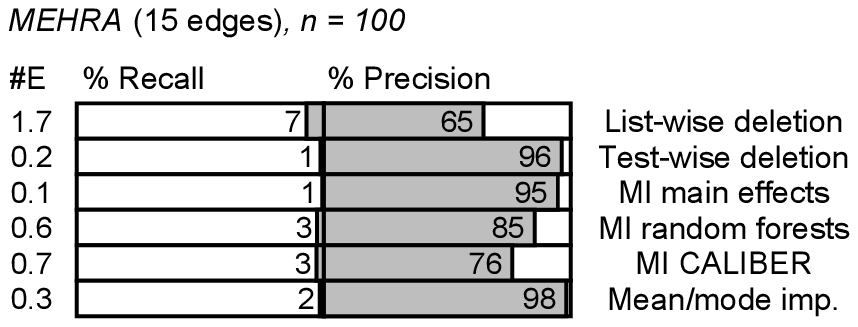}
	\includegraphics[height=4.4cm, trim=0.75cm 0 3.07cm 0, clip]{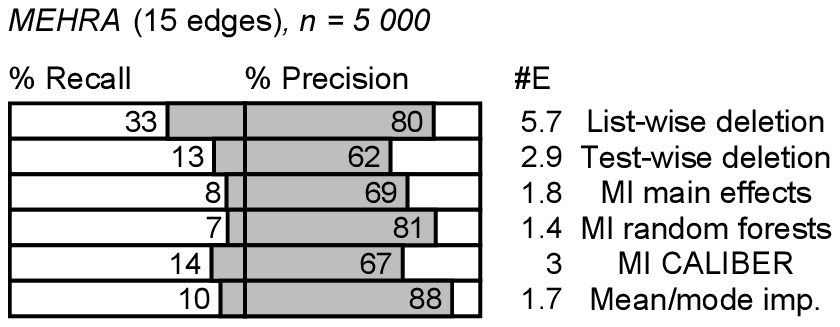}
	\includegraphics[height=5.9cm, trim=0.3cm 0 0.45cm 0, clip]{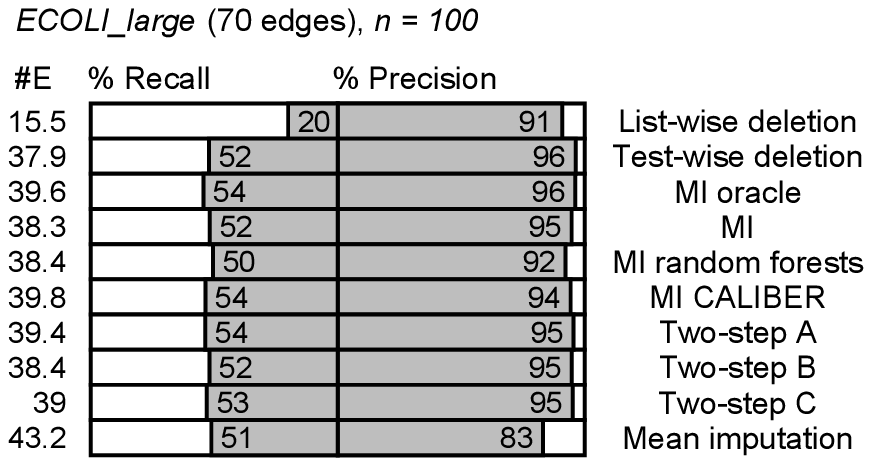}
	\includegraphics[height=5.9cm, trim=0.75cm 0 3.07cm 0, clip]{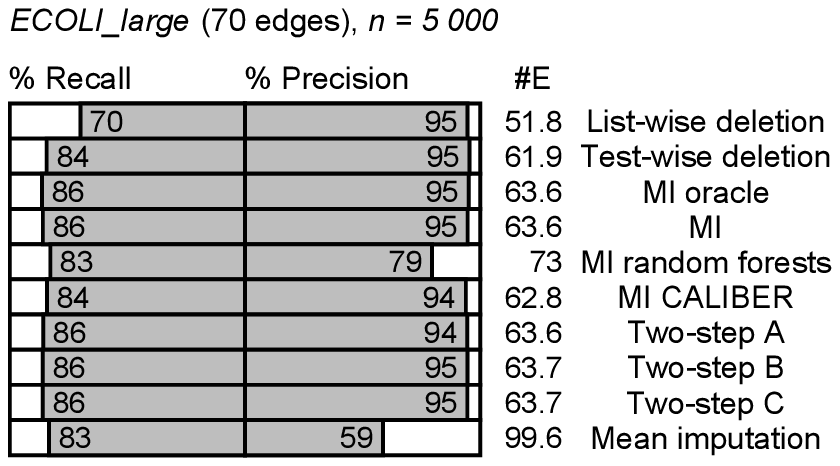}
\end{center}
\caption{Simulation results, part II (\textit{HEALTHCARE}, \textit{MEHRA}, \textit{ECOLI\_large}). Shown are the average edge recall (\% Recall), the average edge precision (\% Precision) and the average number of edges (\#E) in 1\,000 graphs estimated using the PC-algorithm combined with different methods for handling missing values. The sample size was either \(n=100\) (left) or \(n=5\,000\) (right) and missing values were generated using the `MCAR' mechanism described in the text. MI=multiple imputation.}
\label{fig:resII}
\end{figure}

More detailed results are shown in Figures \ref{fig:resI} and \ref{fig:resII}, and in the tables in the Online Supplement. The main observations are as follows: First, the edge recall was generally largest in the scenarios with Gaussian variables, smaller for discrete variables and very small for mixed variables. Further, while an average precision of more than 95\,\% was attained by a subset of the missing data methods in all Gaussian and discrete scenarios, this was not the case for the mixed scenarios. This is in line with earlier results using data without missing values \citep{AndrewsRamseyCooper2018} and indicates that causal discovery using mixed data is a particularly challenging task.

As expected, list-wise deletion resulted in sparse graphs with large (structural) Hamming distances, due to the low power. An exception occurred in the \textit{MEHRA} `MCAR' scenarios with \(n=100\) and \(m=5\,000\), where list-wise deletion resulted in denser graphs than test-wise deletion. This seemingly paradoxical behaviour can be explained as follows. The PC-algorithm starts with marginal tests and proceeds to conditional testing only if the nodes still have enough neighbours to be included in the conditioning set. Under list-wise deletion, only few edges remain after the marginal phase, hence the number of conditional tests performed is small. Under test-wise deletion, more edges survive the marginal phase, hence more conditional tests are performed, but this leads to the deletion of most remaining edges due to the very low power of the CG-test conditioning on the categorical \textit{MEHRA} variables with 6--31 categories.

Single imputation by the column mean or mode led to graphs that were `too large' (many edges but low precision). In order to understand why this happened, consider the structure \(X\rightarrow Y\rightarrow Z\), implying \(X\ind Z\mid Y\) but \(X\nind Z\). If \(Y\) contains missing values and these are replaced by the column mean or mode, it is very likely that after conditioning on the imputed version of \(Y\), there remains a residual association between \(X\) and \(Z\). However, the sample size is as large as if the data had been complete to begin with, and the fact that values were imputed is not taken into account by the testing procedure. Hence, the null hypothesis is rejected with a probability larger than the nominal test level, so that the resulting graph is more likely to contain an edge between \(X\) and \(Z\).

Test-wise deletion performed well overall. The results using the correction methods proposed by \cite{Tuetal2019} are not shown in Figures \ref{fig:resI} and \ref{fig:resII}, as they were very similar to those using test-wise deletion without correction. This is not surprising, as the missingness mechanisms chosen for the simulation did not require these corrections. \cite{Tuetal2019} and \cite{Tuetal2020} showed that if they are required (which is usually not known in real data analyses), using the corrections improves the performance; we conclude that if they are not required, the performance is at least not substantially worsened.

Concerning multiple imputation, we observed that parametric imputation using interaction terms was computationally infeasible (producing errors) in many repetitions, especially for the datasets with mixed variables. See the Online Supplement for more information. We obtained inconclusive results for the usefulness of the two variants of random forest imputation. In the scenarios with only Gaussian or only categorical variables, the \texttt{rf} variant often produced graphs with a lower precision and worse Hamming distance than parametric imputation, and the CALIBER variant often ranged between parametric and \texttt{rf} imputation in terms of different evaluation metrics. In the scenarios with mixed variables, the performance of the two random forest methods was usually similar and also comparable to that of parametric imputation. The main difference between the two random forest options lies in how they guarantee that the multiply imputed values are sufficiently different from each other (in order to properly account for the uncertainty in the missing values). Using the \texttt{rf} option, a specified number of trees is fitted and one tree is chosen at random. A prediction is made using this tree and the imputed value is randomly drawn from among the observed values that are in the same leaf as the prediction. The \texttt{CALIBERrfimpute} functions fit the random forest model on a bootstrap sample of the observed data. The imputed value is then either the best prediction plus a normal error (continuous case) or the predicted value from just one of the trees (discrete case). Based on our simulation results, we conclude that the CALIBER version is more appropriate for conditional independence testing, and we conjecture that the difference between the two versions is less pronounced when the goal is e.g.\ estimation of a regression coefficient.

The \textit{ECOLI\_large} results in Figure~\ref{fig:resII} demonstrate the potential of the hybrid method. For \(n=100\), test-wise deletion outperformed parametric multiple imputation (`MI' in the figure) in terms of the average precision and Hamming distance (34.9 for test-wise deletion vs.\ 35.8 for multiple imputation; see Online Supplement). Using the hybrid method A, the recall was as good as when using oracle multiple imputation, and the average Hamming distance was only 34.2. CALIBER random forest imputation also performed well and yielded an average Hamming distance of 34.8. We expect the differences to be larger in scenarios with even more variables.

\FloatBarrier

\section{Data application}
\label{sec:realdata}

To investigate the causal structure underlying the IDEFICS data, we used the \texttt{tpc} function from the \texttt{tpc} package (\url{www.github.com/bips-hb/tpc}), which is based on \texttt{pcalg} \citep{R-pcalg}, but offers additional options for integrating background knowledge (see \citealp{Andrewsetal2021}). We specified the following partial node ordering: (\textit{income}, \textit{isced}, \textit{bage}, \textit{migrant}, \textit{sex}) \(<\) \textit{smoke} \(<\) \textit{week} \(<\) \textit{bweight} \(<\) (\textit{formula}, \textit{hdiet}, \textit{bf}) \(<\) (\textit{age}, \textit{school}) \(<\) (\textit{bmi\_m}, \textit{fmeal}, \textit{yhei}) \(<\) (\textit{media}, \textit{mvpa}, \textit{sed}, \textit{sleep}, \textit{bmi}, \textit{homa}, \textit{wb}). In addition, we specified that \textit{sex} and \textit{age} are exogenous, i.e.\ do not have parent nodes. After obtaining rather sparse graphs in a test run, we set \texttt{alpha=0.1}. Missing data were dealt with using the following methods (in parentheses: name of the conditional independence test function used): list-wise deletion (\texttt{mixCItest}); test-wise deletion (\texttt{mixCItwd}); parametric multiple imputation based on main effects linear or logistic regression (\texttt{mixMItest}); random forest multiple imputation using \texttt{rf} in \texttt{mice} (\texttt{mixMItest}); random forest multiple imputation using \texttt{rf\_cont} or \texttt{rf\_cat} from the \texttt{CALIBERrfimpute} package (\texttt{mixMItest}); single imputation by the column mean or mode (\texttt{mixCItest}). For multiple imputation, we used 100 imputations and 100 trees where applicable. In order to get an impression of the variability of the estimated graphs, the whole analysis was repeated 50 times on bootstrap samples of the original data \citep{Pigeotetal2015}.

\begin{figure}
	\begin{tikzpicture}[>=latex]	
	\draw [draw, fill=lightgray!35] (-0.4,-0.2) rectangle (15.5,21.6);
	
	\node [rotate=90, align=center] at (0,18.5) {List-wise};
	\node [rotate=90, align=center] at (0.4,18.5) {deletion};
	\node [rotate=90, align=center] at (0,15.2) {Test-wise};
	\node [rotate=90, align=center] at (0.4,15.2) {deletion};
	\node [rotate=90, align=center] at (-0.1,8.4) {Multiple imputation};
	\draw (0.2,3.4) to (0.2,13.4);
	\node [rotate=90, align=center] at (0.5,11.8) {main effects};
	\node [rotate=90, align=center] at (0.5,8.4) {random forests};
	\node [rotate=90, align=center] at (0.5,5.0) {CALIBER};
	\node [rotate=90, align=center] at (0,1.6) {Single};
	\node [rotate=90, align=center] at (0.4,1.6) {imputation};
	\node [] at (1.7, 22.2-1) {(I)};
	\node [] at (1.7, 21.8-1) {Parent};
	\node [] at (1.7, 21.54-1) {demo-};
	\node [] at (1.7, 21.18-1) {graphics};
	\node [] at (4.1, 22.2-1) {(II)};
	\node [] at (4.1, 21.8-1) {Pregnancy};
	\node [] at (6.2, 22.2-1) {(III)};
	\node [] at (6.2, 21.8-1) {Early};
	\node [] at (6.2, 21.5-1) {life};
	\node [] at (8.2, 22.2-1) {(IV)};
	\node [] at (8.2, 21.8-1) {Child};
	\node [] at (8.2, 21.54-1) {demo-};
	\node [] at (8.2, 21.18-1) {graphics};
	\node [] at (10.2, 22.2-1) {(V)};
	\node [] at (10.2, 21.8-1) {Family};
	\node [] at (10.2, 21.54-1) {character-};
	\node [] at (10.2, 21.18-1) {istics};
	\node [] at (13.7, 22.2-1) {(VI)};
	\node [] at (13.7, 21.8-1) {Child};
	\node [] at (13.7, 21.5-1) {behaviour};
	\draw (12,21.2-1) to (15.2,21.2-1);
	\draw (12,21.2-1) to (12,21.1-1);
	\draw (15.2,21.1-1) to (15.2,21.2-1);
	\draw [draw, fill=white] (0.8,0) rectangle (15.3,3.2);
	\draw [draw, fill=white] (0.8,3.4) rectangle (15.3,6.6);
	\draw [draw, fill=white] (0.8,6.8) rectangle (15.3,10.0);
	\draw [draw, fill=white] (0.8,10.2) rectangle (15.3,13.4);
	\draw [draw, fill=white] (0.8,13.6) rectangle (15.3,16.8);
	\draw [draw, fill=white] (0.8,17.0) rectangle (15.3,20.0);
	\node [] at (1.7,2.8) (income) {income};
	\node [] at (1.7,1.9) (isced) {~~~~isced};
	\node [] at (1.7,1.0) (bage) {~bage};
	\node [] at (1.7,0.4) (migrant) {migrant};
	\node [] at (4.2,2.6) (smoke) {smoke};
	\node [] at (4.2,2.2) (week) {week};
	\node [] at (4.2,1.3) (bweight) {bweight};
	\node [] at (4.2,0.4) (sex) {sex};
	\node [] at (6.2,2.1) (formula) {~~~formula};
	\node [] at (6.0,1.6) (hdiet) {hdiet};
	\node [] at (6.3,0.9) (bf) {bf};
	\node [] at (8.3,1.6) (age) {age};
	\node [] at (8.3,0.7) (school) {school};
	\node [] at (10.2,2.7) (bmim) {bmi\_m};
	\node [] at (10.2,1.8) (yhei) {yhei};
	\node [] at (10.2,0.9) (fmeal) {f.meal};
	\node [] at (12.7,2.8) (media) {media};
	\node [] at (12.7,1.6) (mvpa) {mvpa};
	\node [] at (12.7,1.2) (sed) {sed};
	\node [] at (12.7,0.6) (sleep) {sleep};
	\node [] at (14.5,2.3) (bmi) {bmi};
	\node [] at (14.5,1.4) (homa) {homa};
	\node [] at (14.5,0.4) (wb) {wb};
	\draw [->, bend right=12] (school) to (sleep);
	\draw [->, out=-50, in=180] (age) to (sleep);
	\draw [->] (age) to (school);
	\draw [->] (age) to (media);
	\draw [->] (age) to (mvpa);
	\draw [->, out=-25, in=20] (formula) to (bf);
	\draw [->] (week) to (bweight);
	\draw [->] (sex) to (bweight);
	\draw [->, bend left=70] (sex) to (week);
	\draw [->, out=60, in=180, looseness=0.7] (bweight) to (bmim);
	\draw [->, out=60, in=-170, looseness=0.5] (bweight) to (bmi);
	\draw [-, out=45, in=-90, looseness=0] (isced) to (income);
	\draw [-, out=-45, in=90, looseness=0] (isced) to (bage);
	\draw [->, out=-35, in=-160, looseness=1.7] (isced) to (bf);
	\draw [-, transform canvas={xshift=-3mm}] (income) to (bage);
	\draw [-, transform canvas={xshift=-6mm}] (migrant) to (income);
	\draw [->] (bmim) to (bmi);
	\draw [-] (homa) to (bmi);
	\draw [->, out=5, in=172, looseness=0.4] (income) to (bmim);
	\draw [->, out=7, in=173, looseness=0.4] (income) to (media);
	\draw [-] (fmeal) to (yhei);
	\draw [->] (yhei) to (media);
	\draw [->, out=8, in=-172, looseness=1.5] (migrant) to (media);
	\draw [->, out=-10, in=-170, looseness=0.2] (migrant) to (wb);
	
	\node [] at (1.7,2.7+17) (income1) {income};
	\node [] at (1.7,1.9+17) (isced1) {~~~isced};
	\node [] at (1.7,1.1+17) (bage1) {~~~bage};
	\node [] at (1.7,0.3+17) (migrant1) {migrant};
	\node [] at (4.2,2.7+17) (smoke1) {smoke};
	\node [] at (4.2,2.2+17) (week1) {week};
	\node [] at (4.2,1.2+17) (bweight1) {bweight};
	\node [] at (4.2,0.3+17) (sex1) {sex};
	\node [] at (6.2,2.7+17) (formula1) {formula};
	\node [] at (6.2,2.1+17) (hdiet1) {hdiet};
	\node [] at (6.2,1.4+17) (bf1) {bf};
	\node [] at (8.2,2.7+17) (age1) {age};
	\node [] at (8.2,1.7+17) (school1) {school};
	\node [] at (10.2,2.7+17) (bmim1) {bmi\_m};
	\node [] at (10.2,2.1+17) (yhei1) {yhei};
	\node [] at (10.2,1.2+17) (fmeal1) {f.meal};
	\node [] at (12.7,2.7+17) (media1) {media};
	\node [] at (12.7,1.8+17) (mvpa1) {mvpa};
	\node [] at (12.7,1.0+17) (sed1) {sed};
	\node [] at (12.7,0.3+17) (sleep1) {sleep};
	\node [] at (14.5,2.2+17) (bmi1) {bmi};
	\node [] at (14.5,1.5+17) (homa1) {homa};
	\node [] at (14.5,0.6+17) (wb1) {wb};
	\draw [-, transform canvas={xshift=-5mm}] (migrant1) -- (income1);
	\draw [->, out=5, in=180, looseness=0.5] (migrant1) to (bmi1);
	\draw [->, out=-5, in=180] (bage1) to (wb1);
	\draw [->, out=-10, in=180] (bage1) to (sleep1);
	\draw [->] (week1) to (bweight1);
	\draw [->, out=0, in=200] (week1) to (formula1);
	\draw [->, out=-30, in=20, looseness=1] (formula1) to (bf1);
	\draw [->] (age1) to (school1);
	\draw [->] (yhei1) to (media1);
	\draw [-] (sed1) to (mvpa1);
	
	\node [] at (1.7,2.7+13.7) (income2) {income};
	\node [] at (1.7,1.9+13.7) (isced2) {~~~isced};
	\node [] at (1.7,1.0+13.7) (bage2) {~~~bage};
	\node [] at (1.7,0.3+13.7) (migrant2) {migrant};
	\node [] at (4.2,2.7+13.7) (smoke2) {smoke};
	\node [] at (4.2,2.2+13.7) (week2) {week};
	\node [] at (4.2,1.3+13.7) (bweight2) {bweight};
	\node [] at (4.2,0.4+13.7) (sex2) {sex};
	\node [] at (6.2,2.6+13.7) (formula2) {formula};
	\node [] at (5.9,2.0+13.7) (hdiet2) {hdiet};
	\node [] at (6.2,0.5+13.7) (bf2) {bf};
	\node [] at (8.2,1.4+13.7) (age2) {age};
	\node [] at (8.2,0.5+13.7) (school2) {school};
	\node [] at (9.8,2.6+13.7) (bmim2) {bmi\_m};
	\node [] at (10.3,1.8+13.7) (yhei2) {yhei};
	\node [] at (10.3,0.9+13.7) (fmeal2) {f.meal};
	\node [] at (12.9,2.7+13.7) (media2) {media};
	\node [] at (12.7,2.0+13.7) (mvpa2) {mvpa};
	\node [] at (12.7,1.1+13.7) (sed2) {sed};
	\node [] at (12.7,0.5+13.7) (sleep2) {sleep};
	\node [] at (14.5,2.3+13.7) (bmi2) {bmi};
	\node [] at (14.5,1.4+13.7) (homa2) {homa};
	\node [] at (14.5,0.6+13.7) (wb2) {wb};
	\draw [-, out=-30, in=20, looseness=1] (formula2) to (bf2);
	\draw [-] (income2) to (isced2);
	\draw [-, out=-20, in=20] (income2) to (bage2);
	\draw [-, transform canvas={xshift=-5mm}] (income2) to (migrant2);
	\draw [-, out=-120, in=120, looseness=1.2] (isced2) to (migrant2);
	\draw [-] (isced2) to (bage2);
	\draw [->, very thick, out=-50, in=-160, looseness=0.25] (bage2) to (wb2);
	\draw [->, out=10, in=170, looseness=0.2] (income2) to (media2);
	\draw [->, out=5, in=-175, looseness=0.5] (migrant2) to (media2);
	\draw [->, very thick, out=30, in=-165, looseness=0.5] (yhei2) to (media2);
	\draw [->, very thick, out=-40, in=170, looseness=1.1] (yhei2) to (wb2);
	\draw [->] (wb2) to (sleep2);
	\draw [->, very thick, out=20, in=-170, looseness=0.8] (age2) to (media2);
	\draw [->, very thick] (age2) to (school2);
	\draw [->] (school2) to (sleep2);
	\draw [->, very thick, out=-5, in=-160] (age2) to (mvpa2);
	\draw [->] (sed2) to (mvpa2);
	\draw [->, very thick, out=-10, in=-130, looseness=0.7] (sex2) to (mvpa2);
	\draw [-] (yhei2) to (fmeal2);
	\draw [->, out=-35, in=180, looseness=1] (week2) to (fmeal2);
	\draw [->, very thick] (week2) to (bweight2);
	\draw [->, very thick] (sex2) to (bweight2);
	\draw [->, very thick, out=10, in=-170] (bweight2) to (bmim2);
	\draw [->, very thick, out=10, in=180] (bweight2) to (bmi2);
	\draw [->, very thick, out=-5, in=175] (bmim2) to (bmi2);
	\draw [->] (bmi2) to (homa2);
	
	\node [] at (1.7,2.6+10.3) (income4) {income};
	\node [] at (1.7,1.9+10.3) (isced4) {~~~isced};
	\node [] at (1.8,1.1+10.3) (bage4) {~~~bage};
	\node [] at (1.7,0.3+10.3) (migrant4) {migrant};
	\node [] at (4.2,2.7+10.3) (smoke4) {smoke};
	\node [] at (4.2,2.2+10.3) (week4) {week};
	\node [] at (4.2,1.3+10.3) (bweight4) {bweight};
	\node [] at (4.2,0.4+10.3) (sex4) {sex};
	\node [] at (6.2,2.6+10.3) (formula4) {formula};
	\node [] at (5.9,2.0+10.3) (hdiet4) {hdiet};
	\node [] at (6.2,0.8+10.3) (bf4) {bf};
	\node [] at (8.2,1.6+10.3) (age4) {age};
	\node [] at (8.2,0.5+10.3) (school4) {school};
	\node [] at (10.2,2.7+10.3) (bmim4) {bmi\_m};
	\node [] at (10.2,1.7+10.3) (yhei4) {yhei};
	\node [] at (10.2,0.4+10.3) (fmeal4) {f.meal};
	\node [] at (12.5,2.4+10.3) (media4) {media};
	\node [] at (12.3,1.7+10.3) (mvpa4) {mvpa};
	\node [] at (13.9,1.0+10.3) (sed4) {sed};
	\node [] at (12.7,0.5+10.3) (sleep4) {sleep};
	\node [] at (14.5,2.3+10.3) (bmi4) {bmi};
	\node [] at (14.5,1.4+10.3) (homa4) {homa};
	\node [] at (14.5,0.6+10.3) (wb4) {wb};
	\draw [-, transform canvas={xshift=-5mm}] (migrant4) to (income4);
	\draw [-, out=120, in=-120, looseness=1.2] (migrant4) to (isced4);
	\draw [->, out=15, in=140, looseness=0.15] (income4) to (bmi4);
	\draw [->] (isced4) to (smoke4);
	\draw [->, out=-30, in=180] (isced4) to (bf4);
	\draw [->, out=-15, in=-170, looseness=0.9] (bage4) to (bf4);
	\draw [->, very thick, out=-50, in=-160, looseness=0.28] (bage4) to (wb4);
	\draw [->, out=40, in=-40] (bf4) to (formula4);
	\draw [->] (wb4) to (sleep4);
	\draw [->, very thick, out=-25, in=170] (yhei4) to (wb4);
	\draw [->, very thick] (yhei4) to (media4);
	\draw [->, very thick, out=20, in=180] (age4) to (media4);
	\draw [->, out=-30, in=170] (age4) to (sleep4);
	\draw [->, very thick] (age4) to (school4);
	\draw [->, very thick, out=15, in=170] (age4) to (mvpa4);
	\draw [->] (media4) to (homa4);
	\draw [->] (homa4) to (bmi4);
	\draw [->, very thick, out=5, in=160, looseness=0.7] (bmim4) to (bmi4);
	\draw [->, very thick, out=5, in=165, looseness=0.85] (bweight4) to (bmi4);
	\draw [->, very thick, out=10, in=-170] (bweight4) to (bmim4);
	\draw [->] (school4) to (sed4);
	\draw [<->] (mvpa4) to (sed4);
	\draw [->, very thick] (week4) to (bweight4);
	\draw [->, very thick] (sex4) to (bweight4);
	\draw [->, out=170, in=-160] (sex4) to (week4);
	\draw [->, very thick, out=0, in=-170] (sex4) to (mvpa4);
	
	\node [] at (1.7,2.7+6.9) (income5) {income};
	\node [] at (1.7,1.9+6.9) (isced5) {~~~isced};
	\node [] at (1.7,1.0+6.9) (bage5) {~~~bage};
	\node [] at (1.7,0.3+6.9) (migrant5) {migrant};
	\node [] at (4.2,2.7+6.9) (smoke5) {smoke};
	\node [] at (4.2,2.2+6.9) (week5) {week};
	\node [] at (4.2,1.3+6.9) (bweight5) {bweight};
	\node [] at (4.2,0.4+6.9) (sex5) {sex};
	\node [] at (6.2,2.7+6.9) (formula5) {formula};
	\node [] at (5.9,2.2+6.9) (hdiet5) {hdiet};
	\node [] at (6.2,0.9+6.9) (bf5) {bf};
	\node [] at (8.0,1.6+6.9) (age5) {age};
	\node [] at (8.0,0.8+6.9) (school5) {school};
	\node [] at (10.2,2.7+6.9) (bmim5) {bmi\_m};
	\node [] at (10.2,1.7+6.9) (yhei5) {yhei};
	\node [] at (10.2,0.4+6.9) (fmeal5) {f.meal};
	\node [] at (12.5,2.4+6.9) (media5) {media};
	\node [] at (12.7,1.8+6.9) (mvpa5) {mvpa};
	\node [] at (12.7,1.0+6.9) (sed5) {sed};
	\node [] at (12.7,0.5+6.9) (sleep5) {sleep};
	\node [] at (14.5,2.7+6.9) (bmi5) {bmi};
	\node [] at (14.5,1.4+6.9) (homa5) {homa};
	\node [] at (14.5,0.5+6.9) (wb5) {wb};
	\draw [-, transform canvas={xshift=-5mm}] (migrant5) to (income5);
	\draw [-, out=120, in=-120, looseness=1.2] (migrant5) to (isced5);
	\draw [-] (income5) to (isced5);
	\draw [->] (isced5) to (smoke5);
	\draw [-, out=-40, in=40] (formula5) to (bf5);
	\draw [->, very thick, out=-40, in=-160, looseness=0.28] (bage5) to (wb5);
	\draw [->, very thick, out=-35, in=170] (yhei5) to (wb5);
	\draw [->, very thick] (yhei5) to (media5);
	\draw [->, very thick, out=20, in=-175] (age5) to (media5);
	\draw [->, very thick] (age5) to (school5);
	\draw [->, out=-30, in=170] (age5) to (sleep5);
	\draw [->, very thick, out=15, in=170] (age5) to (mvpa5);
	\draw [-] (sed5) to (mvpa5);
	\draw [->, very thick, out=0, in=-160, looseness=1.1] (sex5) to (mvpa5);
	\draw [->, out=170, in=-160] (sex5) to (week5);
	\draw [->, very thick] (sex5) to (bweight5);
	\draw [->, very thick] (week5) to (bweight5);
	\draw [->, very thick, out=20, in=-170, looseness=0.6] (bweight5) to (bmim5);
	\draw [->, very thick, out=15, in=-178] (bweight5) to (bmi5);
	\draw [->, very thick] (bmim5) to (bmi5);
	\draw [->] (bmi5) to (homa5);
	
	\node [] at (1.7,2.6+3.5) (income6) {income};
	\node [] at (1.7,1.7+3.5) (isced6) {isced};
	\node [] at (1.7,1.0+3.5) (bage6) {bage};
	\node [] at (1.7,0.3+3.5) (migrant6) {migrant};
	\node [] at (4.2,2.7+3.5) (smoke6) {smoke};
	\node [] at (4.2,2.2+3.5) (week6) {week};
	\node [] at (4.2,1.3+3.5) (bweight6) {bweight};
	\node [] at (4.2,0.4+3.5) (sex6) {sex};
	\node [] at (6.2,2.7+3.5) (formula6) {formula};
	\node [] at (5.8,2.2+3.5) (hdiet6) {hdiet};
	\node [] at (6.2,1.0+3.5) (bf6) {bf};
	\node [] at (8.2,1.6+3.5) (age6) {age};
	\node [] at (8.2,0.5+3.5) (school6) {school};
	\node [] at (10.2,2.7+3.5) (bmim6) {bmi\_m};
	\node [] at (10.2,1.7+3.5) (yhei6) {yhei};
	\node [] at (10.2,0.4+3.5) (fmeal6) {f.meal};
	\node [] at (12.7,2.4+3.5) (media6) {media};
	\node [] at (12.7,1.8+3.5) (mvpa6) {mvpa};
	\node [] at (13.9,1.3+3.5) (sed6) {sed};
	\node [] at (12.6,0.7+3.5) (sleep6) {sleep};
	\node [] at (14.5,2.7+3.5) (bmi6) {bmi};
	\node [] at (14.5,1.8+3.5) (homa6) {homa};
	\node [] at (14.5,0.6+3.5) (wb6) {wb};
	\draw [-] (income6) to (isced6);
	\draw [->] (isced6) to (smoke6);
	\draw [->, very thick] (sex6) to (bweight6);
	\draw [->, out=170, in=-160] (sex6) to (week6);
	\draw [->, very thick, out=0, in=-170] (sex6) to (mvpa6);
	\draw [->, very thick] (week6) to (bweight6);
	\draw [->, out=170, in=-35, looseness=0] (sed6) to (mvpa6);
	\draw [->, very thick, out=15, in=170] (age6) to (mvpa6);
	\draw [->] (age6) to (sleep6);
	\draw [->, very thick] (age6) to (school6);
	\draw [->, very thick, out=20, in=-175] (age6) to (media6);
	\draw [->, very thick] (yhei6) to (media6);
	\draw [->, very thick] (yhei6) to (wb6);
	\draw [->, out=-5, in=-172] (fmeal6) to (wb6);
	\draw [->, very thick, out=-40, in=-160, looseness=0.28] (bage6) to (wb6);
	\draw [-, bend left=45] (fmeal6) to (bmim6);
	\draw [->, very thick, out=20, in=-170, looseness=0.6] (bweight6) to (bmim6);
	\draw [->, very thick, out=15, in=-178, looseness=0.8] (bweight6) to (bmi6);
	\draw [->, very thick] (bmim6) to (bmi6);
	\draw [->] (bmi6) to (homa6);
	\draw [-, out=-40, in=40] (formula6) to (bf6);
	\end{tikzpicture}
	\caption{Estimated IDEFICS graphs. Bi-directed edges indicate that the direction could not be determined due to conflicting information in the data. Bold edges are present in all graphs estimated under test-wise deletion or multiple imputation.}
	\label{fig:ideficsgraphs}
\end{figure}
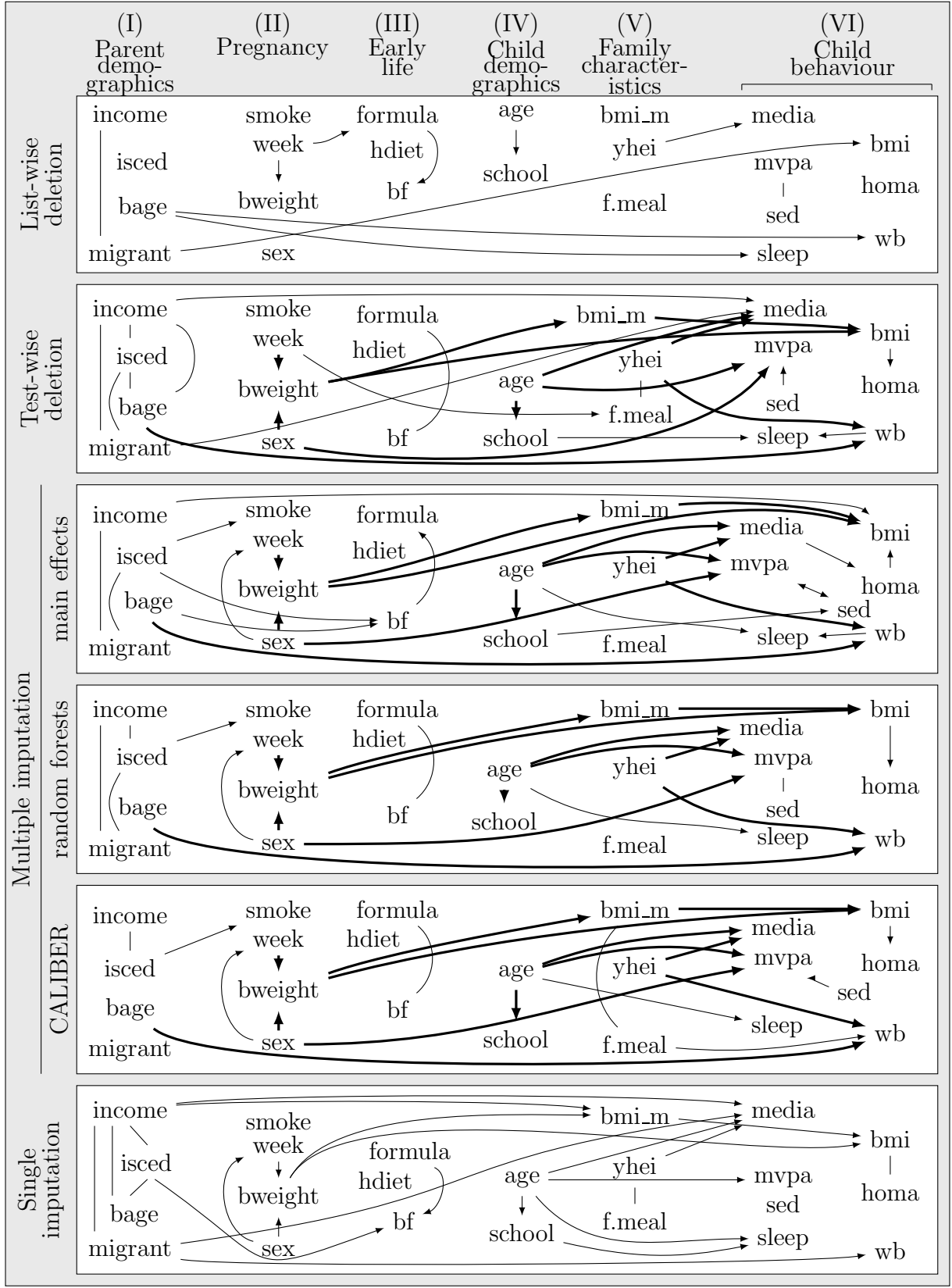

The graphs estimated in the main analysis are shown in Figure~\ref{fig:ideficsgraphs}. All discovered graphs were sparser than what might be expected based on expert knowledge \citep{VandenbroeckGoossensClemens2017}. Possible reasons could be the small sample size, violations of the faithfulness assumption, or deviations from the CG assumption. Consequently, the absence of edges should be interpreted with care.

Table~\ref{tab:nedges} compares the total number of edges and the number of edges adjoining nodes of `critical' variables with more than 20\,\% missing values, i.e.\ \textit{mvpa}, \textit{sed}, \textit{sleep} and \textit{homa}, in the main analysis and the bootstrap analyses. The numbers reveal, first of all, that the variability among the bootstrap samples was rather large, which again might be explained by the relatively small sample size. For random forest and single imputation, the number of edges obtained in the main analysis was smaller than the minimum number of edges obtained using the bootstrap samples. This is a known phenomenon and a correction has been proposed for score-based causal discovery \citep{SteckJaakkola2003}, but we are not aware of a correction method for the PC-algorithm. In line with the simulation results, list-wise deletion led to the sparsest graphs, while the densest graphs in the bootstrap analysis were discovered using single imputation. The multiple imputation methods tended to discover more edges adjoining `critical' nodes than test-wide deletion. This might be because the sample size available for tests containing the `critical' variables, where many values are missing, is rather small under test-wise deletion.

In the Online Supplement, we include diagnostic plots for the multiple imputation procedures in the main analysis. Based on visual inspection of the convergence plots, the algorithm converged in all three cases. The random forest (\texttt{rf}) method was most successful in generating imputed values with a distribution matching that of the observed values. Figure~\ref{fig:diagnostics} illustrates this for the \textit{wb} (well-being) variable. The distributions of the imputed values generated by the parametric and CALIBER methods are more symmetric. This may indicate that the (\texttt{rf}) method is better able to predict the missing values. However, as discussed previously \citep{Shahetal2014}, and as also witnessed in the simulation study, this does not necessarily mean that the graphs estimated using random forest (\texttt{rf}) imputation are closer to the truth.

\renewcommand{\arraystretch}{1.5} % more space between rows in table

\begin{table}
\caption{Number of edges in the graphs estimated from the IDEFICS data (in parentheses: minimum, average and maximum number of edges in the bootstrap analyses). The critical edges are those adjoining nodes \textit{mvpa}, \textit{sed}, \textit{sleep} or \textit{homa}, for which large proportions of values were missing.}
\label{tab:nedges}
	\begin{tabular}{|l|rr|}
	\hline
	 & Total number of edges & Number of critical edges \vspace{-3mm}\\
	 & (min, average, max) & (min, average, max) \\
 	\hline
	List-wise deletion (lwd) & \cellcolor{lightgray!35} 10 (6, 15.9, 25) & \cellcolor{lightgray!35} 2 (1, 4.4, 8) \\
	Test-wise deletion (twd) & \cellcolor{lightgray!35} 26 (20, 27.3, 34) & \cellcolor{lightgray!35} 6 (2, 4.9, 8) \\
	Multiple imputation & \cellcolor{lightgray!35} & \cellcolor{lightgray!35} \\
	-- main effects (MI) & \cellcolor{lightgray!35} 26 (18, 25.3, 31) & \cellcolor{lightgray!35} 7 (3, 4.9, 7) \\
	-- random forests (rfMI) & \cellcolor{lightgray!35} 21 (22, 30.4, 37) & \cellcolor{lightgray!35} 5 (4, 8.2, 13) \\
	-- CALIBER (cMI) & \cellcolor{lightgray!35} 21 (20, 28.7, 37) & \cellcolor{lightgray!35} 5 (4, 7.5, 11) \\
	Single imputation (sing) & \cellcolor{lightgray!35} 24 (25, 32.7, 41) & \cellcolor{lightgray!35} 4 (4, 7.0, 13) \\
	\hline
	\end{tabular}
\end{table}

\begin{figure}[!h]	
	\includegraphics[height=6.2cm, trim= 0 0 8mm -10mm, clip]{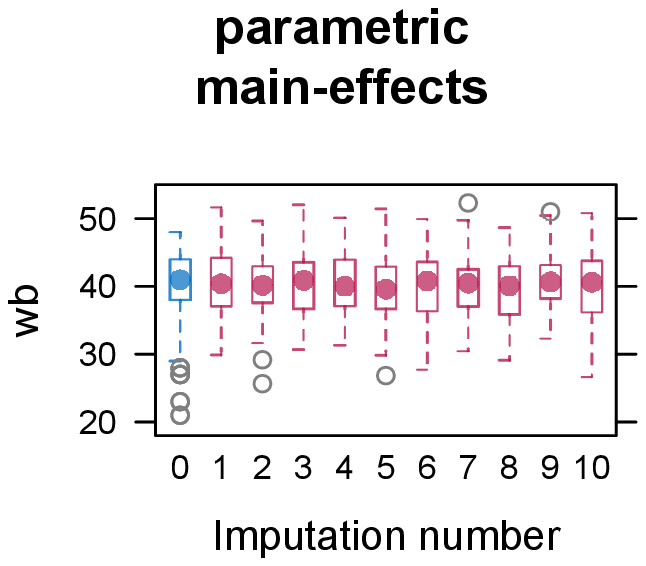}
	\includegraphics[height=6.2cm, trim= 10mm 0 8mm -10mm, clip]{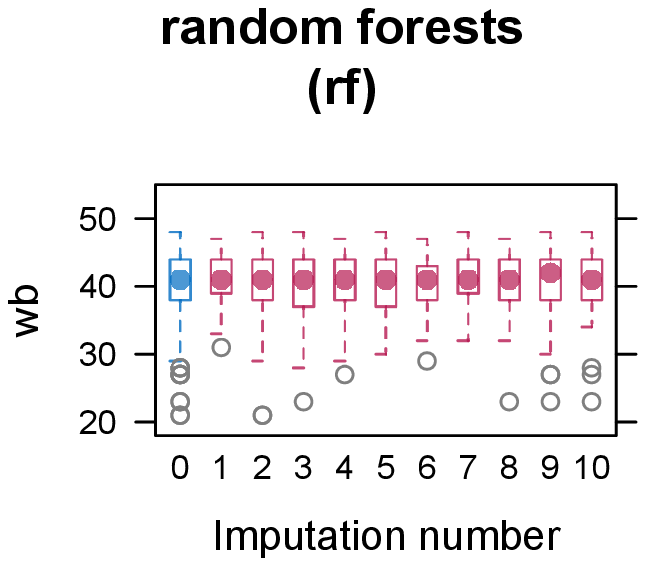}
	\includegraphics[height=6.2cm, trim= 10mm 0 8mm -10mm, clip]{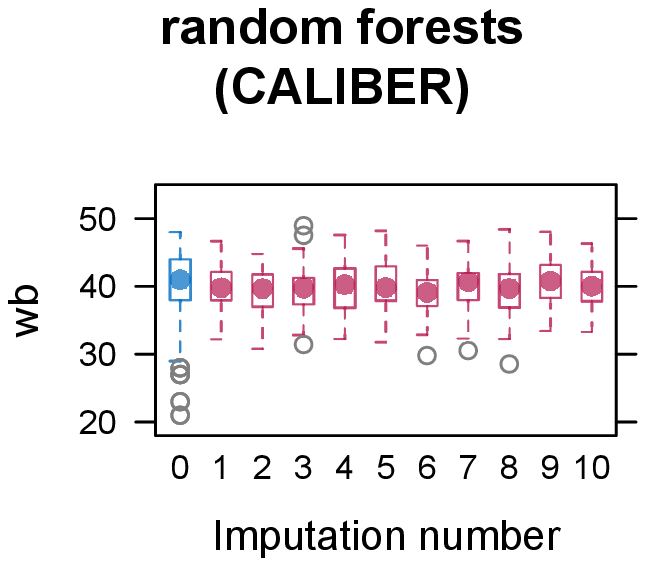}
	
	\includegraphics[height=5.5cm, trim= 0 0 8mm 5mm, clip]{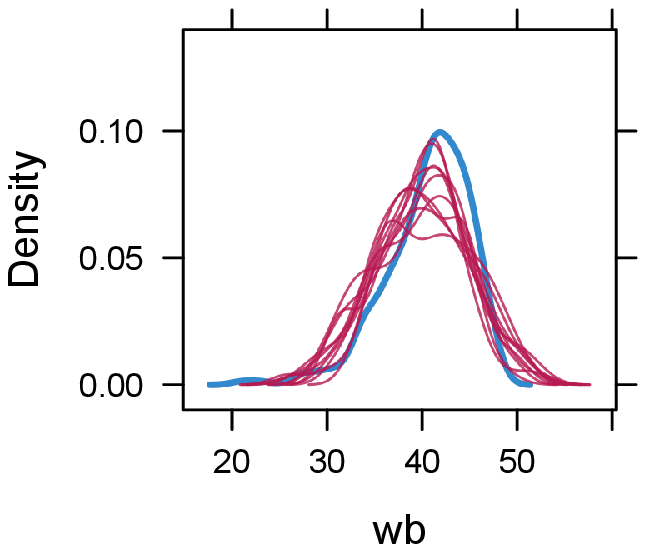}
	\includegraphics[height=5.5cm, trim= 10mm 0 8mm 5mm, clip]{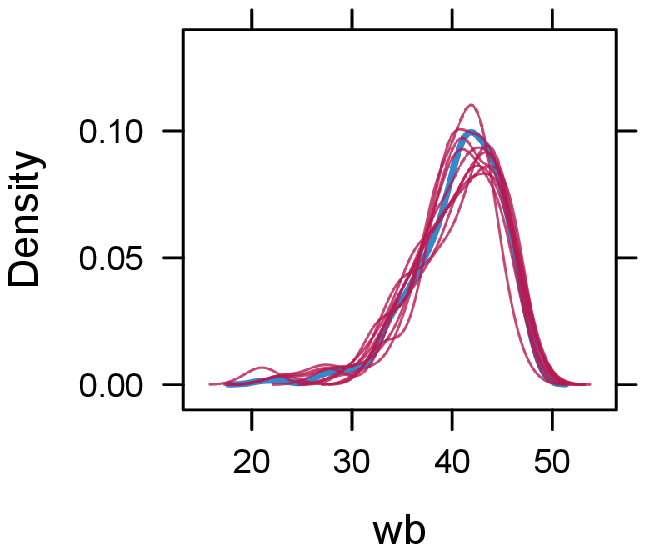}
	\includegraphics[height=5.5cm, trim= 10mm 0 8mm 5mm, clip]{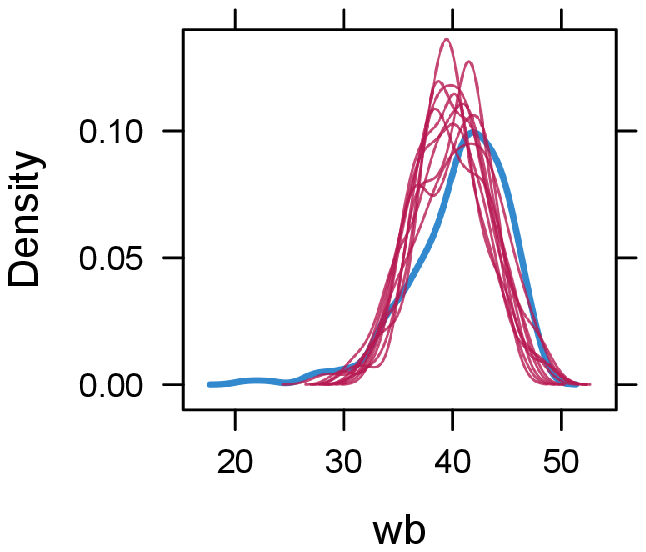}
	\caption{Diagnostic plots for variable \textit{wb} (well-being). Shown are the distribution of the observed values (blue boxplot and curves) and the distribution of the values generated by 10 randomly chosen imputations (red boxplots and curves), for the three different imputation methods.}
	\label{fig:diagnostics}
\end{figure}

\normalsize

\FloatBarrier

\section{Conclusions}
\label{sec:conclusions}

In this paper, we investigated test-wise deletion and multiple imputation for dealing with missing values in constraint-based causal discovery. Test-wise deletion relies on faithful observability and the admissible separator condition, whereas multiple imputation requires the missing values to be MAR. Both assumptions are implied by the stronger MCAR but are otherwise difficult to justify in practice.

In our empirical comparisons, we confirmed that test-wise deletion and multiple imputation are clearly superior to list-wise deletion and single imputation. We also demonstrated that while multiple imputation outperforms test-wise deletion in settings with small graphs and Gaussian variables, there is no overall best approach in realistically complex settings with a larger number of variables especially when these are a mix of continuous and discrete measurements. Random forest imputation or the hybrid method we proposed might be useful especially in settings with 50 or more variables, but comprehensive comparisons are difficult due to the long runtime of all three methods involved (causal discovery, multiple imputation and random forests).

An alternative missing value method for causal discovery not considered in this paper is inverse probability weighting \citep{GainShpitser2018}. Likelihood-based approaches such as Expectation Maximisation can be used with score-based causal discovery \citep{Friedman1997, Scutari2020} but are not straightforward to combine with constraint-based algorithms (see \citealp{Sokolovaetal2017}, for a first idea assuming a joint nonparanormal distribution).

Future research should address model selection of the imputation models in MICE. This is relevant also outside the area of causal discovery, but the literature on this topic is surprisingly scarce \citep{Noghrehchietal2021}. Finally, reliably learning (causal) graphs from data with mixed measurement scales remains a challenge especially with the additional complication of missing values.

\section*{Acknowledgements}
We gratefully acknowledge financial support by the German Research Foundation (DFG---Project DI 2372/1-1).

\FloatBarrier

\appendix
\section{Conditional independence testing}
\label{app:condindtests}

In this appendix, we provide details about the three conditional independence tests we focussed on in this work. We first review how each test is implemented when complete data are available, and then describe how they can be applied to multiple imputed data.

\subsection*{Fisher's \(z\)-test}

Consider a random vector \((X,Y,Z_1\dots,Z_s)^T\in\mathbb{R}^{s+2}\) with covariance matrix \(\bm{\Sigma}\). The \textit{partial correlation} between \(X\) and \(Y\) given \(\mathbf{Z}=(Z_1\dots,Z_s)\) is defined as
\[\rho_{XY.\mathbf{Z}}=\frac{p_{12}}{\sqrt{p_{11}}\sqrt{p_{22}}},\]
where \(p_{ij}\) is the \((i,j)\)-the element of the precision matrix \(\mathbf{P}=\bm{\Sigma}^{-1}\). The corresponding \textit{empirical partial correlation} can be estimated from \(n\) observations of \((X,Y,\mathbf{Z})^T\) as
\[\hat{\rho}_{XY.\mathbf{Z}}=\frac{\hat{\bm{\varepsilon}}_X^T\hat{\bm{\varepsilon}}_Y}{\sqrt{\hat{\bm{\varepsilon}}_X^T\hat{\bm{\varepsilon}}_X}\sqrt{\hat{\bm{\varepsilon}}_Y^T\hat{\bm{\varepsilon}}_Y}},\]
where \(\hat{\bm{\varepsilon}}_X\) is the vector of residuals after regressing \(X\) on \(\mathbf{Z}\), and \(\hat{\bm{\varepsilon}}_Y\) is the vector of residuals after regressing \(Y\) on \(\mathbf{Z}\).

For Fisher's \(z\)-test \citep{Fisher1924}, it is assumed that \((X,Y,\mathbf{Z})^T\) follows a multivariate normal distribution. Then \(\rho_{XY.\mathbf{Z}}=0\) if and only if \(X\ind Y\mid\mathbf{Z}\); this is the null hypothesis of Fisher's \(z\)-test. The test statistic is
\begin{equation}
\label{eq:z}
z(\hat{\rho}_{XY.\mathbf{Z}})=\frac{1}{2}\, \mathrm{ln}\left(\frac{1+\hat{\rho}_{XY.\mathbf{Z}}}{1-\hat{\rho}_{XY.\mathbf{Z}}}\right).
\end{equation}
Under the multivariate normal assumption, \(z(\hat{\rho}_{XY.\mathbf{Z}})\) is asymptotically normal with variance \(1/(n-s-3)\), and mean zero under the null hypothesis.

% If the variables are not multivariate normal, the distribution of the z statistic is still asymptotically normal, but the asymptotic variance may depend on certain moments of the distribution (Hawkins 1989)

\subsection*{Fisher's \(z\)-test under multiple imputation}

Fisher's \(z\)-test can be applied to multiply imputed data using Rubin's rules, as follows (\citealp{Schafer1997}, page 109; \citealp{Foraitaetal2020}).

Consider \(M\) completed datasets obtained by multiple imputation, and let \(z^{(m)}(\hat{\rho}_{XY.\mathbf{Z}})\) be the \(z\)-statistic calculated according to Equation~(\ref{eq:z}) from the \(m\)-th imputed dataset, \(m=1,\dots,M\). The pooled test statistic is
\[\bar{z}(\hat{\rho}_{XY.\mathbf{Z}})=\frac{1}{M}\sum_{m=1}^M z^{(m)}(\hat{\rho}_{XY.\mathbf{Z}}).\]
The variance of \(\bar{z}(\hat{\rho}_{XY.\mathbf{Z}})\) is estimated as 
\begin{equation}
\label{eq:pooledvariance}
T_{XY.\mathbf{Z}}=\overline{W}_{XY.\mathbf{Z}}+\left(1+\frac{1}{M}\right)B_{XY.\mathbf{Z}},
\end{equation}
which has two components: \(\overline{W}_{XY.\mathbf{Z}}\) is the average \textit{within-imputation variance} and is calculated as
\[\overline{W}_{XY.\mathbf{Z}}=\frac{1}{M}\sum_{m=1}^M \frac{1}{n-s-3}=\frac{1}{n-s-3}.\]
The extra variance due to the missing values is captured in the \textit{between-imputation variance}
\[B_{XY.\mathbf{Z}}=\frac{1}{M-1}\sum_{m=1}^M \left[ z^{(m)}(\hat{\rho}_{XY.\mathbf{Z}})-\bar{z}(\hat{\rho}_{XY.\mathbf{Z}})\right]^2.\]
The term \(\left(1+\frac{1}{M}\right)\) in Equation~(\ref{eq:pooledvariance}) adjusts for the fact that only a finite number \(M\) of imputations was drawn.

Under the null hypothesis \(\rho_{XY.\mathbf{Z}}=0\), \(\bar{z}(\hat{\rho}_{XY.\mathbf{Z}})/\sqrt{T_{XY.\mathbf{Z}}}\) approximately follows a Student's \(t\)-distribution with degrees of freedom given by
\[\nu=(M-1)\left[1+\frac{\overline{W}_{XY.\mathbf{Z}}}{(1+M^{-1})B_{XY.\mathbf{Z}}}\right]^2.\]

\subsection*{The \(G^2\)-test}

Consider a vector \((X,Y,Z_1,\dots,Z_s)^T\) of categorical random variables, and define \(\mathbf{Z}=(Z_1,\dots,Z_s)\). The sets of values that \(X\), \(Y\) and \(\mathbf{Z}\) can take are denoted by \(\mathcal{X}\), \(\mathcal{Y}\) and \(\mathcal{Z}\), respectively. The vector \((X,Y,\mathbf{Z})^T\) thus defines a 3-way contingency table. Denote by \(\theta_{xy\mathbf{z}}\) the probability of observing \((x,y,\mathbf{z})^T\), for \(x\in\mathcal{X}\), \(y\in\mathcal{Y}\), \(\mathbf{z}\in\mathcal{Z}\). This corresponds to one cell in the contingency table. Further, denote the marginal probabilities with respect to \(X\) and \(Y\), respectively, as \(\theta_{+y\mathbf{z}}=\sum_{x\in\mathcal{X}}\theta_{xy\mathbf{z}}\) for \(y\in\mathcal{Y}\), \(\mathbf{z}\in\mathcal{Z}\), and \(\theta_{x+\mathbf{z}}=\sum_{y\in\mathcal{Y}}\theta_{xy\mathbf{z}}\) for \(x\in\mathcal{X}\), \(\mathbf{z}\in\mathcal{Z}\).

Without further assumptions, drawing \(n\) independent observations of \((X,Y,\mathbf{Z})^T\) can be viewed as sampling from a multinomial distribution with parameters \(n\) and
\[\bm{\theta}=\{\theta_{xy\mathbf{z}}:x\in\mathcal{X},y\in\mathcal{Y},\mathbf{z}\in\mathcal{Z}\}.\]
We refer to this as the saturated multinomial model. The number of elements of \(\bm{\theta}\) is equal to \(|\mathcal{X}|\cdot|\mathcal{Y}|\cdot|\mathcal{Z}|\). As the elements must sum to 1, this corresponds to \(d=|\mathcal{X}|\cdot|\mathcal{Y}|\cdot|\mathcal{Z}|-1\) degrees of freedom.

If \(X\ind Y\mid \mathbf{Z}\), which is the null hypothesis of the \(G^2\)-test, then fewer parameters are required to describe the distribution of \((X,Y,\mathbf{Z})^T\). In particular, under \(X\ind Y\mid \mathbf{Z}\) we have that for all \(x\in\mathcal{X}\), \(y\in\mathcal{Y}\) and \(\mathbf{z}\in\mathcal{Z}\), \(\theta_{xy\mathbf{z}}=\theta_{+y\mathbf{z}}\cdot\theta_{x+\mathbf{z}}\). Thus, under the null hypothesis the set of parameters can be reduced to
\[\bm{\theta}^0=\{\theta_{+y\mathbf{z}}:y\in\mathcal{Y},\mathbf{z}\in\mathcal{Z}\}\cup\{\theta_{x+\mathbf{z}}:x\in\mathcal{X},\mathbf{z}\in\mathcal{Z}\},\]
which has \(|\mathcal{X}|\cdot|\mathcal{Z}|+|\mathcal{Y}|\cdot|\mathcal{Z}|\) elements. As \(\sum_{y{\in\mathcal{Y}}}\sum_{\mathbf{z}{\in\mathcal{Z}}}\theta_{+y\mathbf{z}}=\sum_{x{\in\mathcal{X}}}\sum_{\mathbf{z}{\in\mathcal{Z}}}\theta_{x+\mathbf{z}}=1\), this corresponds to \(d^0=(|\mathcal{X}|-1)\cdot|\mathcal{Z}|+(|\mathcal{Y}|-1)\cdot|\mathcal{Z}|\) degrees of freedom.

The \(G^2\)-test is a likelihood ratio test with test statistic
\[G^2=-2\left[l\left(\hat{\bm{\theta}}^0\right)-l\left(\hat{\bm{\theta}}\right)\right],\]
where \(l(\cdot)\) denotes the log-likelihood and the parameter estimates in \(\hat{\bm{\theta}}^0\) and \(\hat{\bm{\theta}}\) are obtained from the sample by counting the number of observations in the corresponding cell or margin and dividing by \(n\). Asymptotically and under the null hypothesis, \(G^2\) follows a \(\chi^2\)-distribution with \(d-d^0=(|\mathcal{X}|-1)\cdot(|\mathcal{Y}|-1)\cdot|\mathcal{Z}|\) degrees of freedom.

\subsection*{The CG-test}

The CG-distribution is defined as follows: Consider a set of variables \(\mathbf{V}\) partitioned into continuous variables \(\mathbf{C}\) and discrete variables \(\mathbf{B}\), where \(\mathbf{B}\) can take values in \(\mathcal{B}\). Then \(\mathbf{V}\) is said to follow a CG-distribution if for every \(\mathbf{b}\in\mathcal{B}\), the conditional distribution of \(\mathbf{C}\) given \(\mathbf{B}=\mathbf{b}\) is multivariate normal with mean vector \(\bm{\mu}_\mathbf{b}\) and covariance matrix \(\bm{\Sigma}_\mathbf{b}\) \citep{LauritzenWermuth1989, Lauritzen1990}. Note that \(\bm{\Sigma}_\mathbf{b}\) is allowed to depend on \(\mathbf{b}\), which is in contrast to the \textit{general location model} sometimes considered in the context of multiple imputation (\citealp{Schafer1997}, p.\,335). We denote the set of parameters describing the distribution of \(\mathbf{V}\) as \(\bm{\psi}_\mathbf{V}=\{p_\mathbf{b}, \bm{\mu}_\mathbf{b},\bm{\Sigma}_\mathbf{b} : \mathbf{b}\in\mathcal{B}\}\), where \(p_\mathbf{b}=\mathrm{P}(\mathbf{B}=\mathbf{b})\). The family of CG-distributions is not closed under marginalisation, i.e.\ if \(\mathbf{V}\) follows a CG-distribution, then a subset \(\mathbf{V}'\subset\mathbf{V}\) does not in general follow a CG-distribution (\citealp{Lauritzen1990}, Section 6.1.1).

A likelihood ratio test for conditional independence between CG-distributed variables was proposed by \cite{AndrewsRamseyCooper2018}. We call this the CG-test. Consider a random vector \((X,Y,Z_1,\dots,Z_s)^T\) following a CG-distribution with parameter vector \(\bm{\psi}_{XY\mathbf{Z}}\), where \(\mathbf{Z}=(Z_1,\dots,Z_s)\). For the CG-test, it is assumed that the marginal distributions of \((X,\mathbf{Z})\), \((Y,\mathbf{Z})\) and \(\mathbf{Z}\) are well approximated by CG-distributions with parameters \(\bm{\psi}_{X\mathbf{Z}}\), \(\bm{\psi}_{Y\mathbf{Z}}\) and \(\bm{\psi}_{\mathbf{Z}}\), respectively. As noted above, this does not in general follow from the assumption that \((X,Y,Z_1,\dots,Z_s)^T\) is CG.

Denote by \(\hat{\bm{\psi}}_{XY\mathbf{Z}}\), \(\hat{\bm{\psi}}_{X\mathbf{Z}}\), \(\hat{\bm{\psi}}_{Y\mathbf{Z}}\) and \(\hat{\bm{\psi}}_{\mathbf{Z}}\) the maximum likelihood estimates of  \(\bm{\psi}_{XY\mathbf{Z}}\), \(\bm{\psi}_{X\mathbf{Z}}\), \(\bm{\psi}_{Y\mathbf{Z}}\) and \(\bm{\psi}_{\mathbf{Z}}\), respectively, obtained from data, with corresponding log likelihoods \(l(\hat{\bm{\psi}}_{XY\mathbf{Z}})\), \(l(\hat{\bm{\psi}}_{X\mathbf{Z}})\), \(l(\hat{\bm{\psi}}_{Y\mathbf{Z}})\) and \(l(\hat{\bm{\psi}}_{\mathbf{Z}})\). The CG-test compares the log likelihood \(L=l(\bm{\psi}_{XY\mathbf{Z}}) / l(\bm{\psi}_{Y\mathbf{Z}})\) for modelling \(X\) given \(Y\) and \(\mathbf{Z}\) with the log likelihood \(L^0=l(\bm{\psi}_{X\mathbf{Z}}) / l(\bm{\psi}_{\mathbf{Z}})\) for modelling \(X\) given \(Y\) only, which corresponds to the null hypothesis that \(X\ind Y\mid\mathbf{Z}\), or equivalently, \(f(x\mid y,\mathbf{z})=f(x\mid \mathbf{z})\).

The test statistic of the CG-test is
\[\chi^2=-2(L^0-L).\]

Under the null hypothesis, \(\chi^2\) approximately follows a \(\chi^2\)-distribution. The degrees of freedom vary depending on which variables are continuous and which are discrete; for details see \cite{AndrewsRamseyCooper2018}\footnote{Note that equation (11) of \cite{AndrewsRamseyCooper2018} should read \(df_p(\hat{\theta}_p)=d(d+1)/2+1+\mathbf{d}\), in order to account for the estimated vector of means (Bryan Andrews, personal communication).}.

\subsection*{\(G^2\)-test and CG-test under multiple imputation}

Rules for combining likelihood ratio statistics have been suggested by \cite{MengRubin1992}. Consider \(M\) completed datasets obtained by multiple imputation. Let \(\bm{\phi}\) and \(\bm{\phi}^0\) be sets of parameters characterising the full and reduced model of interest. For the \(G^2\)-test, \(\bm{\phi}=\bm{\theta}\) and \(\bm{\phi}^0=\bm{\theta}^0\); for the CG-test, \(\bm{\phi}=\bm{\psi}\) and \(\bm{\phi}^0=\bm{\psi}^0\). As before, we use the superscript \(^{(m)}\) to indicate estimators obtained from the \(m\)-th completed dataset. We denote by \(l_m(\cdot)\) the log likelihood function given the \(m\)-th completed dataset.

First, the average likelihood ratio statistic is calculated as
\[\bar{LR}=\frac{1}{M}\sum_{m=1}^M -2[l_m(\bm{\hat{\phi}}^{0(m)})-l_m(\bm{\hat{\phi}}^{(m)})],\]
and the average parameter estimates as
\[\bm{\bar{\phi}}^0=\frac{1}{M}\sum_{m=1}^M \bm{\phi}^{0(m)}\]
and
\[\bm{\bar{\phi}}=\frac{1}{M}\sum_{m=1}^M \bm{\phi}^{(m)}.\]
The log likelihoods are then re-evaluated given each of the \(M\) completed datasets, with the parameters fixed to the average parameter estimates, and the corresponding likelihood ratio statistics are averaged:
\[\tilde{LR}=\frac{1}{M}\sum_{m=1}^M -2[l_m(\bm{\bar{\phi}}^0)-l_m(\bm{\bar{\phi}})].\]
The pooled test statistic is
\[D_3=\frac{\tilde{LR}}{k(1+r_3)}\]
with \(r_3=(M+1)(\bar{LR}-\tilde{LR})/[k(M-1)]\), where \(k\) %equals the length of \(\bm{\phi}\) minus the length of \(\bm{\phi}^0\)
equals the degrees of freedom that would have been used had complete data been available. The test statistic \(D_3\) can be approximated by an \(F\)-distribution with \(k\) and \(4+[k(M-1)-4][1+(1-2k^{-1}(M-1)^{-1})/r_3]^2\) degrees of freedom. The name `D\textsubscript{3}' has no specific meaning; it is used in several popular books to distinguish it from the so-called D\textsubscript{1} statistic for multi-parameter Wald tests and the so-called D\textsubscript{2} statistic for general \(\chi^2\)-tests \citep{Schafer1997, Enders2010, vanBuuren2018}.

\section{Identifiability of conditional (in)dependencies under test-wise deletion}
\label{app:proofs}

The following lemma on the identifiability of conditional dependencies is a rephrased version of Proposition~1 in \cite{Tuetal2019}:

\begin{lemma}
	\label{lemma:Tu}
Let \(\mathcal{D}\) be a missingness DAG with node set \(\mathbf{V}\cup\mathbf{R}(\mathbf{V})\), such that the distribution of \(\mathbf{V}\cup\mathbf{R}(\mathbf{V})\) is faithful to \(\mathcal{D}\), and assume that faithful observability holds. Let \(X,Y\in\mathbf{V}\) with \(X\ne Y\), and let \(\mathbf{Z}\subseteq\mathbf{V}\setminus\{X,Y\}\) such that \(X\nind Y\mid\mathbf{Z}\). Then \(X\nind Y\mid(\mathbf{Z},R^{XY\mathbf{Z}}=1)\).
\end{lemma}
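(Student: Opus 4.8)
The plan is to prove the contrapositive: assuming \(X\ind Y\mid(\mathbf{Z},R^{XY\mathbf{Z}}=1)\), I would deduce \(X\ind Y\mid\mathbf{Z}\). The strategy has three stages. First, convert the single event-conditioned independence into a genuine conditional independence between random variables, using faithful observability. Second, translate that into a d-separation via faithfulness. Third, and most substantively, argue graphically that this d-separation can be reduced to conditioning on \(\mathbf{Z}\) alone, exploiting that \(R^{XY\mathbf{Z}}\) is a sink node in \(\mathcal{D}\).

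For the first two stages: by assumption \(X\ind Y\mid(\mathbf{Z},R^{XY\mathbf{Z}}=1)\), and faithful observability immediately gives \(X\ind Y\mid(\mathbf{Z},R^{XY\mathbf{Z}}=0)\) as well. Since \(R^{XY\mathbf{Z}}\) is binary, independence holding within both of its strata, for every value of \(\mathbf{Z}\), is precisely the random-variable statement \(X\ind Y\mid(\mathbf{Z}\cup\{R^{XY\mathbf{Z}}\})\). Because the joint distribution of \(\mathbf{V}\cup\mathbf{R}(\mathbf{V})\) is faithful to \(\mathcal{D}\), this conditional independence corresponds to a d-separation \(X\perp_\mathcal{D}Y\mid(\mathbf{Z}\cup\{R^{XY\mathbf{Z}}\})\).

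The remaining, central step is a purely graphical claim: for substantive \(X,Y\) and \(\mathbf{Z}\subseteq\mathbf{V}\), \(X\perp_\mathcal{D}Y\mid(\mathbf{Z}\cup\{R^{XY\mathbf{Z}}\})\) implies \(X\perp_\mathcal{D}Y\mid\mathbf{Z}\). I would prove its contrapositive at the level of individual paths: any path \(p\) between \(X\) and \(Y\) open given \(\mathbf{Z}\) remains open given \(\mathbf{Z}\cup\{R^{XY\mathbf{Z}}\}\). The key observation is that the set-wise indicator \(R^{XY\mathbf{Z}}\), being the conjunction of the individual response indicators, has no outgoing edges, so every edge incident to it points inward. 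Hence, if \(R^{XY\mathbf{Z}}\) lies on \(p\) as an intermediate node it must be a collider there. Adding \(R^{XY\mathbf{Z}}\) to the conditioning set therefore cannot violate the non-collider condition of openness (it is never a non-collider on \(p\)), and it can only enlarge the set of available collider descendants, so the collider condition is preserved too; and if \(R^{XY\mathbf{Z}}\) is itself a collider on \(p\), it now trivially has a descendant, namely itself, in the conditioning set. Thus \(p\) stays open. Applying this to the d-separation obtained above forces \(X\perp_\mathcal{D}Y\mid\mathbf{Z}\), and the Markov property then yields \(X\ind Y\mid\mathbf{Z}\), completing the contrapositive.

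I expect the main obstacle to be the graphical step, and within it the careful bookkeeping of the two openness conditions as the conditioning set grows: one must confirm that \(R^{XY\mathbf{Z}}\) can never act as a blocking non-collider, which rests entirely on its being a childless sink, and that previously open colliders are not disturbed. A secondary point needing care is the passage between event-conditioning (\(R^{XY\mathbf{Z}}=1\)) and random-variable conditioning, which is exactly what faithful observability licenses; I would handle the ``for every value of \(\mathbf{Z}\)'' quantifier explicitly so that the two stratum-wise independences genuinely combine into the single conditional independence required before faithfulness is invoked.
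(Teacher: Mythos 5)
First, a point of reference: the paper does not prove this lemma at all---it is imported as a rephrased version of Proposition~1 of Tu et al.\ (2019)---so there is no in-paper proof to compare against. Judged on its own terms, your architecture (contrapositive; faithful observability to lift the event-level independence to a variable-level one; faithfulness to pass to d-separation; a graphical reduction; Markov to return to the distribution) is exactly the standard route, and your stages one and three are sound: the stratum-combination step is handled correctly, and the path-by-path argument that enlarging the conditioning set by a childless node preserves openness is valid.

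The genuine gap is in stage two. Faithfulness is assumed for the distribution of \(\mathbf{V}\cup\mathbf{R}(\mathbf{V})\) with respect to \(\mathcal{D}\), whose node set is \(\mathbf{V}\cup\mathbf{R}(\mathbf{V})\); the variable \(R^{XY\mathbf{Z}}\) is \emph{not} in that set. Whenever two or more variables in \(\{X,Y\}\cup\mathbf{Z}\) are incompletely observed, \(R^{XY\mathbf{Z}}\) is a deterministic conjunction of several indicators in \(\mathbf{R}(\mathbf{V})\), grafted onto the graph as their common child, and d-separation statements involving it only make sense in this extended graph. Your proof needs faithfulness of the extended distribution to the extended graph, and that is not implied by the stated hypotheses; deterministic nodes are the textbook source of unfaithfulness. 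Concretely, conditioning on the stratum \(R^{XY\mathbf{Z}}=1\) is the same as conditioning on all of its parent indicators equalling \(1\), which distributionally neutralises every path through those indicators, whereas in the extended graph the set \(\{R^{XY\mathbf{Z}}\}\) leaves those parents as unconditioned non-colliders and even activates them as colliders: under self-masking missingness with \(X\ind Y\), the path \(X\rightarrow R_X\rightarrow R^{XY\mathbf{Z}}\leftarrow R_Y\leftarrow Y\) is open given \(\{R^{XY\mathbf{Z}}\}\), yet \(X\ind Y\mid(R^{XY\mathbf{Z}}=1)\) holds automatically. So independence-without-separation is exactly what determinism produces, and the implication ``conditional independence given \(R^{XY\mathbf{Z}}\) \(\Rightarrow\) d-separation given \(R^{XY\mathbf{Z}}\)'' that your stage two rests on is the one direction it destroys. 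Establishing that, under faithful observability, the variable-level independence involving the derived node nevertheless forces the extended-graph separation is essentially equivalent to proving the lemma itself, so it cannot be waved through.

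The standard repair, which is how Tu et al.\ argue, is to avoid the derived node altogether: work with the individual indicators \(\{R_A:A\in\{X,Y\}\cup\mathbf{Z}\}\subseteq\mathbf{R}(\mathbf{V})\), to which the assumed faithfulness legitimately applies. The price is that one needs independence of \(X\) and \(Y\) in \emph{every} joint stratum of these indicators, which a stratum-wise reading of faithful observability supplies but the paper's binary formulation (\(R^{XY\mathbf{Z}}=1\) versus \(R^{XY\mathbf{Z}}=0\)) strictly does not: independence given the union event \(\{R^{XY\mathbf{Z}}=0\}\) does not imply independence within each of its sub-strata. The graphical step then needs, and admits, a version that does not use the sink property: since indicators have no substantive descendants, any path between \(X\) and \(Y\) that is open given \(\mathbf{Z}\subseteq\mathbf{V}\) consists of substantive nodes only (an indicator on such a path would force an indicator collider with no descendant in \(\mathbf{Z}\)), hence it stays open when \emph{any} set of indicator nodes is added to the conditioning set. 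Alternatively, you can keep your argument and state the missing hypothesis explicitly as an assumption on the joint law including \(R^{XY\mathbf{Z}}\)---this is in effect what the paper's Lemma~\ref{lemma:ind} does when it directly assumes contraction, weak union and weak transitivity for that law rather than deriving them from faithfulness of \(\mathbf{V}\cup\mathbf{R}(\mathbf{V})\).
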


In other words, under faithfulness and faithful observability, conditional dependencies are always identified under test-wise deletion. \cite{Tuetal2019} also show that conditional independencies are not always identified under test-wise deletion. Our next proposition provides a necessary and sufficient criterion for this type of identification. The proof builds on Theorem 6 of \cite{DidelezKreinerKeiding2010} and is based on the following properties of distributions faithful to DAGs (\citealp{Pearl1988}, Theorem 11):

Let \(\mathbf{A}\), \(\mathbf{B}\), \(\mathbf{C}\) and \(\mathbf{D}\) be disjoint subsets of a set of random variables \(\mathbf{V}\) faithful to a DAG with node set \(\mathbf{V}\).

Contraction: If \(\mathbf{A}\ind \mathbf{B}\mid \mathbf{C}\) and \(\mathbf{A}\ind \mathbf{D}\mid (\mathbf{B},\mathbf{C})\), then \(\mathbf{A}\ind(\mathbf{B},\mathbf{D}) \mid \mathbf{C}\).

Weak union: If \(\mathbf{A}\ind (\mathbf{B},\mathbf{D})\mid \mathbf{C}\), then \(\mathbf{A}\ind\mathbf{B}\mid (\mathbf{C},\mathbf{D})\).

Weak transitivity: If \(\mathbf{A}\ind\mathbf{B}\mid (\mathbf{C},D)\) and \(\mathbf{A}\ind\mathbf{B}\mid \mathbf{C}\), then either \(D\ind\mathbf{A}\mid\mathbf{C}\) or \(D\ind\mathbf{B}\mid\mathbf{C}\). Here \(D\) is required to be a singleton.

\begin{lemma}
	\label{lemma:ind}
	Let \(\mathbf{V}\) be a set of random variables with a joint distribution satisfying the properties of contraction, weak union and weak transitivity, and assume that faithful observability holds. Let \(X,Y\in\mathbf{V}\) with \(X\ne Y\), and let \(\mathbf{Z}\subseteq\mathbf{V}\setminus\{X,Y\}\) such that \(X\ind Y\mid\mathbf{Z}\). Then \(X\ind Y\mid(\mathbf{Z},R^{XY\mathbf{Z}}=1)\) if and only if \(R^{XY\mathbf{Z}}\ind X\mid(Y,\mathbf{Z})\) or \(R^{XY\mathbf{Z}}\ind Y\mid(X,\mathbf{Z})\).
\end{lemma}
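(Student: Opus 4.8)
The plan is to reduce everything to the three graphoid properties listed above, after first using faithful observability to replace conditioning on the \emph{event} $R^{XY\mathbf{Z}}=1$ by conditioning on the \emph{variable} $R^{XY\mathbf{Z}}$. Write $R=R^{XY\mathbf{Z}}$ for brevity. Since $R$ is binary, faithful observability says exactly that $X\ind Y\mid(\mathbf{Z},R=1)$ holds if and only if $X\ind Y\mid(\mathbf{Z},R=0)$ holds, and the two together are equivalent to the variable-wise statement $X\ind Y\mid(\mathbf{Z},R)$. This equivalence is the bridge that lets me apply contraction, weak union and weak transitivity, which are statements about random variables rather than events.

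For the ``if'' direction I would assume one of the two conditions, say $R\ind X\mid(Y,\mathbf{Z})$ (the other is symmetric under swapping $X$ and $Y$). Combining this with the hypothesis $X\ind Y\mid\mathbf{Z}$ by contraction (with $\mathbf{A}=X$, $\mathbf{B}=Y$, $\mathbf{C}=\mathbf{Z}$, $\mathbf{D}=R$) yields $X\ind(Y,R)\mid\mathbf{Z}$, and weak union then gives $X\ind Y\mid(\mathbf{Z},R)$, hence in particular $X\ind Y\mid(\mathbf{Z},R=1)$. Notably this direction does not even require faithful observability.

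For the ``only if'' direction I would start from $X\ind Y\mid(\mathbf{Z},R=1)$, invoke faithful observability to obtain the variable-wise independence $X\ind Y\mid(\mathbf{Z},R)$, and then apply weak transitivity to the pair $X\ind Y\mid(\mathbf{Z},R)$ and $X\ind Y\mid\mathbf{Z}$ (with the singleton $D=R$). This produces the case split $R\ind X\mid\mathbf{Z}$ or $R\ind Y\mid\mathbf{Z}$. These conditioning sets are ``too small,'' so the final step is to enlarge them: in the first case, contraction applied to $X\ind R\mid\mathbf{Z}$ and the already-established $X\ind Y\mid(\mathbf{Z},R)$ gives $X\ind(R,Y)\mid\mathbf{Z}$, and weak union yields $R\ind X\mid(Y,\mathbf{Z})$; the second case is symmetric.

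The step I expect to be the main obstacle, or at least the one most easily overlooked, is the very first reduction: the quantity actually tested under test-wise deletion conditions on the \emph{event} $R=1$, whereas all three graphoid properties are statements about conditional independence between \emph{variables}. Faithful observability is precisely what is needed to pass between the two, and without it the ``only if'' direction fails. A secondary subtlety is that weak transitivity delivers independence conditioning only on $\mathbf{Z}$, not on $(Y,\mathbf{Z})$ or $(X,\mathbf{Z})$; the trick of feeding $X\ind Y\mid(\mathbf{Z},R)$ back into contraction to grow the conditioning set is what closes this gap, and it is what lets me avoid appealing to a composition-type property that is not among the three assumed here.
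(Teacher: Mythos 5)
Your proof is correct and follows essentially the same route as the paper's: faithful observability to pass from conditioning on the event \(R^{XY\mathbf{Z}}=1\) to the variable \(R^{XY\mathbf{Z}}\), then contraction plus weak union for the ``if'' direction, and weak transitivity followed by contraction and weak union to enlarge the conditioning set for the ``only if'' direction. Your added observation that faithful observability is needed only for the ``only if'' direction is accurate and a nice refinement, though the paper does not single it out.
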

\begin{proof}
	By faithful observability, \(X\ind Y\mid(\mathbf{Z},R^{XY\mathbf{Z}}=1)\Leftrightarrow X\ind Y\mid(\mathbf{Z},R^{XY\mathbf{Z}})\). We show that (i) \(X\ind Y\mid(\mathbf{Z},R^{XY\mathbf{Z}})\) if and only if (ii) \(R^{XY\mathbf{Z}}\ind X\mid(Y,\mathbf{Z})\) or \(R^{XY\mathbf{Z}}\ind Y\mid(X,\mathbf{Z})\).
	
	Suppose first that (i) holds. Then by weak transitivity, we have that either \(R^{XY\mathbf{Z}}\ind X\mid\mathbf{Z}\) or \(R^{XY\mathbf{Z}}\ind Y\mid\mathbf{Z}\). If \(R^{XY\mathbf{Z}}\ind X\mid\mathbf{Z}\), then by contraction, \((R^{XY\mathbf{Z}},Y)\ind X\mid\mathbf{Z}\), and by weak union, \(R^{XY\mathbf{Z}}\ind X\mid(Y,\mathbf{Z})\). Analogously, if \(R^{XY\mathbf{Z}}\ind Y\mid\mathbf{Z}\), then \(R^{XY\mathbf{Z}}\ind Y\mid(X,\mathbf{Z})\). Hence, (ii) holds.
	
	Suppose now that \(R^{XY\mathbf{Z}}\ind X\mid(Y,\mathbf{Z})\) holds. Since \(X\ind Y\mid\mathbf{Z}\), by contraction, \(X\ind (Y,R^{XY\mathbf{Z}})\mid \mathbf{Z}\). By weak union, \(X\ind Y\mid(\mathbf{Z},R^{XY\mathbf{Z}})\). By symmetry, if we instead suppose that \(R^{XY\mathbf{Z}}\ind Y\mid(X,\mathbf{Z})\), then \(Y\ind X\mid(\mathbf{Z},R^{XY\mathbf{Z}})\Leftrightarrow X\ind Y\mid(\mathbf{Z},R^{XY\mathbf{Z}})\), which completes the proof.
\end{proof}

We are now ready to prove Proposition~\ref{prop:twd} from Section~\ref{sec:twd}. For simplicity, it is assumed in the proof that oracle test-wise-deletion PC is based on the original version of oracle PC. However, the proposition also holds if the stable version proposed by \cite{ColomboMaathuis2014}, in which additional conditional independencies are considered, is used. See \cite{ColomboMaathuis2014} for pseudo-code for both variants.

\setcounter{theorem}{1}
\begin{proposition}
		Let \(\mathcal{D}\) be a missingness DAG with node set \(\mathbf{V}\cup\mathbf{R}(\mathbf{V})\), such that the distribution of \(\mathbf{V}\cup\mathbf{R}(\mathbf{V})\) is faithful to \(\mathcal{D}\), and assume that faithful observability holds. Then oracle test-wise-deletion PC recovers the true CPDAG over \(\mathbf{V}\) if and only if the admissible separator condition holds. 
\end{proposition}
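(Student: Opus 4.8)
The plan is to treat oracle test-wise-deletion PC as ordinary oracle PC driven by a \emph{restricted} independence oracle, and to measure that restriction against the full-data oracle. First I would observe that the induced subgraph \(\mathcal{D}_{\mathbf{V}}\) is itself a DAG to which the full-data distribution of \(\mathbf{V}\) is faithful: the response indicators are non-ancestors of \(\mathbf{V}\), so any \(\mathbf{R}\)-node on a path between substantive variables is a collider and, being unconditioned, blocks that path; hence the d-separations of \(\mathcal{D}\) restricted to \(\mathbf{V}\) coincide with those of \(\mathcal{D}_{\mathbf{V}}\), and ordinary oracle PC on \(\mathbf{V}\) recovers the true CPDAG \citep{SpirtesGlymourScheines2000}. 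The oracle used by test-wise-deletion PC reports \(X\ind Y\mid\mathbf{Z}\) exactly when \(X\ind Y\mid(\mathbf{Z},R^{XY\mathbf{Z}}=1)\). By Lemma~\ref{lemma:Tu} every full-data dependence survives, and by Lemma~\ref{lemma:ind} a full-data independence survives if and only if condition (iii) holds for \((X,Y,\mathbf{Z})\). Thus the test-wise-deletion independence model is a subset of the full-data one, losing only independencies that violate (iii); in particular adjacent nodes are never separated, so the estimated skeleton is always a supergraph of the true skeleton (the content of Lemma~\ref{lemma:Tu}).

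For sufficiency I would assume the admissible separator condition and show the run reproduces the reference CPDAG. Fix a non-adjacent pair \((X,Y)\) with a witnessing set \(\mathbf{Z}\). By (i) and (iii) with Lemma~\ref{lemma:ind}, the oracle reports \(X\ind Y\mid\mathbf{Z}\); by (ii), \(\mathbf{Z}\subseteq\mathrm{adj}(X,\mathcal{D})\cap\mathbf{V}=\mathrm{adj}(X,\mathcal{D}_{\mathbf{V}})\) (or the analogue for \(Y\)), so \(\mathbf{Z}\) consists of true neighbours. Since true edges are never deleted, these stay adjacent to \(X\) throughout, so PC does examine this conditioning set at level \(|\mathbf{Z}|\) and removes the edge (if it has not already done so). Every non-edge is therefore deleted and the skeleton is exactly correct. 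Each separating set actually used is a genuine full-data separating set, so the collider characterisation under faithfulness — for an unshielded triple the middle node lies in all separating sets iff it is a non-collider — makes the \(v\)-structure step agree with the reference run, and Meek's rules are a deterministic function of the correct skeleton and \(v\)-structures. Hence the true CPDAG is recovered.

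For necessity I would argue the contrapositive: if the condition fails at some non-adjacent \((X,Y)\), then no set confined to \(\mathrm{adj}(X,\mathcal{D}_{\mathbf{V}})\) or \(\mathrm{adj}(Y,\mathcal{D}_{\mathbf{V}})\) that separates \(X\) and \(Y\) can satisfy (iii). The standard graphical fact guarantees a separating set inside each endpoint's true neighbourhood, so such sets are precisely the ones PC is always able to examine (their members persist), and by Lemma~\ref{lemma:ind} the oracle reports dependence for all of them. The intended conclusion is that the edge \(X-Y\) survives, leaving a spurious edge and a strictly larger CPDAG, exactly as in Figures~\ref{fig:identification} and~\ref{fig:MAR*}, where the separating set is unique.

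The hard part, and the heart of the argument, will be to exclude the possibility that PC still deletes \(X-Y\) through a set that violates (ii) — one containing a ``false neighbour'' whose own edge has not yet been removed. Such a set is a genuine separating set, so it would not corrupt the \(v\)-structure orientations, and it could in principle repair the skeleton along some processing order. The resolution I would pursue leans on the supergraph invariant: only adjacency-confined separating sets are \emph{guaranteed} to be available, since the availability of any forbidden set is contingent on a false edge that PC is itself trying to delete, so the correctness \emph{guarantee} holds precisely when an admissible, (iii)-satisfying separator exists for every pair. Reconciling this with condition (iii), whose indicator \(R^{XY\mathbf{Z}}\) changes with \(\mathbf{Z}\) — that is, showing one cannot trade a forbidden conditioning set for an admissible one while preserving (iii) — is the delicate step, and is exactly where the restriction (ii) in the definition does its work.
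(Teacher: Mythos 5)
Your sufficiency half is essentially the paper's argument: under the admissible separator condition, Lemma~\ref{lemma:ind} makes the test-wise-deletion oracle report each witnessing independence, Lemma~\ref{lemma:Tu} gives the supergraph invariant so that the adjacency-confined witness \(\mathbf{Z}\) remains available and is actually examined at level \(|\mathbf{Z}|\), every oracle-reported independence is a genuine full-data independence so the collider step is sound, and the final orientation rules are deterministic given a correct skeleton and correct v-structures. No issues there.

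The genuine gap is in the necessity half, and you flagged it yourself. Because you argue by contraposition, you must show that when the admissible separator condition fails, the edge \(X-Y\) cannot be deleted by \emph{any} conditioning set PC might examine; this forces you to exclude deletion via a set containing a false neighbour, and your closing paragraph concedes that you cannot do this, substituting a reinterpretation of the proposition (``only adjacency-confined separators are \emph{guaranteed} to be available, so the correctness \emph{guarantee} holds precisely when\dots''), which changes the statement rather than proving it. The paper never enters this contrapositive at all: it proves the direct implication. Assuming the skeleton is correctly recovered, it takes, for each non-adjacent pair \((X,Y)\), the conditioning set \(\mathbf{Z}\) that the run actually used to delete the edge; it asserts \(\mathbf{Z}\subseteq\mathrm{adj}(X,\mathcal{D})\) or \(\mathbf{Z}\subseteq\mathrm{adj}(Y,\mathcal{D})\) as a consequence of how the algorithm chooses conditioning sets, then obtains (i) from Lemma~\ref{lemma:Tu} (an oracle-reported independence must be a genuine one) and (iii) from Lemma~\ref{lemma:ind}. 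The run itself thus hands over the witness for the admissible separator condition, and no ``trading'' of a forbidden separator for an admissible one is ever needed. To be fair, your worry pinpoints the one brisk step in the paper: PC draws conditioning sets from \emph{current} adjacencies, which by the supergraph invariant are only supersets of the true ones, so the paper's one-sentence justification of (ii) quietly passes over exactly the false-neighbour scenario you describe. But the paper commits to that step and completes the argument, whereas your proposal explicitly leaves the decisive step open; as it stands, it does not establish the ``only if'' direction.
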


\begin{proof}
	We first show that the skeleton part of oracle test-wise-deletion PC recovers the true skeleton if and only if the admissible separator condition holds.
	
	Suppose first that the skeleton is correctly recovered by oracle test-wise-deletion PC. This implies that for all pairs \((X,Y)\) of non-adjacent nodes in \(\mathbf{V}\), there exists a (possibly empty) set \(\mathbf{Z}\subseteq\mathrm{adj}(X,\mathcal{D})\) or  \(\mathbf{Z}\subseteq\mathrm{adj}(Y,\mathcal{D})\) such that \(X\ind Y\mid(\mathbf{Z},R^{XY\mathbf{Z}}=1)\), as otherwise the edge between \(X\) and \(Y\) would not have been removed during the algorithm. By Lemma~\ref{lemma:Tu}, \(X\ind Y\mid\mathbf{Z}\). By Lemma~\ref{lemma:ind}, \(X\ind Y\mid(\mathbf{Z},R^{XY\mathbf{Z}}=1)\) and \(X\ind Y\mid\mathbf{Z}\) together imply \(R^{XY\mathbf{Z}}\ind X\mid(Y,\mathbf{Z})\) or \(R^{XY\mathbf{Z}}\ind Y\mid(X,\mathbf{Z})\), hence the admissible separator condition is satisfied.
	
	Suppose now that the admissible separator condition is satisfied. Pick a pair \((X,Y)\) of non-adjacent nodes in \(\mathbf{V}\) and a set \(\mathbf{Z}\) satisfying the admissible separator condition with respect to \((X,Y)\), implying \(\mathbf{Z}\subseteq\mathrm{adj}(X,\mathcal{D})\) or \(\mathbf{Z}\subseteq\mathrm{adj}(Y,\mathcal{D})\) and \(R^{XY\mathbf{Z}}\ind X\mid(Y,\mathbf{Z})\) or \(R^{XY\mathbf{Z}}\ind Y\mid(X,\mathbf{Z})\). Then by Lemma~\ref{lemma:ind}, \(X\ind Y\mid(\mathbf{Z},R^{XY\mathbf{Z}}=1)\). Lemma~\ref{lemma:Tu} implies that no edges are erroneously deleted by oracle test-wise-deletion PC, i.e.\ the nodes adjacent to \(X\) in any intermediate graph obtained while the algorithm runs is a superset of \(\mathrm{adj}(X,\mathcal{D})\), and analogous for \(Y\). Hence, \(X\ind Y\mid(\mathbf{Z},R^{XY\mathbf{Z}}=1)\) is among the conditional independencies tested during the course of the algorithm. It follows that the skeleton is correctly recovered.
	
	The adjacencies are not further modified after the skeleton phase is completed. Hence, the necessary condition for the recovery of the true CPDAG is that the admissible separator condition holds. This proofs the `only if' direction of the statement in the proposition. For the other direction, suppose that the admissible separator condition holds for the remainder of the proof.
	
	Consider the v-structure phase of oracle test-wise-deletion PC. This phase is based on checking, for triples \((X,Y,Z)\) such that \(X-Y-Z\) is in the estimated skeleton and \(X-Z\) is not, whether \(Y\) is in the separating set \(\mathbf{W}\) conditionally on which \(X\) and \(Z\) were found to be independent in the skeleton phase. If \(Y\not\in\mathbf{W}\), then \(X-Y-Z\) is oriented as \(X\rightarrow Y\leftarrow Z\). We have already established above that under the admissible separator condition, \(X\ind Z\mid(\mathbf{W},R^{XZ\mathbf{W}})\Leftrightarrow X\ind Z\mid\mathbf{W}\). Hence, as we assume faithfulness, \(Y\not\in\mathbf{W}\) if and only if the true structure is \(X\rightarrow Y\leftarrow Z\). It follows that the v-structures are correctly recovered by oracle test-wise-deletion PC under the admissible separator condition.
	
	Finally, the orientation of additional edges is based on logical rules and returns the correct CPDAG as long as the skeleton and the v-structures have correctly been recovered.
\end{proof}

\FloatBarrier

\bibliography{MissBib2}

\begin{thebibliography}{}

\bibitem[Ahrens et~al., 2017]{Ahrensetal2017}
Ahrens, W., Siani, A., Adan, R., De~Henauw, S., Eiben, G., Gwozdz, W.,
  Hebestreit, A., Hunsberger, M., Kaprio, J., Krogh, V., Lissner, L.,
  M{\'o}lnar, D., Moreno, L.~A., Page, A., Pico, C., Reisch, L., Smith, R.~M.,
  Tornaritis, M., Veidebaum, T., Williams, G., Pohlabeln, H., and {Pigeot, I.
  on behalf of the I.Family consortium} (2017).
\newblock Cohort profile: The transition from childhood to adolescence in
  {E}uropean children -- how {I.Family} extends the {IDEFICS} cohort.
\newblock {\em International Journal of Epidemiology}, 46(5):1394--1395j.

\bibitem[Alekseyenko et~al., 2011]{Alekseyenkoetal2011}
Alekseyenko, A.~V., Lytkin, N.~I., Ai, J., Ding, B., Padyukov, L., Aliferis,
  C.~F., and Statnikov, A. (2011).
\newblock Causal graph-based analysis of genome-wide association data in
  rheumatoid arthritis.
\newblock {\em Biology Direct}, 6(1):25.

\bibitem[Andrews et~al., 2018]{AndrewsRamseyCooper2018}
Andrews, B., Ramsey, J., and Cooper, G.~F. (2018).
\newblock Scoring {B}ayesian networks of mixed variables.
\newblock {\em International Journal of Data Science and Analytics},
  6(1):3--18.

\bibitem[Andrews et~al., 2021]{Andrewsetal2021}
Andrews, R., Foraita, R., Didelez, V., and Witte, J. (2021).
\newblock A practical guide to causal discovery with cohort data.
\newblock {\em Working paper}.

\bibitem[Bartlett et~al., 2015]{Bartlettetal2015}
Bartlett, J.~W., Seaman, S.~R., White, I.~R., Carpenter, J.~R., and {the
  Alzheimer's Disease Neuroimaging Initiative} (2015).
\newblock Multiple imputation of covariates by fully conditional specification:
  Accommodating the substantive model.
\newblock {\em Statistical Methods in Medical Research}, 24(4):462--487.

\bibitem[Bessler and Yang, 2003]{BesslerYang2003}
Bessler, D.~A. and Yang, J. (2003).
\newblock The structure of interdependence in international stock markets.
\newblock {\em Journal of International Money and Finance}, 22(2):261--287.

\bibitem[B{\"o}rnhorst et~al., 2016]{Bornhorstetal2016}
B{\"o}rnhorst, C., Siani, A., Russo, P., Kourides, Y., Sion, I., Moln{\'a}r,
  D., Moreno, L.~A., Rodr{\'i}guez, G., Ben-Shlomo, Y., Howe, L., Lissner, L.,
  Mehlig, K., Regber, S., Bammann, K., Foraita, R., Ahrens, W., and Tilling, K.
  (2016).
\newblock Early life factors and inter-country heterogeneity in {BMI} growth
  trajectories of {E}uropean children: The {IDEFICS} study.
\newblock {\em PLOS ONE}, 11(2):1--20.

\bibitem[Carpenter and Kenward, 2013]{CarpenterKenward2013}
Carpenter, J.~R. and Kenward, M.~G. (2013).
\newblock {\em Multiple Imputation and Its Application}.
\newblock John Wiley \& Sons, Chichester, UK.

\bibitem[Colombo and Maathuis, 2014]{ColomboMaathuis2014}
Colombo, D. and Maathuis, M.~H. (2014).
\newblock Order-independent constraint-based causal structure learning.
\newblock {\em The Journal of Machine Learning Research}, 15(1):3741--3782.

\bibitem[Cunningham, 2021]{Cunningham2021}
Cunningham, S. (2021).
\newblock {\em Causal Inference: The Mixtape}.
\newblock Yale University Press, New Haven, Connecticut, USA.

\bibitem[Daniel et~al., 2012]{Danieletal2012}
Daniel, R.~M., Kenward, M.~G., Cousens, S.~N., and De~Stavola, B.~L. (2012).
\newblock Using causal diagrams to guide analysis in missing data problems.
\newblock {\em Statistical Methods in Medical Research}, 21(3):243--256.

\bibitem[Didelez, 2018]{Didelez2018}
Didelez, V. (2018).
\newblock Causal concepts and graphical models.
\newblock In Maathuis, M., Drton, M., Lauritzen, S., and Wainwright, M.,
  editors, {\em Handbook of Graphical Models}, chapter~15, pages 355--382. CRC
  Press.

\bibitem[Didelez et~al., 2010]{DidelezKreinerKeiding2010}
Didelez, V., Kreiner, S., and Keiding, N. (2010).
\newblock Graphical models for inference under outcome-dependent sampling.
\newblock {\em Statistical Science}, 25(3):368--387.

\bibitem[Doove et~al., 2014]{DoovevanBuurenDusseldorp2014}
Doove, L.~L., Van~Buuren, S., and Dusseldorp, E. (2014).
\newblock Recursive partitioning for missing data imputation in the presence of
  interaction effects.
\newblock {\em Computational Statistics \& Data Analysis}, 72:92--104.

\bibitem[Doretti et~al., 2018]{DorettiGenelettiStanghellini2018}
Doretti, M., Geneletti, S., and Stanghellini, E. (2018).
\newblock Missing data: a unified taxonomy guided by conditional independence.
\newblock {\em International Statistical Review}, 86(2):189--204.

\bibitem[Ebert-Uphoff and Deng, 2012]{EbertUphoffDeng2012}
Ebert-Uphoff, I. and Deng, Y. (2012).
\newblock Causal discovery for climate research using graphical models.
\newblock {\em Journal of Climate}, 25(17):5648--5665.

\bibitem[Enders, 2010]{Enders2010}
Enders, C.~K. (2010).
\newblock {\em Applied Missing Data Analysis}.
\newblock The Guilford Press, New York, USA.

\bibitem[Fisher, 1924]{Fisher1924}
Fisher, R.~A. (1924).
\newblock The distribution of the partial correlation coefficient.
\newblock {\em Metron}, 3:329--332.

\bibitem[Foraita et~al., 2020]{Foraitaetal2020}
Foraita, R., Friemel, J., Günther, K., Behrens, T., Bullerdiek, J., Nimzyk,
  R., Ahrens, W., and Didelez, V. (2020).
\newblock Causal discovery of gene regulation with incomplete data.
\newblock {\em Journal of the Royal Statistical Society: Series A (Statistics
  in Society)}, 183(4):1747--1775.

\bibitem[Foraita et~al., 2021]{Foraitaetal2021}
Foraita, R., Witte, J., Börnhorst, C., De~Henauw, S., Gwozdz, W., Krogh, V.,
  Lissner, L., Lauria, F., Moln{\'a}r, D., Moreno, L., Page, A., Reisch, L.,
  Veidebaum, T., Tornaritis, M., Pigeot, I., and Didelez, V. (2021).
\newblock A longitudinal causal graph analysis investigating modifiable risk
  factors and obesity in a {E}uropean cohort of children and adolescents.
\newblock {\em Working paper}.

\bibitem[Friedman, 1997]{Friedman1997}
Friedman, N. (1997).
\newblock Learning belief networks in the presence of missing values and hidden
  variables.
\newblock In Fisher, D.~H., editor, {\em Proceedings of the Fourteenth
  International Conference on Machine Learning (ICML '97)}, pages 125--133.
  Morgan Kaufmann Publishers.

\bibitem[Gain and Shpitser, 2018]{GainShpitser2018}
Gain, A. and Shpitser, I. (2018).
\newblock Structure learning under missing data.
\newblock {\em Proceedings of Machine Learning Research}, 72:121--132.

\bibitem[Greenland et~al., 1999]{GreenlandPearlRobins1999}
Greenland, S., Pearl, J., and Robins, J.~M. (1999).
\newblock Causal diagrams for epidemiologic research.
\newblock {\em Epidemiology}, 10(1):37--48.

\bibitem[Hardt et~al., 2012]{HardtHerkeLeonhart2012}
Hardt, J., Herke, M., and Leonhart, R. (2012).
\newblock Auxiliary variables in multiple imputation in regression with missing
  {X}: a warning against including too many in small sample research.
\newblock {\em BMC Medical Research Methodology}, 12(1):184.

\bibitem[Hebestreit et~al., 2016]{Hebestreitetal2016}
Hebestreit, A., Barba, G., De~Henauw, S., Eiben, G., Hadjigeorgiou, C.,
  Kov{\'a}cs, {\'E}., Krogh, V., Moreno, L.~A., Pala, V., Veidebaum, T.,
  Wolters, M., and {Börnhorst, C. on behalf of the IDEFICS Consortium} (2016).
\newblock Cross-sectional and longitudinal associations between energy intake
  and {BMI} z-score in {E}uropean children.
\newblock {\em International Journal of Behavioral Nutrition and Physical
  Activity}, 13(1):1--11.

\bibitem[Hughes et~al., 2019]{Hughesetal2019}
Hughes, R.~A., Heron, J., Sterne, J.~A., and Tilling, K. (2019).
\newblock Accounting for missing data in statistical analyses: multiple
  imputation is not always the answer.
\newblock {\em International Journal of Epidemiology}, 48(4):1294--1304.

\bibitem[Hughes et~al., 2014]{Hughesetal2014}
Hughes, R.~A., White, I.~R., Seaman, S.~R., Carpenter, J.~R., Tilling, K., and
  Sterne, J.~A. (2014).
\newblock Joint modelling rationale for chained equations.
\newblock {\em BMC Medical Research Methodology}, 14(1):1--10.

\bibitem[Kalisch et~al., 2012]{R-pcalg}
Kalisch, M., M\"achler, M., Colombo, D., Maathuis, M.~H., and B\"uhlmann, P.
  (2012).
\newblock Causal inference using graphical models with the {R} package {pcalg}.
\newblock {\em Journal of Statistical Software}, 47(11):1--26.

\bibitem[Lauritzen, 1990]{Lauritzen1990}
Lauritzen, S.~L. (1990).
\newblock {\em Graphical Models}.
\newblock Oxford University Press, Oxford, UK.

\bibitem[Lauritzen and Wermuth, 1989]{LauritzenWermuth1989}
Lauritzen, S.~L. and Wermuth, N. (1989).
\newblock Graphical models for associations between variables, some of which
  are qualitative and some quantitative.
\newblock {\em The Annals of Statistics}, 17(1):31--57.

\bibitem[Lee et~al., 2017]{Leeetal2017}
Lee, B.~Y., Bartsch, S.~M., Mui, Y., Haidari, L.~A., Spiker, M.~L., and
  Gittelsohn, J. (2017).
\newblock A systems approach to obesity.
\newblock {\em Nutrition Reviews}, 75(suppl\_1):94--106.

\bibitem[Little and Rubin, 2002]{LittleRubin2002}
Little, R. J.~A. and Rubin, D.~B. (2002).
\newblock {\em Statistical Analysis with Missing Data}.
\newblock Wiley \& Sons, Hoboken, New Jersey, USA, 2nd edition.

\bibitem[Maathuis et~al., 2009]{MaathuisKalischBuhlmann2009}
Maathuis, M.~H., Kalisch, M., and B{\"u}hlmann, P. (2009).
\newblock Estimating high-dimensional intervention effects from observational
  data.
\newblock {\em The Annals of Statistics}, 37(6A):3133--3164.

\bibitem[Mealli and Rubin, 2015]{MealliRubin2015}
Mealli, F. and Rubin, D.~B. (2015).
\newblock Clarifying missing at random and related definitions, and
  implications when coupled with exchangeability.
\newblock {\em Biometrika}, 102(4):995--1000.

\bibitem[Meek, 1995]{Meek1995}
Meek, C. (1995).
\newblock Causal inference and causal explanation with background knowledge.
\newblock In Besnard, P. and Hanks, S., editors, {\em Proceedings of the
  Eleventh Conference on Uncertainty in Artificial Intelligence (UAI-95)},
  pages 403--410. Morgan Kaufmann Publishers.

\bibitem[Meng, 1994]{Meng1994}
Meng, X.-L. (1994).
\newblock Multiple-imputation inferences with uncongenial sources of input.
\newblock {\em Statistical Science}, 9(4):538--558.

\bibitem[Meng and Rubin, 1992]{MengRubin1992}
Meng, X.-L. and Rubin, D.~B. (1992).
\newblock Performing likelihood ratio tests with multiply-imputed data sets.
\newblock {\em Biometrika}, 79(1):103--111.

\bibitem[Moffa et~al., 2017]{Moffaetal2017}
Moffa, G., Catone, G., Kuipers, J., Kuipers, E., Freeman, D., Marwaha, S.,
  Lennox, B.~R., Broome, M.~R., and Bebbington, P. (2017).
\newblock Using directed acyclic graphs in epidemiological research in
  psychosis: an analysis of the role of bullying in psychosis.
\newblock {\em Schizophrenia Bulletin}, 43(6):1273--1279.

\bibitem[Mohan and Pearl, 2021]{MohanPearl2021}
Mohan, K. and Pearl, J. (2021).
\newblock Graphical models for processing missing data.
\newblock {\em Journal of the American Statistical Association},
  116(534):1023--1037.

\bibitem[Mohan et~al., 2013]{MohanPearlTian2013}
Mohan, K., Pearl, J., and Tian, J. (2013).
\newblock Graphical models for inference with missing data.
\newblock In Burges, C. J.~C., Bottou, M.~W., Ghahramani, Z., and Weinberger,
  K.~Q., editors, {\em Advances in Neural Information Processing Systems 26
  (NIPS-2013)}, pages 1277--1285.

\bibitem[Moreno-Betancur et~al., 2018]{MorenoBetancuretal2018}
Moreno-Betancur, M., Lee, K.~J., Leacy, F.~P., White, I.~R., Simpson, J.~A.,
  and Carlin, J.~B. (2018).
\newblock Canonical causal diagrams to guide the treatment of missing data in
  epidemiologic studies.
\newblock {\em American Journal of Epidemiology}, 187(12):2705--2715.

\bibitem[Morgan and Winship, 2014]{MorganWinship2014}
Morgan, S.~L. and Winship, C. (2014).
\newblock {\em Counterfactuals and Causal Inference}.
\newblock Cambridge University Press, Cambridge, UK, 2nd edition.

\bibitem[Morris et~al., 2014]{MorrisWhiteRoyston2014}
Morris, T.~P., White, I.~R., and Royston, P. (2014).
\newblock Tuning multiple imputation by predictive mean matching and local
  residual draws.
\newblock {\em BMC Medical Research Methodology}, 14(1):1--13.

\bibitem[Noghrehchi et~al., 2021]{Noghrehchietal2021}
Noghrehchi, F., Stoklosa, J., Penev, S., and Warton, D.~I. (2021).
\newblock Selecting the model for multiple imputation of missing data: Just use
  an {IC}!
\newblock {\em Statistics in Medicine}, 40(10):2467--2497.

\bibitem[Pearl, 1988]{Pearl1988}
Pearl, J. (1988).
\newblock {\em Probabilistic Reasoning in Intelligent Systems: Networks of
  Plausible Inference}.
\newblock Morgan Kaufmann Publishers, San Francisco, California.

\bibitem[Pigeot et~al., 2015]{Pigeotetal2015}
Pigeot, I., Sobotka, F., Kreiner, S., and Foraita, R. (2015).
\newblock The uncertainty of a selected graphical model.
\newblock {\em Journal of Applied Statistics}, 42(11):2335--2352.

\bibitem[Pohlabeln et~al., 2017]{Pohlabelnetal2017}
Pohlabeln, H., Rach, S., De~Henauw, S., Eiben, G., Gwozdz, W., Hadjigeorgiou,
  C., Moln{\'a}r, D., Moreno, L.~A., Russo, P., Veidebaum, T., and {Iris Pigeot
  on behalf of the IDEFICS consortium} (2017).
\newblock Further evidence for the role of pregnancy-induced hypertension and
  other early life influences in the development of {ADHD}: results from the
  {IDEFICS} study.
\newblock {\em European Child \& Adolescent Psychiatry}, 26(8):957--967.

\bibitem[Rau and Scheines, 2012]{RauScheines2012}
Rau, M.~A. and Scheines, R. (2012).
\newblock Searching for variables and models to investigate mediators of
  learning from multiple representations.
\newblock In Yacef, K., Zaïane, O., Hershkovitz, A., Yudelson, M., and
  Stamper, J., editors, {\em Proceedings of the 5th International Conference on
  Educational Data Mining}, pages 110--117.

\bibitem[Ray et~al., 2015]{Rayetal2015}
Ray, S., Haney, M., Hanson, C., Biswal, B., and Hanson, S.~J. (2015).
\newblock Modeling causal relationship between brain regions within the
  drug-cue processing network in chronic cocaine smokers.
\newblock {\em Neuropsychopharmacology}, 40(13):2960--2968.

\bibitem[Roberts and Winters, 2013]{RobertsWinters2013}
Roberts, S. and Winters, J. (2013).
\newblock Linguistic diversity and traffic accidents: Lessons from statistical
  studies of cultural traits.
\newblock {\em PLOS ONE}, 8(8):e70902.

\bibitem[Rubin, 1976]{Rubin1976}
Rubin, D.~B. (1976).
\newblock Inference and missing data.
\newblock {\em Biometrika}, 63(3):581--592.

\bibitem[Rubin, 1987]{Rubin1987}
Rubin, D.~B. (1987).
\newblock {\em Multiple Imputation for Nonresponse in Surveys}.
\newblock John Wiley \& Sons, New York, USA.

\bibitem[Schafer, 1997]{Schafer1997}
Schafer, J.~L. (1997).
\newblock {\em Analysis of Incomplete Multivariate Data}.
\newblock CRC, Boca Raton, Florida, USA.

\bibitem[Scutari, 2020]{Scutari2020}
Scutari, M. (2020).
\newblock Bayesian network models for incomplete and dynamic data.
\newblock {\em Statistica Neerlandica}, 74(3):397--419.

\bibitem[Seaman et~al., 2013]{Seamanetal2013}
Seaman, S., Galati, J., Jackson, D., and Carlin, J. (2013).
\newblock What is meant by `missing at random'?
\newblock {\em Statistical Science}, 28(2):257--268.

\bibitem[Shah et~al., 2014]{Shahetal2014}
Shah, A.~D., Bartlett, J.~W., Carpenter, J., Nicholas, O., and Hemingway, H.
  (2014).
\newblock Comparison of random forest and parametric imputation models for
  imputing missing data using {MICE}: a {CALIBER} study.
\newblock {\em American Journal of Epidemiology}, 179(6):764--774.

\bibitem[Sokolova et~al., 2017]{Sokolovaetal2017}
Sokolova, E., von Rhein, D., Naaijen, J., Groot, P., Claassen, T., Buitelaar,
  J., and Heskes, T. (2017).
\newblock Handling hybrid and missing data in constraint-based causal discovery
  to study the etiology of {ADHD}.
\newblock {\em International Journal of Data Science and Analytics},
  3(2):105--119.

\bibitem[Spirtes et~al., 2000]{SpirtesGlymourScheines2000}
Spirtes, P., Glymour, C.~N., and Scheines, R. (2000).
\newblock {\em Causation, Prediction, and Search}.
\newblock MIT press, Cambridge, Massachusetts, 2nd edition.

\bibitem[Steck and Jaakkola, 2003]{SteckJaakkola2003}
Steck, H. and Jaakkola, T. (2003).
\newblock Bias-corrected bootstrap and model uncertainty.
\newblock In Thrun, S., Saul, L., and Schölkopf, B., editors, {\em Advances in
  Neural Information Processing Systems 16 (NIPS 2003)}, pages 521--528.

\bibitem[Strobl et~al., 2018]{StroblVisweswaranSpirtes2018}
Strobl, E.~V., Visweswaran, S., and Spirtes, P.~L. (2018).
\newblock Fast causal inference with non-random missingness by test-wise
  deletion.
\newblock {\em International Journal of Data Science and Analytics},
  6(1):47--62.

\bibitem[Tennant et~al., 2021]{Tennantetal2021}
Tennant, P. W.~G., Murray, E.~J., Arnold, K.~F., Berrie, L., Fox, M.~P., Gadd,
  S.~C., Harrison, W.~J., Keeble, C., Ranker, L.~R., Textor, J., Tomova, G.~D.,
  Gilthorpe, M.~S., and Ellison, G. T.~H. (2021).
\newblock {Use of directed acyclic graphs (DAGs) to identify confounders in
  applied health research: review and recommendations}.
\newblock {\em International Journal of Epidemiology}, 50(2):620--632.

\bibitem[Tsamardinos et~al., 2006]{TsamardinosBrownAliferis2006}
Tsamardinos, I., Brown, L.~E., and Aliferis, C.~F. (2006).
\newblock The max-min hill-climbing {B}ayesian network structure learning
  algorithm.
\newblock {\em Machine Learning}, 65(1):31--78.

\bibitem[Tu et~al., 2019]{Tuetal2019}
Tu, R., Zhang, C., Ackermann, P., Mohan, K., Kjellstr{\"o}m, H., and Zhang, K.
  (2019).
\newblock Causal discovery in the presence of missing data.
\newblock In Chaudhuri, K. and Sugiyama, M., editors, {\em Proceedings of the
  22nd International Conference on Artificial Intelligence and Statistics
  (AISTATS 2019)}, pages 1762--1770. PMLR.

\bibitem[Tu et~al., 2020]{Tuetal2020}
Tu, R., Zhang, K., Ackermann, P., Bertilson, B.~C., Glymour, C., Kjellström,
  H., and Zhang, C. (2020).
\newblock Causal discovery in the presence of missing data.
\newblock {\em arXiv preprint arXiv:1807.04010}.

\bibitem[van Buuren, 2018]{vanBuuren2018}
van Buuren, S. (2018).
\newblock {\em Flexible Imputation of Missing Data}.
\newblock CRC, Boca Raton, Florida, USA.

\bibitem[{van Buuren} and Groothuis-Oudshoorn, 2011]{R-mice}
{van Buuren}, S. and Groothuis-Oudshoorn, K. (2011).
\newblock {mice}: Multivariate imputation by chained equations in {R}.
\newblock {\em Journal of Statistical Software}, 45(3):1--67.

\bibitem[Vandenbroeck et~al., 2017]{VandenbroeckGoossensClemens2017}
Vandenbroeck, P., Goossens, J., and Clemens, M. (2017).
\newblock Foresight, tackling obesities: future choices building the obesity
  system map. {Report by the UK Government Office for Science}.
\newblock
  \url{www.gov.uk/government/publications/reducing-obesity-obesity-system-map}.

\bibitem[Westreich, 2012]{Westreich2012}
Westreich, D. (2012).
\newblock Berkson’s bias, selection bias, and missing data.
\newblock {\em Epidemiology}, 23(1):159--164.

\bibitem[Zhang, 2008]{Zhang2008}
Zhang, J. (2008).
\newblock On the completeness of orientation rules for causal discovery in the
  presence of latent confounders and selection bias.
\newblock {\em Artificial Intelligence}, 172(16--17):1873--1896.

\end{thebibliography}

\newpage

\end{document}